\let\MYorigsubfloat\subfloat
\renewcommand{\subfloat}[2][\relax]{\MYorigsubfloat[]{#2}}
\newtheorem{theorem}{Theorem}
\newtheorem{lemma}{Lemma}
\newtheorem{proposition}{Proposition}
\newenvironment{proof}{\quad \emph{Proof:}}{\hfill{$\square$}}
\def\d{\ldots}
\def\cK{\mathcal{K}}
\def\cM{\mathcal{M}}
\def\cCN{\mathcal{CN}}
\def\st{\mathrm{s.t.}}
\def\tr{\mathrm{tr}}
\def\rk{\mathrm{rank}}
\def\diag{\mathrm{diag}}
\renewcommand{\v}[1]{\boldsymbol{\mathbf{#1}}}
\newcommand{\m}[1]{\boldsymbol{\mathbf{#1}}}
\def\hermitian{\dagger}
\def\transpose{\mathsf{T}}
\newcommand{\algref}[1]{Algorithm~\ref{#1}}
\newcommand{\pushright}[1]{\ifmeasuring@#1\else\omit\hfill$\displaystyle#1$\fi\ignorespaces}
\newcommand{\pushleft}[1]{\ifmeasuring@#1\else\omit$\displaystyle#1$\hfill\fi\ignorespaces}
\newcommand{\Lm}[1][ ]{\left\{\m{\Lambda}_m^{#1}\right\}}
\newcommand{\Vk}[1][ ]{\{\m V_k^{#1}\}}
\newcommand\eqstd{\eqref{equ:dual_const2}--\eqref{equ:dual_const1}}
\newcommand\eqsta{\eqref{equ:dual_const2}--\eqref{equ:pri_var2}}
\def\fk{~\forall ~k \in \cK}
\def\fm{~\forall ~m \in \cM}
\def\fmm{~\forall ~m \in \tilde{\cM}}
\title{QoS-Based Beamforming and Compression Design for Cooperative Cellular Networks via Lagrangian Duality}
\author{\IEEEauthorblockN{Xilai~Fan, Ya-Feng~Liu, Liang~Liu, and Tsung-Hui Chang}
	\thanks{Part of this work has been presented at the IEEE International Conference on Acoustics, Speech, and Signal Processing (ICASSP) 2022 \cite{fan2022EfficientlyGloballySolving}. 
	The work of X. Fan and Y.-F. Liu was supported in part by the National Natural Science Foundation of China (NSFC) under Grant 12371314 and Grant 12021001. 
	The work of L. Liu was supported in part by CRF Young Collaborative Research Grant from the Hong Kong Research Grants Council under Grant PolyU C5002-23Y; in part by the General Research Fund from the Hong Kong Research Grants Council under Grant 15213322. 
	The work of Tsung-Hui Chang was supported in part by Shenzhen Science and Technology Program under Grant RCJC20210609104448114 and Grant ZDSYS20230626091302006 and in part by Guangdong Provincial Key Laboratory of Big Data Computing. \textit{(Corresponding author: Ya-Feng Liu)}
	
	X. Fan is with the State Key Laboratory of Scientific and Engineering Computing, Institute of Computational Mathematics and Scientific/Engineering Computing, Academy of Mathematics and Systems Science, Chinese Academy of Sciences, Beijing 100190, China (e-mail: fanxilai@lsec.cc.ac.cn). 
	
	Y.-F. Liu is with the Ministry of Education Key Laboratory of Mathematics
	and Information Networks, School of Mathematical Sciences, Beijing University
	of Posts and Telecommunications, Beijing 100876, China (email: yafengliu@bupt.edu.cn).
	
	L. Liu is with the Department of Electronic and Information Engineering, The Hong Kong Polytechnic University, Hong Kong SAR, China (e-mail: liang-eie.liu@polyu.edu.hk). 
	
	T.-H. Chang is with the School of Science and Engineering, The Chinese University of Hong Kong, Shenzhen, Shenzhen, China, and Shenzhen Research Institute of Big Data (e-mail: tsunghui.chang@ieee.org). 
}}
\begin{document}

\maketitle

\begin{abstract}
	This paper considers the quality-of-service (QoS)-based joint beamforming and compression design problem in the downlink cooperative cellular network, 
	where multiple relay-like base stations (BSs), connected to the central processor via rate-limited fronthaul links, cooperatively transmit messages to the users. 
	The problem of interest is formulated as the minimization of the total transmit power of the BSs, subject to all users' signal-to-interference-plus-noise ratio (SINR) constraints and all BSs' fronthaul rate constraints.  
	In this paper, we first show that there is no duality gap between the considered joint optimization problem and its Lagrangian dual by showing the tightness of its semidefinite relaxation (SDR). 
	Then, we propose an efficient algorithm based on the above duality result for solving the considered problem. 
	The proposed algorithm judiciously exploits the special structure of an enhanced Karush-Kuhn-Tucker (KKT) conditions of the considered problem and approaches the solution that satisfies the enhanced KKT conditions via two fixed point iterations. 
	Two key features of the proposed algorithm are: 
	(1) it is able to detect whether the considered problem is feasible or not and find its globally optimal solution when it is feasible; 
	(2) it is highly efficient because both of the fixed point iterations in the proposed algorithm are linearly convergent and function evaluations in the fixed point iterations are computationally cheap. 
	Numerical results show the global optimality and efficiency of the proposed algorithm.   
\end{abstract}

\begin{IEEEkeywords}
Cooperative cellular network, enhanced Karush-Kuhn-Tucker (KKT) conditions, fixed point iteration, Lagrangian duality, tightness of semidefinite relaxation (SDR).
\end{IEEEkeywords}

\IEEEpeerreviewmaketitle

\section{Introduction}
\IEEEPARstart{L}{agrangian} duality \cite{boyd2004ConvexOptimization}, a principle that (convex) optimization problems can be viewed from either the primal or the dual perspective, is a powerful and vital tool in revealing the intrinsic structures of the optimization problems arising from engineering and further better solving the problems \cite{survey}. 
In practical engineering design, one is often interested in not only the numerical solution to the corresponding problems but also the specific structure of their optimal solutions. 
When a problem is formulated as a convex optimization problem, exploring its Lagrangian dual often reveals such structure. 
Knowing these solution structures in turn often leads to a better algorithm for solving the corresponding problem. 

The celebrated uplink-downlink duality \cite{
	rashid-farrokhi1998TransmitBeamformingPower, 
	boche2002GeneralDualityTheory, 
	schubert2004SolutionMultiuserDownlink,
	wiesel2006LinearPrecodingConic,
	viswanath2003SumCapacityVector
	}
	in the power control and beamforming design literature 
can be comprehensively understood and interpreted by Lagrangian duality \cite{song2007NetworkDualityMultiuser, yu2006UplinkdownlinkDualityMinimax}. 
The uplink-downlink duality refers to the fact that the minimum sum power required to achieve a set of signal-to-interference-plus-noise ratio (SINR) targets in the downlink channel is equal to that to achieve the same set of SINR targets in a virtual dual uplink channel, when the uplink and downlink channels are the conjugate transpose of each other. 
Usually, the virtual uplink beamforming problems, e.g., the sum transmission power minimization problems subject to users' SINR constraints, can be derived from some equivalent transformation of the Lagrangian dual of the downlink problem \cite{song2007NetworkDualityMultiuser} and solved globally and efficiently via the fixed point iteration algorithms \cite{rashid-farrokhi1998TransmitBeamformingPower,wiesel2006LinearPrecodingConic,visotsky1999OptimumBeamformingUsing,rashid-farrokhi1998JointOptimalPower}. 
The uplink-downlink duality thus enables efficient algorithms for solving the downlink problem via solving the relatively easy uplink (essentially dual) counterpart. 
This line of algorithms enjoys two key features: one is its high computational efficiency as the algorithm often involves cheap fixed point iterations only, and the other is its global optimality. 
Indeed, the Lagrangian duality and in particular the uplink-downlink duality based algorithms have been widely studied for solving power control and beamforming design problems in various communication networks; see \cite{
rashid-farrokhi1998TransmitBeamformingPower, 
boche2002GeneralDualityTheory, 
schubert2004SolutionMultiuserDownlink,
wiesel2006LinearPrecodingConic,
viswanath2003SumCapacityVector, 
song2007NetworkDualityMultiuser, 
vishwanath2003DualityAchievableRates,
yu2006UplinkdownlinkDualityMinimax, 
visotsky1999OptimumBeamformingUsing, 
rashid-farrokhi1998JointOptimalPower, 
bengtsson2002OptimumSuboptimumTransmit, 
schubert2005IterativeMultiuserUplink, 
cai2011MaxminWeightedSINR, 
yu2007TransmitterOptimizationMultiantenna, 
hammarwall2006DownlinkBeamformingIndefinite, 
dahrouj2010CoordinatedBeamformingMulticell}
and the references therein. 

Different from the above works where the degree of the cooperation between the base stations (BSs) is limited, this paper considers the cooperative cellular network where the users' data information are shared among the BSs via the fronthaul links and the joint processing is performed at the central processor (CP), which can effectively mitigate the inter-cell interference. 
Such network includes coordinated multipoint \cite{irmer2011CoordinatedMultipointConcepts}, distributed antenna system \cite{kerpez1996RadioAccessSystem}, cloud radio access network (C-RAN) \cite{simeone2016CloudRadioAccess,li2023AsynchronousActivityDetection,wang2024CovariancebasedActivityDetection}, and cell-free massive multi-input multi-output \cite{ngo2017CellfreeMassiveMIMO,ammar2022UsercentricCellfreeMassive,bjornson2020ScalableCellfreeMassivea,elhoushy2022CellfreeMassiveMIMO,zhao2023CommunicationefficientDecentralizedLinear,wang2024CovariancebasedActivityDetection,demir2021FoundationsUsercentricCellfree} as special cases. 
Despite the attractive advantages of full cooperation between the BSs, the cooperative cellular network puts heavy burden on the required fronthaul links. 
To tackle the above issue, transmission strategies of the BSs should be jointly designed along with the utilization of the fronthaul links \cite{peng2015FronthaulconstrainedCloudRadio}. 
Along this direction, a variety of solutions have been proposed under different design objectives and system settings; see \cite{
dai2014SparseBeamformingUsercentric, 
shi2014GroupSparseBeamforming,
park2013JointPrecodingMultivariate, 
park2014InterclusterDesignPrecoding,
jeon2016JointDesignsFronthaul, 
tang2019UserSelectionPower,
patil2014HybridCompressionMessagesharing, 
kang2016FronthaulCompressionPrecoding,
he2019HybridPrecoderDesign, 
kim2019JointDesignFronthauling,
ahn2020FronthaulCompressionPrecoding,
zhang2022BeamformingFronthaulCompression,
huang2022JointOptimizationWireless, 
kim2022CellfreeMmWaveMassive,
maryopi2021SumrateMaximizationUplink,
zhou2016FronthaulCompressionTransmit
} and the references therein. 

However, very few of the above works have exploited the Lagrangian duality or uplink-downlink duality in the cooperative cellular network
(possibly due to the reason that the optimization problem therein seems to be nonconvex). 
Note that the Lagrangian duality approach is a powerful method in deriving the uplink-downlink duality and developing efficient duality-based algorithms for solving various beamforming design problems when the fronthaul is assumed to have an infinite capacity \cite{visotsky1999OptimumBeamformingUsing, yu2006UplinkdownlinkDualityMinimax, wiesel2006LinearPrecodingConic, yu2007TransmitterOptimizationMultiantenna, song2007NetworkDualityMultiuser}. 
Then, an important question arises: when the fronthaul capacity is limited, does the Lagrangian duality approach still work in developing the efficient duality-based algorithm? 
Notice that the limited fronthaul capacity case differs significantly from the infinite fronthaul capacity case. 
Specifically, when formulating the beamforming design problem in the limited capacity case, additional fronthaul rate constraints come into play.  
These additional constraints have changed the problem's structure and introduced two technical challenges when applying the Lagrangian duality approach. 
Firstly, it is unclear whether the beamforming design problem with fronthaul rate constraints exhibits a zero duality gap. 
Secondly, it is also uncertain whether the problem with fronthaul rate constraints still has a favorable solution structure.
The goal of this paper is to answer the above questions, i.e., exploit the Lagrangian duality in the joint beamforming and compression design problem in the cooperative cellular network~(where the fronthaul capacity is limited) to reveal its special solution structure and further utilize them to develop efficient duality-based algorithms. 

\subsection{Prior Works}

Duality-based algorithms for the downlink beamforming problem in the conventional cellular network have been studied extensively in \cite{
	schubert2004SolutionMultiuserDownlink, 
	wiesel2006LinearPrecodingConic, 
	cai2011MaxminWeightedSINR, 
	yu2007TransmitterOptimizationMultiantenna, 
	hammarwall2006DownlinkBeamformingIndefinite
}. 
Assuming single-antenna users, a multi-antenna BS, and linear encoding and decoding strategies employed at the BS, the works in
\cite{
	rashid-farrokhi1998TransmitBeamformingPower, 
	boche2002GeneralDualityTheory, 
	schubert2004SolutionMultiuserDownlink,
	wiesel2006LinearPrecodingConic,
	viswanath2003SumCapacityVector, 
	song2007NetworkDualityMultiuser
	} 
showed that any downlink achievable SINR tuple can be achieved in the uplink under the same sum power constraint, and vice versa. 
Such uplink-downlink duality results enable efficient algorithms for solving the downlink beamforming problem. 
More specifically, the work \cite{schubert2004SolutionMultiuserDownlink} proposed an alternating optimization algorithm for solving the downlink beamforming problem and showed the global optimality of the algorithm. 
Instead of exactly solving the power allocation subproblem as in \cite{schubert2004SolutionMultiuserDownlink}, \cite{wiesel2006LinearPrecodingConic} proposed efficient fixed point algorithms for solving the downlink beamforming problem. 
Using the nonlinear Perron-Frobenius theory \cite{krause1986PerronStabilityTheorem}, the work \cite{cai2011MaxminWeightedSINR} proved that the fixed point iteration algorithm proposed in \cite{wiesel2006LinearPrecodingConic} is guaranteed to find the global solution. 
In addition, there have been works that exploit the uplink-downlink duality for downlink beamforming problems under various practical constraints, such as per-antenna power constraints 
\cite{yu2007TransmitterOptimizationMultiantenna} and indefinite shaping constraints \cite{hammarwall2006DownlinkBeamformingIndefinite}. 

In the cooperative cellular networks, the joint beamforming and compression problem, i.e., the joint design of the wireless transmission and the compression-based utilization of the fronthaul links, has been widely studied under various designing criteria and system settings \cite{
	park2013JointPrecodingMultivariate, 
	park2014InterclusterDesignPrecoding, 
	jeon2016JointDesignsFronthaul, 
	tang2019UserSelectionPower,
	kang2016FronthaulCompressionPrecoding, 
	kim2019JointDesignFronthauling, 
	ahn2020FronthaulCompressionPrecoding, 
	zhang2022BeamformingFronthaulCompression, 
	huang2022JointOptimizationWireless, 
	maryopi2021SumrateMaximizationUplink, 
	zhou2016FronthaulCompressionTransmit}. 
To fully utilize fronthaul links of finite capacities, an information-theoretically optimal compression strategy called multivariate compression was proposed in \cite{park2013JointPrecodingMultivariate}. 
Refs. \cite{park2013JointPrecodingMultivariate} and \cite{park2014InterclusterDesignPrecoding} studied the joint design of the beamformer and the covariance of the quantization noise under the assumption that the CP adopts the linear encoding strategy and the multivariate compression strategy to compress the signals before transmitting them to the relay-like BSs. 
More specifically, Ref. \cite{park2013JointPrecodingMultivariate} considered the weighted sum rate maximization problem with the total transmission power constraint at the CP and the fronthaul rate constraints of all relay-like BSs and proposed a successive convex approximation (SCA) algorithm for solving the considered problem. 
Ref. \cite{park2014InterclusterDesignPrecoding} further extended the above joint beamforming and compression design problem to the multi-cluster C-RAN case (with multiple CPs). 

In contrast to beamforming problems in conventional cellular networks, there is a scarcity of duality results and duality-based algorithms for the design of joint beamforming and compression in cooperative cellular networks. 
Indeed, most of the existing works use nonconvex optimization techniques (e.g., SCA) to tackle the joint beamforming and compression design problems in the cooperative cellular network. 
Recently, Ref. \cite{liu2021UplinkdownlinkDualityMultipleaccess}  generalized the uplink-downlink duality result from the conventional cellular network to the cooperative cellular network. 
Additionally, \cite{liu2021UplinkdownlinkDualityMultipleaccess} formulated a QoS-based joint beamforming and compression design problem and proposed an algorithm for solving it based on the established duality result. 
The algorithm in \cite{liu2021UplinkdownlinkDualityMultipleaccess} first obtains the optimal downlink beamformers by solving the uplink problem via fixed point iterations and then solves the downlink joint power control and compression problem with fixed beamformers. 

\subsection{Our Contributions}
In this paper, we consider the same QoS-based joint beamforming and compression design problem (see problem \eqref{equ:original obp} further ahead) as in \cite{liu2021UplinkdownlinkDualityMultipleaccess} but make further progress in developing the duality result and designing the duality-based algorithm. 
The main contributions of this paper are twofold.  

\begin{itemize}
	\item \emph{New Lagrangian Duality Result.} We establish the tightness of the semidefinite relaxation (SDR) of the considered problem and thus the equivalence of the two problems. 
	This result further implies that the dual problem of the considered problem and its SDR are the same.  
	This Lagrangian duality result significantly facilitates the algorithmic design and plays a central role in the proposed algorithm for solving the problem. 
	Our duality result is sharply different from the established duality result in \cite{liu2021UplinkdownlinkDualityMultipleaccess}, where the problem \eqref{equ:original obp} with fixed beamformers is considered. 
	The problem \eqref{equ:original obp} with fixed beamformers is already in a convex form (after some algebraic manipulation), while the problem \eqref{equ:original obp} itself is not, and whether it admits a convex reformulation is an open question in \cite[Section IX-B]{liu2021UplinkdownlinkDualityMultipleaccess}. 
	This makes our Lagrangian duality result nontrivial.

	\item \emph{Efficient Fixed Point Iteration Algorithm.} Based on the established duality result, we propose an efficient algorithm for solving the SDR of the considered problem. 
	The basic idea of the proposed algorithm is to solve the enhanced Karush-Kuhn-Tucker (KKT) conditions, which incorporate the special structures of the problem into the classical KKT conditions. 
	In particular, the proposed algorithm first solves the enhanced KKT conditions involving the dual variables via a fixed point iteration 
	and then solves the enhanced KKT conditions involving the primal variables via another fixed point iteration. 
	Two key features of the proposed algorithm are as follows: (1) it is guaranteed to find the global solution of the problem when it is feasible and is able to detect the infeasibility of the problem when it is not; (2) it is highly efficient because both fixed point iterations in the proposed algorithm enjoy linear convergence rates, and each update of the variables in fixed point iterations is computationally cheap. 
	The proposed algorithm leverages more Lagrangian duality relationship as compared with that in \cite{liu2021UplinkdownlinkDualityMultipleaccess}. 
	In particular, after obtaining the dual variables, our algorithm recovers the primal variables (e.g., power control vector and compression covariance matrix) via the fixed point iteration. 
	This is different from the algorithm in \cite{liu2021UplinkdownlinkDualityMultipleaccess}, which requires solving the downlink joint power control and compression problem with fixed beamformers from scratch.     
	This key difference is due to the new Lagrangian duality result and it makes our proposed algorithm significantly outperform the algorithm in \cite{liu2021UplinkdownlinkDualityMultipleaccess} in terms of the computational efficiency.
\end{itemize}

In our prior work \cite{fan2022EfficientlyGloballySolving}, we presented an efficient fixed point iteration algorithm for solving the QoS-based joint beamforming and compression design problem. 
The present paper, however, is a significant extension of \cite{fan2022EfficientlyGloballySolving}. 
First, we provide crucial details on the convergence proof of the fixed point iterations, which were missing in our prior work. 
Second, we show the linear convergence rate of the proposed fixed point iteration algorithm and study the behaviors of the proposed algorithm when the considered problem is infeasible. 
These theoretical results are completely new compared with our prior work. 
Third, we conduct detailed numerical experiments that compare the proposed algorithm with the state-of-the-art (SOTA) benchmarks. 

\subsection{Organization}
We adopt the following notations in this paper. 
We use $ \mathcal{S}_{++}^n $ to denote the set of all $ n\times n $ positive definite matrices, $ \mathcal{S}_{+}^n $ to denote the set of all $ n\times n $ positive semidefinite matrices, $ \mathbb{R}_{++}^n $ to denote the $n$-dimensional positive orthant, and $ \mathbb{R}_{+}^n $ to denote the $n$-dimensional nonnegative orthant. 
The order relationship between two vectors shall be understood component-wise. 
For any matrix $\m A$, $\m A^\hermitian$, $\m A^\transpose$, and $\m A^{-1}$ denote the conjugate transpose, transpose, and pseudo-inverse of $\m A$, respectively; 
$\m A^{(m, n)}$ denotes the entry on the $m$-th row and the $n$-th column of $\m A;$ and
$\m A^{(m_1:m_2, n_1:n_2)}$ denotes a submatrix of $\m A$ defined by 
\begin{equation*}
	\begin{bmatrix}
		\m A^{(m_1,n_1)} &\m A^{(m_1,n_1+1)} & \cdots &\m A^{(m_1,n_2)} \\
		\m A^{(m_1+1,n_1)} &\m A^{(m_1+1,n_1+1)} & \cdots &\m A^{(m_1+1,n_2)} \\
		\vdots&\vdots&\ddots&\vdots \\
		\m A^{(m_2,n_1)} &\m A^{(m_2,n_1+1)} &\cdots &\m A^{(m_2,n_2)}
	\end{bmatrix}. 
\end{equation*}
For two $n \times n$ matrices $\m A_1$ and $\m A_2$, $ \m A_1 \succeq \m A_2 $ and $ \m A_1 \succ \m A_2 $ denote that $ \m A_1 - \m A_2 \in \mathcal{S}_{+}^n $ and $ \m A_1 - \m A_2 \in \mathcal{S}_{++}^n $, respectively. 
We use $\cCN(\m 0 , \bar{\m Q})$ to denote the $n$-dimensional complex Gaussian distribution with zero mean and covariance $\bar{\m Q} \in \mathcal{S}_{+}^n$. Finally, we use $\m{I}$ to denote the identity matrix of an appropriate size, $\m 0 $ to denote an all-zero matrix of an appropriate size, $ \v{e}_m $ to denote the $m$-th column vector of $\m I$, and $\m{E}_m$ to denote $ \v{e}_m \v{e}_m^\hermitian $.

\section{System Model and Problem Formulation}
\subsection{System Model}

Consider a cooperative cellular network consisting of one CP and $M$ single-antenna relay-like BSs (which will be called relays for short later), 
which cooperatively serve $K$ single-antenna users. 
In such network, the users and the relays are connected by noisy wireless channels, and the relays and the CP are connected by noiseless fronthaul links of finite capacities. Let $\cM = \left\{ 1, 2, \d, M \right\}$ and $\cK = \left\{ 1, 2, \d, K \right\}$ denote the sets of the relays and the users, respectively. 

We first introduce the compression model from the CP to the relays. The beamformed signal at the CP is $\tilde{\v{x}} = \sum_{k\in\cK} \v v_k s_k$, 
where $\v v_k = [v_{k,1}, v_{k,2}, \d , v_{k,M}]^\transpose$ is the $M \times 1$ beamforming vector and $s_k\sim \cCN(0, 1)$ is the information signal for user $k$. 
Because of the limited capacities of the fronthaul links, 
the signal from the CP to the relays need to be first compressed before transmitted. 
Using compression with the Gaussian test channel \cite{park2013JointPrecodingMultivariate, park2014FronthaulCompressionCloud},  the compression error is modeled as a Gaussian variable, independent of $\{s_k\}$, i.e., $\v{e} = [e_1, e_2, \d, e_M]^\transpose \sim \cCN(\m 0 , \m Q),$ where $e_m$ denotes the error for compressing signals to relay $m$, and $\m Q$ is the covariance matrix of the compression noise. 
The transmitted signal of relay $m$ is 
\begin{equation}
	x_m = \sum_{k\in \cK} v_{k,m}s_k + e_m, \fm. 
	\label{equ:model_R}
\end{equation} 
Then the received signal of user $ k $ is 
\begin{equation*}
	y_k = \sum_{m\in \cM} h_{k,m} x_m + z_k, \fk, 
\end{equation*}
where $h_{k,m}$ is the channel coefficient from relay $m$ to user $k$, and $\left\{z_1, z_2, \d, z_K\right\}$ are independent and identically distributed (i.i.d.) additive complex Gaussian noise distributed as $\cCN(0, \sigma_k^2).$

Under the above model, the received signal at user $k$ is 
\begin{equation*}
	y_k = \v{h}_k^\hermitian \left( \sum_{i\in\cK} \v v_i s_i \right) + \v{h}_k^\hermitian \v{e} + z_k, \fk, 
	\label{equ:DBM}
\end{equation*}
where $\v{h}_k = [h_{k,1}, h_{k,2}, \d, h_{k,M}]^\hermitian$ is the channel vector of user $k$. Then, the total transmit power of all the relays is 
	\begin{equation*}
		\sum_{k\in\cK} \|\v v_k\|^2 + \tr(\m Q). 
		\label{equ:model_power}
	\end{equation*}
The SINR of user $k$ is
	\begin{equation*}
		\gamma_k(\{\v v_k\}, \m Q) = \frac{|\v{h}_k^\hermitian \v v_k|^2}{\sum_{j\neq k} |\v{h}_k^\hermitian \v v_j|^2 + \v{h}_k^\hermitian \m Q \v{h}_k + \sigma_k^2},~\forall~k\in\cK. 
	\end{equation*}
In order to fully utilize fronthaul links of finite capacities, we adopt the information-theoretically optimal multivariate compression strategy \cite{park2013JointPrecodingMultivariate} to compress the signals from the CP to the relays. 
Without loss of generality, we assume that the compression order is from relay $ M $ to relay $ 1 $. Then the compression rate of relay $m$ is given by 
\begin{equation}
\label{equ:def_Cm}
	\begin{aligned}
		&C_m(\{\v v_k\}, \m Q) \\
		={}& \log_2 \left( \frac{\sum_{k\in\cK} |v_{k,m}|^2 + \m Q^{(m,m)}}{q_m} \right), \fm,
	\end{aligned}
\end{equation}
where $$q_m = \m Q^{(m,m)} - \m Q^{(m,m+1:M)}(\m Q^{(m+1:M,m+1:M)})^{-1} \m Q^{(m+1:M,m)}$$ is the (generalized) Schur complement of $\m Q^{(m+1:M,m+1:M)}$ in $\m Q^{(m:M,m:M)}$. 
Using the information theoretic results \cite{elgamal2011NetworkInformationTheory, park2013JointPrecodingMultivariate}, for any given covariance matrix $\m Q$ and the beamforming vectors $\{\v v_k\}$, we can find a compression strategy such that $\{C_m(\{\v v_k\}, \m Q)\}$ bits can be transmitted over the fronthaul link without error per second per Hz.

Notice that when $\m Q$ is singular, i.e., $q_m = 0$ for some $m$, $C_m$ in \eqref{equ:def_Cm} is not well defined. 
For the completeness of the definition and the closeness of the feasible region of the considered problem \eqref{equ:original obp} further ahead, we use the following definition when $q_m = 0$ for some $m\in\cM$: 
set $C_m = 0$ if ${\sum_{k\in\cK} |v_{k,m}|^2 + \m Q^{(m,m)}} = 0$, i.e., $v_{k,m} = 0$ for all $k\in\cK$ and $\v Q^{(1:M,m)} = (\v Q^{(m,1:M)})^\hermitian = \v 0$; set $C_m = +\infty$ otherwise. 
When the singular case $C_m = 0$ happens, relay $m$ does not play any role in the whole transmission process. 

\subsection{Problem Formulation}
Given a set of SINR targets for the users $\left\{ \bar\gamma_k \right\}$ and a set of fronthaul capacities for the relays $\left\{ \bar C_m \right\}$, 
we aim to minimize the total transmit power of all the relays, as shown in \eqref{equ:original obp}:
\begin{equation}
	\begin{aligned}
		\min_{\{\v v_k\}, \m Q\succeq \m 0 } &\quad \sum_{k\in\cK} \|\v v_k\|^2 + \tr(\m Q)\\
		\st~~~~ &\quad \gamma_k(\{\v v_k\}, \m Q) \geq \bar \gamma_k, \fk,\\
		&\quad C_m(\{\v v_k\}, \m Q) \leq \bar{C}_m, \fm{}. 
	\end{aligned}
	\label{equ:original obp}
\end{equation}

\begin{figure*}[!t]
	\begin{equation}
		\begin{aligned}
			\min_{\{\v v_k\}, \m Q\succeq \m 0 }&\quad \sum_{k\in\cK} \|\v v_k\|^2 + \tr(\m Q)\\
			\st~~~~ &\quad  \frac{1}{\bar \gamma_k}|\v h_k^\hermitian \v v_k|^2 - \left(\sum_{j\neq k} |\v h_k^\hermitian \v v_j|^2 + \v h_k^\hermitian \m Q \v h_k + \sigma_k^2\right) \geq 0, \fk, \\
			&\quad 2^{\bar C_m} \begin{bmatrix}
				\m 0 & \m 0  \\
				\m 0 & \m Q^{(m:M, m:M)}
			\end{bmatrix} - \left(\sum_{k\in\cK} |v_{k,m}|^2+ \m Q^{(m,m)}\right) \m{E}_m \succeq \m 0 , \fm{}. 
		\end{aligned} \tag{P}
	\end{equation}
	\hrulefill
\end{figure*}
By some algebraic manipulations, we can show that problem \eqref{equ:original obp} is equivalent to problem (P) at the top of the next page. 
Please refer to Appendix~\ref{apd:equivalence} in the Supplementary Material for the details of the proof. 
Notice that problem (P) with fixed beamformers, which is a semidefinite program (SDP), is studied in \cite{liu2021UplinkdownlinkDualityMultipleaccess} to derive their uplink downlink duality results. 
Under the assumption of strict feasibility, problem (P) with fixed beamformers, being an SDP, enjoys strong duality.
However, problem (P) itself is not in a convex form, and the existence of its convex reformulation is an unanswered question \cite{liu2021UplinkdownlinkDualityMultipleaccess}. 
Notice that the convex reformulation technique proposed in \cite{wiesel2006LinearPrecodingConic}, which turns the SINR constraints into a set of second order cone constraints via performing the square root operation on both sides of the SINR constraints, cannot be applied to problem (P) because the additional variable $\m Q$ makes the resulting constraints nonconvex. 

In the following section, we will give a convex reformulation of problem (P) by deriving the SDR of problem (P) and showing its tightness. 
Then, we will design an efficient algorithm for globally solving problem (P) by solving the enhanced KKT conditions of its SDR. 

\section{SDR of (P) and Its Tightness}
Problem (P) is a quadratically constrained quadratic program of $\{\v v_k\}$ and an SDP of $\m Q$. 
A well-known technique to tackle such problem is the SDR \cite{luo2010SemidefiniteRelaxationQuadratic}. 
In particular, setting $\m V_k = \v v_k \v v_k^\hermitian$ and relaxing the rank-one constraint of $\m V_k$ for all $k\in\cK$ in problem (P), 
we obtain the SDR of problem (P): 
\begin{equation}\label{SDR} 
	\begin{aligned}
		\min_{\{\m V_k\}, \m Q \succeq \m 0 } &\quad \sum_{k\in\cK} \tr(\m V_k) + \tr(\m Q)\\
		\st~~~~ &\quad a_k(\left\{\m V_k\right\},\m Q) \geq 0,\fk, \\
		&\quad \m B_m(\left\{\m V_k\right\},\m Q) \succeq \m 0 ,\fm{},\\
		&\quad \m V_k \succeq \m 0  ,\fk, 
	\end{aligned}
\end{equation}
where 
\begin{equation*}
	\begin{aligned}
	a_k(\left\{\m V_k\right\},\m Q) & = \frac{1}{\bar \gamma_k} \v h_k^\hermitian \m V_k \v h_k - \sum_{j\neq k} \v h_k^\hermitian \m V_j \v h_k- \v h_k^\hermitian \m Q \v h_k - \sigma_k^2\\ 
		\m B_m(\left\{\m V_k\right\},\m Q) & = 2^{\bar C_m} \begin{bmatrix}
			\m 0 & \m 0  \\
			\m 0 & \m Q^{(m:M, m:M)}
		\end{bmatrix} \\
		& \pushright{-~\left(\sum_{k\in\cK} \m V_k^{(m,m)}+ \m Q^{(m,m)}\right) \m{E}_m. } \\ 
	\end{aligned}
\end{equation*} 
Since the problem~\eqref{SDR} is convex, we consider its Lagrangian dual problem, which is given by
	\begin{equation}\label{dual}
		{
			\begin{aligned}
				\max_{\v \beta \geq \v 0, \Lm} &\quad \sum_{k\in\cK} \beta_k \sigma_k^2 \\
				\st~~~~ &\quad \m C_k(\v \beta, \left\{\m{\Lambda}_m\right\}) - \frac{1}{\bar \gamma_k}\beta_k \v h_k \v h_k^\hermitian \succeq \m 0 ,\fk,\\
				&\quad	 \m{D}(\v \beta,\left\{\m{\Lambda}_m\right\}) \succeq \m 0 ,\\
				&\quad \m{\Lambda}_m \succeq \m 0  ,\fm{},
			\end{aligned}
		}
	\end{equation}
where $ \v \beta = [\beta_1, \beta_2, \d, \beta_K]^\transpose$ with $\beta_k$ being the dual variable associated with the $k$-th SINR constraint in problem~\eqref{SDR}, $\m{\Lambda}_m$ is the dual variable associated with the $m$-th fronthaul rate constraint in problem~\eqref{SDR}, and
\begin{equation*}
	\begin{aligned}
		\m C_k(\v \beta,\left\{\m{\Lambda}_m\right\}) &= \m I + \sum_{j\neq k} \beta_j \v h_j \v h_j^\hermitian + \sum_{m\in\cM} \m \Lambda_m^{(m,m)} \m E_m, \\
		\m{D}(\v \beta,\left\{\m{\Lambda}_m\right\}) &= \m{I} + \sum_{k\in\cK} \beta_k \v h_k \v h_k^\hermitian + \sum_{m\in\cM} \m \Lambda_m^{(m,m)} \m E_m \\
		& ~~~~~~~~~~~~~ - \sum_{m\in\cM} 2^{\bar C_m} \left[ \begin{matrix}
			\m 0 & \m 0  \\
			\m 0 & \m{\Lambda}_m^{(m:M,m:M)}
			\end{matrix} \right].
	\end{aligned}
\end{equation*}

The tightness of the SDR, i.e., whether the SDR problem admits a rank-one solution, is an important line of research on the SDR \cite{luo2010SemidefiniteRelaxationQuadratic,lu2019TightnessNewEnhanced,liu2022CramerRaoBoundOptimization}. 
Here, we prove the tightness of the SDR in \eqref{SDR} under the strict feasibility assumption. 
This reveals the hidden convexity in the seemingly nonconvex problem (P) and shows that the problem admits a convex reformulation, which answers a question in \cite[Section IX-B]{liu2021UplinkdownlinkDualityMultipleaccess}. 

\begin{theorem}
	\label{thm:tight}
	Suppose that problem~\eqref{SDR} is strictly feasible, and let $(\Vk, \m Q)$ be its solution. Then $\m V_k$ is of rank one for all $k\in\cK$. 
\end{theorem}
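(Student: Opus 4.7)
The plan is to invoke strong duality for the convex program~\eqref{SDR} (which holds by Slater's condition under the strict feasibility assumption) and then read off the rank of each $\m V_k$ from the KKT stationarity and complementarity conditions. Introducing a PSD multiplier $\m \Xi_k$ for the constraint $\m V_k \succeq \m 0$, the stationarity condition with respect to $\m V_k$ collapses exactly to
\[
\m \Xi_k \;=\; \m C_k(\v\beta,\{\m\Lambda_m\}) \;-\; \frac{\beta_k}{\bar\gamma_k}\,\v h_k\v h_k^\hermitian,
\]
which is already enforced to be PSD by the dual constraint in~\eqref{dual}, and complementary slackness yields $\m\Xi_k\m V_k = \m 0$. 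Hence $\mathrm{rank}(\m V_k)\le \dim\ker(\m\Xi_k)$, and the entire argument reduces to pinning down $\mathrm{rank}(\m\Xi_k)$.

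The first key step is to observe that
\[
\m C_k(\v\beta,\{\m\Lambda_m\}) \;=\; \m I \;+\; \sum_{j\neq k}\beta_j\,\v h_j\v h_j^\hermitian \;+\; \sum_{m\in\cM}\m\Lambda_m^{(m,m)}\m E_m \;\succeq\; \m I \;\succ\; \m 0,
\]
because $\beta_j\ge 0$ and each diagonal entry $\m\Lambda_m^{(m,m)}$ is nonnegative (inherited from $\m\Lambda_m\succeq \m 0$). Thus $\m C_k$ has full rank $M$, and subtracting the rank-one PSD matrix $(\beta_k/\bar\gamma_k)\,\v h_k\v h_k^\hermitian$ can drop the rank by at most one. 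Combined with $\m\Xi_k\succeq \m 0$, this forces $\mathrm{rank}(\m\Xi_k)\in\{M-1,M\}$, so $\mathrm{rank}(\m V_k)\le 1$.

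The second key step is to rule out $\m V_k = \m 0$. If $\m V_k = \m 0$, then the $k$-th SINR constraint in~\eqref{SDR} reduces to
\[
-\sum_{j\neq k}\v h_k^\hermitian\m V_j\v h_k \;-\; \v h_k^\hermitian\m Q\v h_k \;-\; \sigma_k^2 \;\ge\; 0,
\]
which is impossible since $\sigma_k^2>0$ and the other two terms are nonnegative. Hence $\m V_k\neq \m 0$, and together with the rank bound above we conclude $\mathrm{rank}(\m V_k)=1$. (As a byproduct this also rules out $\beta_k=0$, since $\beta_k=0$ would give $\m\Xi_k=\m C_k\succ \m 0$ and force $\m V_k=\m 0$.)

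The main obstacle to watch for is really just the bookkeeping around the $\m\Lambda_m$ terms: one must verify that only the diagonal entries $\m\Lambda_m^{(m,m)}$ appear in the stationarity condition for $\m V_k$ (all the off-diagonal contributions from $\m B_m$ involve only $\m Q$, not $\m V_k$), and that these diagonal entries are indeed nonnegative. Once this is checked, the positive-definiteness of $\m C_k$ is immediate and the rest of the argument is essentially a one-line rank inequality plus the $\sigma_k^2>0$ observation.
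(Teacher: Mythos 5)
Your proof is correct and follows essentially the same route as the paper: invoke the KKT conditions of the convex SDR, note that $\m C_k \succ \m 0$ so that subtracting the rank-one PSD term $\frac{\beta_k}{\bar\gamma_k}\v h_k\v h_k^\hermitian$ leaves a dual slack matrix of rank at least $M-1$, use complementary slackness to conclude $\rk(\m V_k)\le 1$, and rule out $\m V_k=\m 0$ via the SINR constraint with $\sigma_k^2>0$. Your version is a bit more explicit (you write out the stationarity condition defining $\m\Xi_k$ and verify $\m C_k\succeq\m I$ term by term), but the underlying argument is the same as the paper's.
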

\begin{proof}
    Because of the optimality of $(\Vk, \m Q)$ and strict feasibility of  problem~\eqref{SDR}, there must exist dual multipliers $\v \beta$ and $\Lm$ such that the Karush-Kuhn-Tucker (KKT) conditions of problem~\eqref{SDR} hold. 
	In particular, the complementary slackness conditions $$\tr \left( \m V_k \left(\m C_k(\v \beta,\left\{\m{\Lambda}_m\right\}) - \frac{1}{\bar \gamma_k}\beta_k \v h_k \v h_k^\hermitian\right) \right)= 0, \fk$$ hold. 
	Since $\m C_k(\v \beta,\left\{\m{\Lambda}_m\right\})$ is positive definite and $\v h_k \v h_k^\hermitian$ is rank-one and positive semidefinite, it follows that 
	$$\rk\left(\m C_k(\v \beta,\left\{\m{\Lambda}_m\right\}) - \frac{1}{\bar \gamma_k} \beta_k \v h_k \v h_k^\hermitian \right) \geq M-1, \fk, $$ 
	which, together with the complementary slackness condition and the rank inequality, implies that 
	$\rk(\m V_k)\leq 1 \text{~for~all~} k \in \cK$. 
	If $\m V_k = 0$, i.e., user $ k $ is assigned with a zero beamformer, then its corresponding SINR constraint will be violated. 
	Therefore, all optimal $\m V_k$ are rank-one. 
	The proof is complete. 
\end{proof}

Two remarks on Theorem~\ref{thm:tight} are as follows. 
First, Theorem~\ref{thm:tight} offers a way of globally solving problem (P) via solving its SDR \eqref{SDR}, which provides an important benchmark for performance evaluation of other algorithms for solving problem (P). 
Second, assuming that problem~\eqref{SDR} is strictly feasible, it is well known that the KKT conditions are sufficient and necessary for the global solution of problem~\eqref{SDR} \cite[Section 5.2]{boyd2004ConvexOptimization}. 
It will become more clear that the KKT conditions play a central role in solving problem~\eqref{SDR}. 
In the following, we shall design an efficient fixed point algorithm for solving the KKT conditions of problem~\eqref{SDR}. 

\section{Proposed Fixed Point Iteration Algorithm}
In this section, we first combine the special structures of the solution of the problem~\eqref{SDR} with its KKT conditions to derive a new set of conditions, which is referred to as the enhanced KKT conditions. 
Then, we present an efficient way of solving the enhanced KKT conditions via two fixed point iterations and thus an efficient algorithm for solving the problem~\eqref{SDR}. 

\subsection{Enhanced KKT Conditions}
First, recall that the KKT conditions of an SDP consist of the dual feasibility conditions, the primal feasibility conditions, and the complementary slackness conditions. 
Specifically, supposing $\Vk$ and $\m Q$ are the primal solutions, and $\v \beta$ and $\Lm$ are the dual solutions, the complementary conditions of problem~\eqref{SDR} are given by 
\begin{equation}
	\tr (\m Q \m D(\v \beta, \Lm)) = \m 0, 
	\label{equ:QD}
\end{equation}
\begin{equation}
	\begin{aligned}
		\label{equ:slack_1}
		\tr \left( \m V_k \left( \m C_k(\v \beta,\left\{\m{\Lambda}_m\right\}) - \frac{1}{\bar \gamma_k} \beta_k \v h_k \v h_k^\hermitian\right) \right) =0,\fk, 
	\end{aligned}
\end{equation}
and
\begin{equation}
	\label{equ:slack_2}
	\tr \left( \m{\Lambda}_m \m B_m(\left\{\m V_k\right\},\m Q) \right) = 0, \fm{}. 
\end{equation}
Note from problem (P) that if all the relays are utilized (i.e., $\sum_{k\in\cK} |v_{k,m}|^2 > 0$ for all $m \in \cM$) at the optimal solution, then $\m Q$ must be positive definite (otherwise the fronthaul rate constraint will be violated). 
Combining this with Eq. \eqref{equ:QD} gives $\m D(\v \beta, \Lm) = \m 0$. 
Using this and the structure of $\m C_k(\v \beta,\left\{\m{\Lambda}_m\right\}) -  \beta_k/\bar \gamma_k \v h_k \v h_k^\hermitian$, the dual feasibility conditions are enhanced into the following conditions: 
\begin{numcases}{}
	\label{equ:dual_const2}
	\m D(\v \beta, \Lm) = \m 0, \\
	\label{equ:dual_var2}
	\left.
	\begin{aligned}
		&\rk(\m{\Lambda}_m)\leq 1,~\m{\Lambda}_m \succeq \m 0 ,\fm{}, \\
		&\m{\Lambda}_m^{(1:m-1,1:m)}= \m 0 ,~\m{\Lambda}_m^{(m:M,1:m-1)} = \m 0 ,\fm{},
	\end{aligned}
	\right\}\\
	\label{equ:dual_var1}
	\v \beta\geq \v 0, \\
	\label{equ:dual_const1}
	\left.
		\begin{aligned}
			&\rk\left(\m C_k(\v \beta, \left\{\m{\Lambda}_m\right\}) - \beta_k/\bar \gamma_k\v h_k \v h_k^\hermitian\right)\\
			& \pushright{=M-1, \fm,} \\
			&\m C_k(\v \beta,\left\{\m{\Lambda}_m\right\}) -  \beta_k/\bar \gamma_k \v h_k \v h_k^\hermitian \succeq \m 0 , \fm{}, 
		\end{aligned}
	\right\}
\end{numcases}
where \eqref{equ:dual_var2} is shown in Appendix~\ref{apd:kkt}. 
Furthermore, the primal feasibility conditions are enhanced into the following conditions: 
\begin{numcases}{}
	\label{equ:pri_var1}
	\m V_k\succeq \m 0 ,~\rk(\m V_k) = 1, \fk, \\
	\label{equ:pri_const1}
	a_k(\left\{\m V_k\right\},\m Q)=0,\fk,\\
	\label{equ:pri_const2}
	\m B_m(\left\{\m V_k\right\},\m Q)\succeq \m 0 , \fm{}, \\
	\label{equ:pri_var2}
	\m Q \succeq \m 0, 
\end{numcases}
where \eqref{equ:pri_var1} is shown by Theorem \ref{thm:tight}, and \eqref{equ:pri_const1} is shown in Appendix~\ref{apd:kkt}. 

The enhanced KKT conditions, i.e., Eqs.~\eqref{equ:slack_1}--\eqref{equ:pri_var2}, define a nonempty subset of KKT points under the assumption that any optimal $\m Q$ is positive definite. 
In Appendix~\ref{apd:PD} in the Supplementary Material, we prove that any optimal $\m Q$ is positive definite with probability one under a mild assumption on the random channels, which provides a theoretical justification on our assumption of the positive definiteness of $\m Q$ during the algorithm development. 

\begin{figure}[!t]
	\centering
	\includegraphics[width=0.32\textwidth]{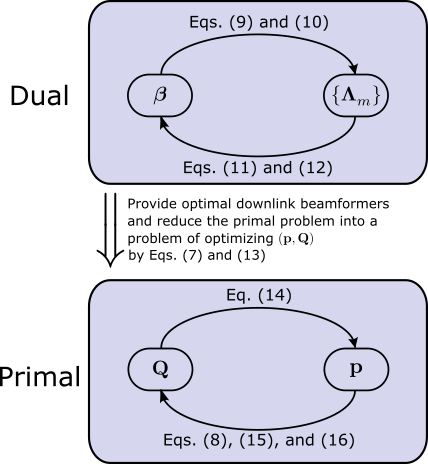}
	\caption{The flow chart of the proposed algorithm for solving the enhanced KKT conditions. }
	\label{fig:illustration_kkt}
\end{figure}
Next, we shall design an algorithm for solving the enhanced KKT conditions. 
The basic idea is to first solve the enhanced dual feasibility conditions, i.e., Eqs.~\eqref{equ:dual_const2}--\eqref{equ:dual_const1}, for the dual variables $\v \beta$ and $ \Lm $; and then plug them into the rest of the enhanced KKT conditions to solve for the primal variables $\Vk$ and $\m Q$. 
The flow chart of the proposed algorithm for solving the enhanced KKT conditions is illustrated in Fig.~\ref{fig:illustration_kkt}. 

\subsection{Solving for the Dual Variables} \label{ss:dual}

\subsubsection{Solving Eqs. \eqref{equ:dual_const2} and \eqref{equ:dual_var2} for $ \Lm $}
\label{sss:solvingforLm}
Suppose that $\v \beta$ is given, we first find $\Lm$ that satisfy Eqs.~\eqref{equ:dual_const2} and \eqref{equ:dual_var2}. 
Define 
\begin{equation}
    \label{equ:def_Gamma}
    \m \Gamma(\v \beta) = \m{I} + \sum_{k\in\cK} \beta_k \v h_k \v h_k^\hermitian. 
\end{equation}
Then Eq. \eqref{equ:dual_const2} is equivalent to
\begin{equation}
    \sum_{m\in\cM} 2^{\bar C_m} \left[ \begin{matrix}
			\m 0 & \m 0  \\
			\m 0 & \m{\Lambda}_m^{(m:M,m:M)}
			\end{matrix} \right]  - \sum_{m\in\cM} \m \Lambda_m^{(m,m)} \m E_m
			= \m{\Gamma}(\v \beta). 
	\label{equ:D_changed}
\end{equation}
We know from the special properties of $\Lm$ in Eq. \eqref{equ:dual_var2} that only $\m{\Lambda}_1$ affects the first row and column of matrix $\m{\Gamma}(\v \beta).$ 
Therefore, the entries in the first row of $\m{\Lambda}_1$ should be 
\[\left[ \begin{matrix}
	\frac{1}{2^{\bar C_1} - 1}\m{\Gamma}(\v \beta)^{(1,1)}, \frac{1}{2^{\bar C_1}}\m{\Gamma}(\v \beta)^{(1,2:M)}
\end{matrix} \right]. \]
Since $\m{\Lambda}_1$ is of rank one, we can further obtain all entries of $\m{\Lambda}_1$ based on its entries in the first row, which is 
\begin{equation}
	\m{\Lambda}_1 = 
	\begin{bmatrix}
		\frac{1}{2^{\bar C_1}-1} \m{\Gamma}(\v \beta)^{(1,1)}
		&\frac{1}{2^{\bar C_1}} \m{\Gamma}(\v \beta)^{(1,2:M)}  \\
		\frac{1}{2^{\bar C_1}} \m{\Gamma}(\v \beta)^{(2:M,1)}  
		&\frac{2^{\bar C_1}-1}{4^{\bar C_1}} \frac{\m{\Gamma}(\v \beta)^{(2:M,1)} \m{\Gamma}(\v \beta)^{(1,2:M)}}{\m{\Gamma}(\v \beta)^{(1,1)}}
	\end{bmatrix}. 
	\label{equ:Gamma2Lambda1}
\end{equation} 
After $\m{\Lambda}_1$ is obtained, we can subtract all terms related to $\m{\Lambda}_1$ from both sides of \eqref{equ:D_changed} and denote the known right-hand side as $ \m{\Gamma}_2(\v \beta) $. 
Then, after the subtraction, \eqref{equ:D_changed} becomes
\begin{equation}
	\sum_{m = 2}^M 2^{\bar C_m} \left[ \begin{matrix}
			\m 0 & \m 0  \\
			\m 0 & \m{\Lambda}_m^{(m:M,m:M)}
			\end{matrix} \right]  - \sum_{m = 2}^M \m \Lambda_m^{(m,m)} \m E_m 
			= \m{\Gamma}_2(\v \beta). 
	\label{equ:D_changed2}
\end{equation}

For general $ 2 \leq m \leq M $, denote the right-hand side of \eqref{equ:D_changed} after $ \m{\Lambda}_1, \m{\Lambda}_2, \d, \m\Lambda_{m-1} $ are solved and subtracted from both sides as $ \m\Gamma_m(\v \beta) $. 
By comparing the $m$-th row of the obtained equation and using the rank-one property of $\m\Lambda_{m}$, we can obtain $ \m\Lambda_{m}, $ whose nonzero part $\m\Lambda_m^{(m:M,m:M)}$ is given in \eqref{equ:Gamma2Lambdam} at the top of the next page. 
\begin{figure*}[!t]
    \begin{equation}
	\begin{bmatrix}
		\frac{1}{2^{\bar C_m}-1} \m{\Gamma}_m(\v \beta) ^{(m,m)}  
		&\frac{1}{2^{\bar C_m}} \m{\Gamma}_m(\v \beta) ^{(m,m+1:M)}  \\
		\frac{1}{2^{\bar C_m}} \m{\Gamma}_m(\v \beta) ^{(m+1:M,m)}  
		&\frac{2^{\bar C_m}-1}{4^{\bar C_m}} \frac{\m{\Gamma}_m(\v \beta) ^{(m+1:M,m)} \m{\Gamma}_m(\v \beta) ^{(m,m+1:M)}}{\m{\Gamma}_m(\v \beta) ^{(m,m)}}
	\end{bmatrix}. 
	\label{equ:Gamma2Lambdam}
\end{equation} 
\hrulefill
\end{figure*}
We can repeat the above procedure until all $\{\m \Lambda_m\}$ are obtained. 
These solutions, depending on the given $\v \beta$, are denoted as $\{\m \Lambda_m (\v \beta)\}$. 
Please refer to Appendix~\ref{apd:solve_Lambda} in the Supplementary Material for the details of the above procedure. 

\subsubsection{Solving Eqs.~\eqref{equ:dual_var1} and \eqref{equ:dual_const1} for $ \v \beta $}
Now suppose that $ \Lm $ are given, we would like to find $ \v \beta $ that satisfies Eqs. \eqref{equ:dual_var1} and \eqref{equ:dual_const1}. 
Since $\m C_k(\v \beta,\left\{\m{\Lambda}_m\right\}) \succ \m 0 $ and $\v h_k \v h_k^\hermitian\succeq \m 0 $ is rank-one, there exists a unique $\beta_k$ such that one and only one eigenvalue of $\m C_k(\v \beta,\left\{\m{\Lambda}_m\right\})-\beta_k / {\bar \gamma_k}\v h_k \v h_k^\hermitian$ is equal to zero.  
Such $\beta_k$ admits the following closed-form solution: 
\begin{equation}
	\beta_k\left(\Lm, \v \beta\right) = \frac{\bar \gamma_k}{\v{h}_k^\hermitian \m C_k(\v \beta, \left\{\m{\Lambda}_m\right\})^{-1} \v h_k}, \fk. 
	\label{equ:L2b}
\end{equation}

\subsubsection{Dual Fixed Point Iteration}
From the above discussion, we know that 
if $\v \beta$ is known, one can get $\left\{ \m \Lambda_m (\v \beta) \right\}$ such that Eqs. \eqref{equ:dual_const2} and \eqref{equ:dual_var2} hold. 
Plug this solution $\left\{ \m \Lambda_m (\v \beta) \right\}$ into \eqref{equ:L2b}. 
Then, if one can find $\v \beta$ that satisfy
\begin{equation}
	\beta_k = I_k\left(\v \beta\right) \triangleq \beta_k\left(\left\{ \m \Lambda_m (\v \beta) \right\}, \v \beta\right),\fk, 
	\label{equ:raw_dual_iter}
\end{equation}
all Eqs.~\eqstd{} are satisfied. 
If we define
$I(\v \beta) = \left[I_1(\v \beta),I_2(\v \beta),\d,I_K(\v \beta)\right]^\transpose,$ 
then solving \eqref{equ:raw_dual_iter} is to find the fixed point of the function $ I(\cdot) $, namely solving 
\begin{equation}
	\v \beta = I(\v \beta). 
	\label{equ:dual_fix_point}
\end{equation}

It is worth highlighting that the computational cost of evaluating the function $I(\cdot)$ in \eqref{equ:dual_fix_point} is quite cheap. 
The dominant computation is to compute $ \m C_k^{-1} \v{h}_k $ for $ k \in \cK $, where $\m C_k$ is an $M \times M$ positive definite matrix and the corresponding arithmetic complexity is $\mathcal{O}(KM^3)$. 
By using the matrix inversion formula, one can instead solve $K$ of linear systems with the same coefficient matrix of size $M \times M$ to evaluate $I(\cdot)$. 
Hence, the arithmetic complexity can be reduced to $\mathcal{O}(M^2\max\{K,M\})$. 
Furthermore, as will be shown later in Theorem~2, Eq. \eqref{equ:dual_fix_point} can be easily solved via the dual fixed point iteration 
\begin{equation}
	\v \beta^{(i+1)} = I(\v \beta^{(i)})
	\label{equ:dual_iter}
\end{equation}
with a linear convergence rate. 
This fact, together with the cheap evaluation of the function $I(\cdot)$ at each iteration, shows that the above fixed point iteration in \eqref{equ:dual_iter} provides an efficient way of solving the enhanced dual feasibility conditions, i.e, Eqs.~\eqstd{}. 

\subsection{Solving for the Primal Variables}\label{ss:primal}
Suppose that we already have $\v \beta$ and $\Lm$ that satisfy the enhanced dual feasibility conditions. 
We still need to find $\Vk$ and $\m Q$ that satisfy the rest of the enhanced KKT conditions. 
By Eq. \eqref{equ:pri_var1}, let $\m V_k = p_k \hat{\v v}_k \hat{\v v}_k^\hermitian$ with some $\|\hat{\v v}_k\|=1$. 
Then Eq. \eqref{equ:slack_1} becomes
$\hat{\v v}_k^\hermitian \left(\m C_k- \beta_k/\bar \gamma_k \v h_k \v h_k^\hermitian\right) \hat{\v v}_k = 0.$ 
Combining this with Eq. \eqref{equ:dual_const1} gives
$$\left(\m C_k- \beta_k/\bar \gamma_k \v h_k \v h_k^\hermitian\right) \hat{\v v}_k=\m C_k \hat{\v v}_k - \beta_k/\bar\gamma_k  \left(\v{h}_k^\hermitian \hat{\v v}_k\right) \v{h}_k = \m 0 .$$
Hence, $\hat{\v v}_k$ can be solved explicitly as follows: 
\begin{equation}
	\hat{\v v}_k = \frac{\m C_k^{-1} \v{h}_k}{\left\|\m C_k^{-1} \v{h}_k\right\|}, \fk.
\label{equ:pri_beamforming_direction}
\end{equation}
Define $ \v p = [p_1, p_2, \d, p_K]^\transpose$. 
We still need to find $ \m Q $ and $ \v p $ such that Eqs.~\eqref{equ:slack_2} and~\eqref{equ:pri_const1}--\eqref{equ:pri_var2} hold. 

\subsubsection{Solving Eq. \eqref{equ:pri_const1} for $ \v p $}
Substituting $\hat{\m V}_k = \hat{\v v}_k \hat{\v v}_k^\hermitian$ into \eqref{equ:pri_const1}, one has  
\begin{equation*}
	\frac{1}{\bar \gamma_k} p_k \v h_k^\hermitian \hat{\m V}_k \v h_k  -  \sum_{j\neq k} p_j \v h_k^\hermitian \hat{\m V}_j \v h_k - \v h_k^\hermitian \m Q \v h_k  - \sigma_k^2= 0.
\end{equation*}
Then one can solve for $p_k$ as follows:
\begin{equation}
	p_k\left(\m Q, \v p\right) = \frac{\bar \gamma_k \left(\sum_{j\neq k} p_j \v h_k^\hermitian \hat{\m V}_j \v h_k  + \v h_k^\hermitian \m Q \v h_k + \sigma_k^2\right)}{ \v h_k^\hermitian \hat{\m V}_k \v h_k }. 
\label{equ:Q2p}
\end{equation}

\begin{figure}[!t]
	\centering
	\includegraphics[width=0.45\textwidth]{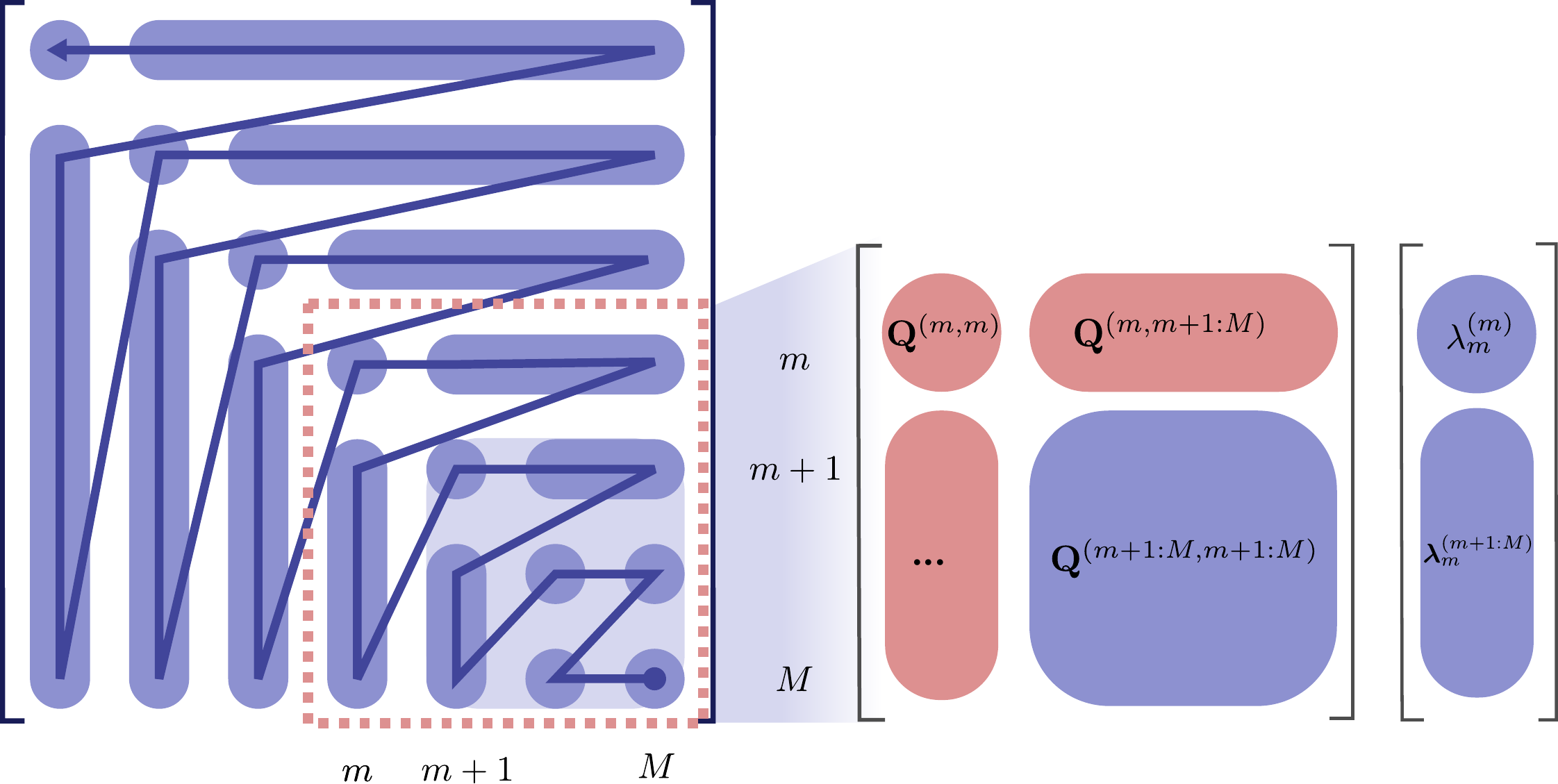}
	\vspace{-10pt}
	\caption{An illustration of solving \eqref{equ:equ_Bm} for $ \m Q $. }
	\label{fig:illustration_solveQ}
\end{figure}

\subsubsection{Solving Eqs.~\eqref{equ:slack_2}, \eqref{equ:pri_const2}, and \eqref{equ:pri_var2} for $ \m
Q $}
Next, given $ \v p $, we shall obtain $ \m Q $ such that Eqs.~\eqref{equ:slack_2}, \eqref{equ:pri_const2}, and \eqref{equ:pri_var2} hold. 
By Eq. \eqref{equ:dual_var2}, one can decompose $\m{\Lambda}_m$ into
$\m{\Lambda}_m = \v{\lambda}_m \v{\lambda}_m^\hermitian,$
where $\v{\lambda}_m = \left[\v{0}, \v{\lambda}_m^{(m)}, \v{\lambda}_m^{(m+1)}, \d, \v{\lambda}_m^{(M)}\right]^\transpose$. 
This decomposition, together with Eqs. \eqref{equ:pri_const2} and \eqref{equ:slack_2}, implies
\begin{equation}
	\m B_m\left(\left\{p_k \hat{\m V}_k\right\},\m Q\right) \v{\lambda}_m = \m 0 , \fm{}.
	\label{equ:equ_Bm}
\end{equation}

We shall solve \eqref{equ:equ_Bm} from $m=M$ to $m=1$ and obtain the desired $\m Q$ in the order shown in the left-hand side of Fig.~\ref{fig:illustration_solveQ}. 
More specifically, when $m = M$, it follows that
\begin{equation}
	\m Q^{(M,M)} = \frac{\sum_{k\in\cK} p_k \hat{\m V}_k^{(M,M)}}{2^{\bar C_m}- 1}. 
	\label{equ:QMM}
\end{equation}
When $ m < M $, we can substitute the known $ \m Q^{(m+1:M,m+1:M)} $ into the $m$-th equation in Eq. \eqref{equ:equ_Bm}, which gives a linear equation with variables $ \m Q^{(m,m)}, \m Q^{(m, m+1:M)}$, and $ \m Q^{(m+1:M, m)} $. 
This linear equation is illustrated in the right-hand side of Fig.~\ref{fig:illustration_solveQ}, where the known variables are marked in blue and the unknown variables are marked in red. 
We first solve $\m Q^{(m+1:M, m)}$ by using the last $M-m+1$ equations in the right-hand side of Fig.~\ref{fig:illustration_solveQ}. 
The solution is given by 
\begin{equation}
	\m Q^{(m+1:M, m)} = - \frac{\m Q^{(m+1:M, m+1:M)} \v{\lambda}_m^{(m+1:M)}}{\v{\lambda}_m^{(m)}}. 
	\label{equ:solve_Q1}
\end{equation}
Then, by Eq. \eqref{equ:pri_var2}, $ \m Q^{(m, m+1:M)} $ is given by the Hermitian transpose of $ \m Q^{(m+1:M,m)} $. 
Finally, we can further obtain $\m Q^{(m,m)}$ by using the first equation in the right-hand side of Fig.~\ref{fig:illustration_solveQ}, which is given by 
\begin{equation}
    \begin{aligned}
        	\m Q^{(m,m)} = \frac{2^{\bar C_m}}{2^{\bar C_m} - 1} \frac{(\v{\lambda}_m^{(m+1:M)})^{\hermitian} \m Q^{(m+1:M, m+1:M)} \v{\lambda}_m^{(m+1:M)}}{|\v \lambda_m^{(m)}|^2} \\ +~ \frac{1}{2^{\bar C_m} - 1} \sum_{k\in\cK} p_k \hat{\m V}_k^{(m,m)}. 
    \end{aligned}\label{equ:solve_Q3}
\end{equation}
Using the above tricks, we can obtain the solution $\m Q$ of Eq.~\eqref{equ:equ_Bm}. We denote the solution as $\m Q(\v p),$ because the solution depends on the given $\v p$. 

\subsubsection{Primal Fixed Point Iteration}
Based on the above discussion, we know that 
if $\v p$ is known, one can get $\m Q(\v p)$ such that Eqs.~\eqref{equ:slack_2}, \eqref{equ:pri_const2}, and \eqref{equ:pri_var2} hold. 
Plugging this solution into \eqref{equ:Q2p} gives 
\begin{equation}
	p_k = J_k(\v p) \triangleq p_k\left(\m Q(\v p), \v p\right), \fk. 
	\label{equ:raw_primal_iter}
\end{equation}
Define $J(\v p) = \left[J_1(\v p), J_2(\v p), \d, J_K(\v p)\right]^\transpose$. Then Eq. \eqref{equ:raw_primal_iter} becomes the problem of finding the fixed point of the function $ J(\cdot) $, i.e.,  
\begin{equation}
	\v p = J(\v p). 
	\label{equ:primal_fix_point}
\end{equation}
If one can find $ \v p $ such that \eqref{equ:primal_fix_point} holds, then $\v p$ and $\m Q(\v p)$ will satisfy Eqs.~\eqref{equ:slack_2}~and~\eqref{equ:pri_const1}--\eqref{equ:pri_var2}, and further $ \left\{ p_k \hat{\m V}_k \right\} $ and $ \m Q(\v p) $ will solve the enhanced primal feasibility conditions. 

We shall show in Theorem \ref{thm:alg_prop} that Eq. \eqref{equ:primal_fix_point} can be efficiently solved via the fixed point iteration 
\begin{equation}
	\v p^{(i+1)} = J(\v p^{(i)}), 
	\label{equ:pri_iter}
\end{equation}
and the convergence rate of the fixed point iteration is linear. 
Moreover, each evaluation of $ J(\cdot) $ is computationally cheap. 
More specifically, the computation mainly consists of two parts: 
first, the total complexity of the procedure described by \eqref{equ:QMM}--\eqref{equ:solve_Q3} is $ \mathcal{O}(M^3) $;  
second, after $ \v  h_k^\hermitian \hat{\m V}_j \v h_k $ for all $ j $ and $ k $ are computed, the complexity of computing $ p_k(\m Q, \v p) $ in \eqref{equ:Q2p} is $ \mathcal{O}(KM^2) $. 
As a result, the total complexity of each evaluation of $J(\cdot)$ is $ \mathcal{O}(M^2\max\{K,M\}) $. 
Due to the low per-iteration complexity and the linear convergence rate, the above fixed point iteration in \eqref{equ:pri_iter} provides an efficient way of solving the enhanced primal feasibility conditions after solving the dual problem. 

\subsection{Proposed Fixed Point Iteration (FPI) Algorithm} \label{ss:alg}

Now, we present the algorithm for solving problem~\eqref{SDR} (which is equivalent to problem (P) by Theorem \ref{thm:tight}). 
The algorithm first finds $\v \beta$ and $\Lm$ that satisfy the enhanced dual feasibility conditions; 
with found $\v \beta$ and $\Lm$ fixed, the algorithm then finds $\Vk$ and $\m Q$ that satisfy the rest of the enhanced KKT conditions. 
Hence, $\Vk$, $\m Q$, $\v \beta,$ and $\Lm$ together satisfy the enhanced KKT conditions and thus is a KKT point of problem~\eqref{SDR}. 
Since $\rk\left(\m V_k\right) = 1$ for all $k$, we can recover the optimal solution for problem (P). 
The pseudocodes of the proposed FPI algorithm are given in \algref{alg:the_alg}. 

\begin{algorithm}[H]
	\caption{Proposed FPI Algorithm for Solving Problem (P)}
	\begin{algorithmic}[1]
		\STATE Find $\v \beta$ and $\Lm$ that satisfy the enhanced dual feasibility conditions by performing the fixed point iteration in \eqref{equ:dual_fix_point} on $\v \beta$ until the desired error bound is met. 
		\STATE Find $\Vk$ and $\m Q$ that satisfy the rest of the enhanced KKT conditions by performing the fixed point iteration in \eqref{equ:primal_fix_point} on $\v p$ until the desired error bound is met. 
		\STATE Find $\v v_k$ such that $\m V_k = \v v_k \v v_k^\hermitian, \fk$. 
		\STATE \textbf{Output:} $\{\v v_k\}$~and~$\m Q.$
	\end{algorithmic}
	\label{alg:the_alg}
\end{algorithm}

While designed for solving the joint beamforming and compression problem, the proposed FPI algorithm can be extended to handle the joint beamforming and compression problem with additional constraints, e.g., per-antenna power constraints \cite{fan2024JointBeamformingCompression}.

\subsection{Theoretical Guarantees}
\label{ss:th_p_a}

\subsubsection{Convergence and Convergence Rate of FPI} We have the following convergence and convergence rate guarantee of the proposed FPI algorithm. 
\begin{theorem}
	\label{thm:alg_prop}
	If problem~\eqref{SDR} is strictly feasible, both the dual fixed point iteration \eqref{equ:dual_iter} and the primal fixed point iteration \eqref{equ:pri_iter} in the proposed FPI algorithm converge linearly. 
\end{theorem}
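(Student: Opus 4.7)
My plan is to establish that both the dual map $I:\mathbb{R}_{++}^K \to \mathbb{R}_{++}^K$ and the primal map $J:\mathbb{R}_{++}^K \to \mathbb{R}_{++}^K$ are standard interference functions in the sense of Yates that, in addition, possess the concavity/subhomogeneity structure needed to invoke the nonlinear Perron--Frobenius framework used in \cite{cai2011MaxminWeightedSINR}. Standard interference functions guarantee convergence to a unique positive fixed point when one exists, while the additional concavity on the positive cone upgrades this to geometric (i.e., linear) convergence via a Thompson-metric contraction argument. Strict feasibility of problem~\eqref{SDR} is used exactly to guarantee that any iterate stays inside a compact subset of $\mathbb{R}_{++}^K$, so that a positive fixed point exists and the Perron--Frobenius-type conclusions apply.

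For the dual iteration \eqref{equ:dual_iter}, I would first analyze the closed-form mapping $\v\beta \mapsto \{\m\Lambda_m(\v\beta)\}$ generated by \eqref{equ:Gamma2Lambda1}--\eqref{equ:Gamma2Lambdam}. Because $\m\Gamma(\v\beta) = \m I + \sum_k \beta_k \v h_k\v h_k^\hermitian$ is linear and monotone in $\v\beta$ in the PSD order, I would show by induction on $m=1,\dots,M$ that each residual matrix $\m\Gamma_m(\v\beta)$ inherits monotonicity, and that the recursively extracted rank-one block in $\m\Lambda_m(\v\beta)$ is coordinate-wise nondecreasing in $\v\beta$. This implies that $\m C_k(\v\beta,\{\m\Lambda_m(\v\beta)\})$ is monotone in $\v\beta$, so $\v h_k^\hermitian \m C_k^{-1}\v h_k$ is anti-monotone and $I_k(\v\beta)=\bar\gamma_k/(\v h_k^\hermitian \m C_k^{-1}\v h_k)$ is monotone. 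Scalability $I(\alpha\v\beta)<\alpha I(\v\beta)$ for $\alpha>1$ comes from the fact that the identity $\m I$ in $\m C_k$ plays the role of a strictly positive, $\v\beta$-independent ``noise'' term, so scaling $\v\beta$ by $\alpha$ scales the $\v\beta$-dependent part but not $\m I$.

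For the primal iteration \eqref{equ:pri_iter}, the argument is parallel and arguably cleaner: the explicit formulas \eqref{equ:QMM}--\eqref{equ:solve_Q3} show that the map $\v p \mapsto \m Q(\v p)$ is linear with nonnegative coefficients (the coefficients depend on the already-fixed dual variables $\{\v\lambda_m\}$ and on $\{\hat{\m V}_k\}$), hence monotone and $1$-homogeneous in $\v p$. Substituting into \eqref{equ:Q2p} yields $J_k(\v p) = \bar\gamma_k(\text{linear in }\v p + \sigma_k^2)/(\v h_k^\hermitian \hat{\m V}_k\v h_k)$; positivity, monotonicity follow immediately, while the strict inequality in scalability again comes from the constant term $\sigma_k^2>0$. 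Existence of a positive fixed point is furnished by the optimal primal solution constructed from Theorem~\ref{thm:tight}, which exists under strict feasibility.

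Once positivity, monotonicity and strict scalability of $I$ and $J$ are in hand on $\mathbb{R}_{++}^K$, linear convergence will follow by invoking Krause's nonlinear Perron--Frobenius theorem (as done in \cite{cai2011MaxminWeightedSINR}) or, equivalently, by showing that $I$ and $J$ are strict contractions in the Thompson part metric on the positive cone, with contraction rate strictly less than one determined by the ratio between the constant and the homogeneous parts. The main technical obstacle I anticipate is the monotonicity/subhomogeneity analysis of the Schur-complement-style recursion defining $\m\Lambda_m(\v\beta)$: because the $m$-th step of \eqref{equ:Gamma2Lambdam} divides by the diagonal entry $\m\Gamma_m(\v\beta)^{(m,m)}$, preserving monotonicity through the recursion requires a careful induction that exploits the rank-one structure imposed by \eqref{equ:dual_var2} and the fact that the subtraction performed to pass from $\m\Gamma_m$ to $\m\Gamma_{m+1}$ removes exactly the positive semidefinite rank-one contribution of $\m\Lambda_m$. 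Once this induction is in place, everything else reduces to standard manipulations in the Yates/Krause framework.
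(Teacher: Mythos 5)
Your overall plan agrees with the paper on the convergence half: the paper also proves that $I(\cdot)$ and $J(\cdot)$ are standard interference functions in the sense of Yates and invokes Yates' framework for convergence, and it also proves the properties of $\{\m\Lambda_m^{(m,m)}(\v\beta)\}$ by an inductive Schur-complement analysis (the paper packages your ``careful induction'' into the maps $S_\eta$ and $T_m$ of Appendix~\ref{apd:solve_Lambda}--\ref{apd:useful_properties} and shows each $S_\eta$ is concave, homogeneous, and strictly nonnegative, so that monotonicity and strict subhomogeneity propagate automatically through the composition). For the primal rate your argument is identical to the paper's: $\m Q(\cdot)$ is linear with nonnegative coefficients, hence $J$ is affine, and an affine monotone iteration that converges does so geometrically.

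Where your proposal diverges is the linear-rate argument for the dual iteration, and there it is both a genuinely different route and the place where the sketch has a real gap. The paper does \emph{not} derive the rate from a Krause/nonlinear--Perron--Frobenius theorem or from a global Thompson-metric contraction; it proves an explicit asymptotic bound via Lemma~\ref{lem:cr1} (a local Thompson-metric estimate) and Lemma~\ref{lem:cr2} (a Rayleigh-quotient computation specific to $I_k(\v\beta)=\bar\gamma_k/(\v h_k^\hermitian\m C_k(\v\beta)^{-1}\v h_k)$), yielding the rate $\lambda(\v\beta^*)/(1+\lambda(\v\beta^*))$. Your claim that $I$ and $J$ ``are strict contractions in the Thompson part metric on the positive cone, with contraction rate strictly less than one'' overstates what follows from the standard-interference properties alone: monotonicity plus strict subhomogeneity only make a map \emph{nonexpansive} in the Thompson metric, not uniformly contractive, so some additional structure is indispensable to get a rate. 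The paper obtains that extra structure from the specific Rayleigh-quotient form of $I_k$, whereas your Krause-type route would obtain it from concavity of $I$ (which does hold -- $I_k(\v\beta)$ is a pointwise minimum of functions that are affine in $\m C_k(\v\beta)$, and $\m C_k(\v\beta)$ is concave because each $T_m$ is a composition of concave Schur-complement maps). You mention concavity as a needed ingredient but never sketch how it would be established, and you misidentify the key obstacle as ``preserving monotonicity through the division by $\m\Gamma_m^{(m,m)}$'' -- that part is handled cleanly once one notices $\m\Gamma_{m+1}$ is a (relaxed) Schur complement of $\m\Gamma_m$. The real work, on either route, is the step from SI properties to a quantitative rate; if you go the Krause route, you must actually prove concavity of $\m\Lambda_m^{(m,m)}(\cdot)$ (and hence of $\m C_k(\cdot)$) and then appeal to a concave-map geometric-convergence theorem, rather than asserting a global Thompson contraction. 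Also a small correction: strict feasibility is used to guarantee that a dual fixed point (the KKT multiplier) \emph{exists}, and then boundedness of iterates follows from Yates' theorem; it is not, as you wrote, that strict feasibility directly gives a compact invariant region.
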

\begin{proof}
	See Appendix~\ref{apd:alg_prop}. 
\end{proof} 

The following convergence rate results have been shown for the dual fixed point iteration in Appendix~\ref{apd:alg_prop}. 
Let $\tilde{\v \beta} = [\tilde \beta_1, \tilde \beta_2, \d, \tilde \beta_K]^\transpose$. 
Define the metric $ \mu: \mathbb{R}_{++}^K \times \mathbb{R}_{++}^K \rightarrow \mathbb{R}_+^K $ as 
\begin{equation}
	\mu(\v \beta, \tilde{\v \beta}) = \max_{k \in \cK} \left| \log_{\mathrm{e}} \left( \frac{\beta_k}{\tilde \beta_k} \right) \right|,
	\label{equ:metric}
\end{equation}
which is proposed in \cite{nuzman2007ContractionApproachPower}. 
Under this metric, the asymptotic linear convergence rate of the dual fixed point iteration \eqref{equ:dual_iter} is given by 
\begin{equation}
	\limsup_{i\rightarrow \infty}\frac{\mu(\v \beta^{(i+1)}, \v \beta^*)}{\mu(\v \beta^{(i)}, \v \beta^*)} \leq \frac{\lambda(\v \beta^*)}{1 + \lambda(\v \beta^*)}, 
	\label{equ:c_r_limit}
\end{equation}
where $\v \beta^*$ is the fixed point of $I(\cdot)$ and 
\begin{equation}
	\lambda(\v \beta) = \max_{k \in \cK} \left\{ \left\|\m C_k(\v \beta, \left\{ \m \Lambda_m(\v \beta) \right\}) - \m I \right\|_2\right\}. 
	\label{equ:def_lambda}
\end{equation}
Note that the convergence rate in \eqref{equ:c_r_limit} is established within Thompson's metric space, which differs from the conventional notion of the convergence rate in the Euclidean space. 
However, as demonstrated in \cite[Proposition 2]{piotrowski2022FixedPointIteration}, geometric convergence in Thompson's metric space implies geometric convergence in Euclidean settings. This result enables the interpretation of the convergence behavior across these two distinct metric spaces.
Besides, the linear convergence rate of the primal fixed point iteration \eqref{equ:pri_iter} is governed by the spectral radius of $ \m G $ in $ J(\cdot) $, where the entries of $ \m G $ are given by 

\begin{equation}
	G_{kj} = \left\{
	\begin{aligned}
		&\frac{\bar \gamma_k \v h_k^\hermitian \m Q(\v{e}_k) \v h_k }{\v h_k^\hermitian \hat{\m V}_k \v h_k}, & \text{ if } j=k, \\
		&\frac{\bar \gamma_k \left(\v h_k^\hermitian \hat{\m V}_j \v h_k + \v h_k^\hermitian \m Q(\v{e}_k) \v h_k \right)}{\v h_k^\hermitian \hat{\m V}_k \v h_k}, &\text{ otherwise}. 
	\end{aligned}
	\right.
	\label{equ:entry_G}
\end{equation}

The above convergence rate results shed useful insights into the convergence  behavior of the proposed FPI algorithm, and in particular the efficiency of the proposed FPI algorithm is determined by the given problem instance. 
In general, the proposed FPI algorithm will converge slower when the considered problem approaches the singular boundary\footnote{The singular boundary here means the boundary of the achievable SINR region by problem~\eqref{SDR} with given channel conditions and compression capacities. }. 
To be specific, as we increase the SINR targets $\{\bar \gamma_k\}$ in problem~\eqref{SDR} (with all the other parameters being unchanged), the corresponding problem will approach the singular boundary. 
In the dual fixed point iteration \eqref{equ:dual_iter}, as $\{\bar \gamma_k\}$ increases, $\v\beta^*$, which depends on $\{\bar \gamma_k\}$, will increase.
Combining this with the fact that $\lambda(\v \beta)$ in \eqref{equ:def_lambda} is an increasing function of $\v\beta$, we have that $\lambda(\v \beta^*)$ will increase, and hence the right-hand side of \eqref{equ:c_r_limit} will increase as well. 
This suggests that the convergence rate of the dual fixed point iteration \eqref{equ:dual_iter} becomes slower as the problem approaches the singular boundary.
The same happens for the primal fixed point iteration \eqref{equ:pri_iter}. 
When the SINR targets $\{\bar \gamma_k\}$ in problem~\eqref{SDR} increase, the spectral radius of $\m G$ in \eqref{equ:entry_G} increases, and the convergence rate of the primal fixed point iteration \eqref{equ:primal_fix_point} becomes slower. 
This shows that the primal fixed point iteration \eqref{equ:pri_iter} will become slower as the problem approaches the singular boundary. 

\subsubsection{Global Optimality of FPI and Infeasibility Detection} 
First, we have the following global optimality guarantee of the proposed FPI algorithm. 
\begin{theorem}
	If problem~\eqref{SDR} is strictly feasible, then the FPI algorithm returns the optimal solution of problem (P). 
\label{thm:global}
\end{theorem}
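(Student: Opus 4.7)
The plan is to combine the established results in a clean chain: convergence (Theorem~2) supplies a pair of fixed points, the construction in Sections~\ref{ss:dual}--\ref{ss:primal} turns these into a primal--dual pair satisfying the enhanced KKT conditions, the sufficiency of KKT for a convex program certifies global optimality for the SDR, and tightness (Theorem~1) lifts this to a global optimum of (P).

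First I would invoke Theorem~2 under the strict feasibility assumption: the dual fixed point iteration \eqref{equ:dual_iter} converges to some $\v\beta^*\in\mathbb{R}_{++}^K$, and by the closed-form construction \eqref{equ:Gamma2Lambda1}--\eqref{equ:Gamma2Lambdam} we obtain $\{\m\Lambda_m^*\}=\{\m\Lambda_m(\v\beta^*)\}$. By construction these satisfy \eqref{equ:dual_const2}--\eqref{equ:dual_var2}, while the fixed point equation $\v\beta^*=I(\v\beta^*)$ together with \eqref{equ:L2b} gives exactly \eqref{equ:dual_var1}--\eqref{equ:dual_const1}. Thus $(\v\beta^*,\{\m\Lambda_m^*\})$ is enhanced dual feasible.

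Next, keeping $(\v\beta^*,\{\m\Lambda_m^*\})$ fixed, apply the second part of Theorem~2: the primal fixed point iteration \eqref{equ:pri_iter} converges to some $\v p^*$, which in turn determines $\m Q^*=\m Q(\v p^*)$ through \eqref{equ:QMM}--\eqref{equ:solve_Q3} and $\m V_k^*=p_k^*\hat{\v v}_k\hat{\v v}_k^\hermitian$ with $\hat{\v v}_k$ from \eqref{equ:pri_beamforming_direction}. I would then verify, step by step, that these satisfy all of \eqref{equ:slack_1}--\eqref{equ:pri_var2}: \eqref{equ:pri_var1} holds because each $\m V_k^*$ is a rank-one positive semidefinite matrix by construction; \eqref{equ:pri_const1} follows from $\v p^*=J(\v p^*)$ combined with \eqref{equ:Q2p}; \eqref{equ:slack_1} follows from $(\m C_k-\beta_k^*/\bar\gamma_k\,\v h_k\v h_k^\hermitian)\hat{\v v}_k=\m 0$, which is an immediate consequence of \eqref{equ:pri_beamforming_direction} and \eqref{equ:L2b}; \eqref{equ:slack_2} and \eqref{equ:pri_const2} follow from the derivation of $\m Q^*$ via \eqref{equ:equ_Bm} together with the rank-one factorization $\m\Lambda_m^*=\v\lambda_m\v\lambda_m^\hermitian$; and \eqref{equ:pri_var2} can be seen from the recursive block construction, using the Schur complement identity implicit in \eqref{equ:solve_Q1}--\eqref{equ:solve_Q3}.

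At this point $(\{\m V_k^*\},\m Q^*;\v\beta^*,\{\m\Lambda_m^*\})$ satisfies the enhanced KKT conditions \eqstp{} and \eqstd{}, which are a strengthening of the standard KKT conditions of problem~\eqref{SDR}. Since \eqref{SDR} is a convex program and Slater's condition holds by strict feasibility, the classical KKT conditions are sufficient for global optimality \cite[Section 5.2]{boyd2004ConvexOptimization}, so $(\{\m V_k^*\},\m Q^*)$ solves \eqref{SDR} globally. Finally, Theorem~\ref{thm:tight} guarantees that the rank-one decomposition $\v v_k^*$ with $\m V_k^*=\v v_k^*(\v v_k^*)^\hermitian$ yields a feasible and hence optimal point of (P) with the same objective value, which is precisely the output of Algorithm~\ref{alg:the_alg}. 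The main obstacle is the last part of the verification, namely confirming that the recursively defined $\m Q^*$ is positive semidefinite and that the complementary slackness \eqref{equ:slack_2} holds block by block; this requires carefully tracking how $\m\Lambda_m^*$ annihilates the appropriate blocks of $\m B_m$ and appealing to Appendix~\ref{apd:PD} for the positive definiteness of $\m Q^*$.
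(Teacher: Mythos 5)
Your proof is correct and takes essentially the same route as the paper's: harvest the two fixed points from Theorem~2, verify that the constructed primal--dual quadruple satisfies every enhanced KKT condition, invoke KKT sufficiency for the convex SDR, and lift via Theorem~1. The paper's own proof is terser but makes the same chain of implications. One small correction worth flagging: the positive semidefiniteness of the constructed $\m Q^* = \m Q(\v p^*)$ (i.e., that \eqref{equ:pri_var2} holds) is not what Appendix~\ref{apd:PD} establishes --- that appendix shows the \emph{optimal} $\m Q$ to \eqref{SDR} is positive \emph{definite} almost surely, which is the hypothesis underlying the enhancement $\tr(\m Q\m D)=0 \Rightarrow \m D = \m 0$. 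The fact that the recursively built $\m Q(\v p)$ lands in $\mathcal S_+^M$ is instead proved in Lemma~\ref{lem:Q} (Appendix~\ref{ss:useful_properties_Q} of the Supplementary Material), via the Schur-complement conditions \eqref{equ:cond_1}--\eqref{equ:cond_2}. Citing the right lemma closes the gap you identified as the ``main obstacle.''
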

\begin{proof}
	Let $ \v p $ be the converged solution of the primal fixed point iteration \eqref{equ:pri_iter}, and $ \v \beta $ be the converged solution of the dual fixed point iteration \eqref{equ:dual_iter} in the FPI algorithm. 
	First, for given $ \v p $, $ \m Q(\v p) $ in the FPI algorithm is obtained such that Eqs. \eqref{equ:pri_const2}--\eqref{equ:pri_var2} hold; 
	for given $ \v \beta $, $ \{ \m \Lambda_m(\v \beta) \} $ in the FPI algorithm are obtained such that Eqs. \eqref{equ:dual_const2} and \eqref{equ:dual_var2} hold. 
	Second, $ \{ \hat{\m V}_k \} $ are solved such that Eqs. \eqref{equ:slack_1} and \eqref{equ:pri_var1} hold. 
	Furthermore, the convergence of $ \v \beta $ and $ \v p $ shows that Eqs. \eqref{equ:dual_const1}, \eqref{equ:dual_var1}, and \eqref{equ:pri_const1} hold. 
	In all, the primal variables $ \{ p_k \hat{\m V}_k \}$ and $\m Q(\v p)$, together with the dual variables $\v \beta$ and $\{\m \Lambda_m(\v \beta)\}$, satisfy all conditions \eqsta{} in the enhanced KKT system, and hence is a global solution of problem~\eqref{SDR} (which is equivalent to problem (P)). 
	
\end{proof}

In Theorem 3, we assume that problem~\eqref{SDR} is feasible. 
A natural question is how the FPI algorithm behaves when problem~\eqref{SDR} is infeasible. 
According to the weak duality, any dual feasible solution provides a lower bound on the optimal value of problem~\eqref{SDR}. 
By the monotonicity of $I(\cdot)$ shown in Appendix~\ref{apd:alg_prop}, if we initialize the dual fixed point iteration \eqref{equ:dual_iter} with $ \v \beta^{(0)} $ satisfying $ \v \beta^{(0)} \leq I(\v \beta^{(0)})$ and $ \v \beta^{(0)} \neq I(\v \beta^{(0)}) $, e.g., $\v \beta^{(0)} = \v{0} $, we have 
\begin{equation*}
	I(\v \beta^{(i)}) = \v \beta^{(i+1)} \geq \v \beta^{(i)} \text{~and~} \v \beta^{(i+1)} \neq \v \beta^{(i)}, \text{~for~all~} i. 
\end{equation*}
This shows that $(\v \beta^{(i)},  \{\m \Lambda_m(\v \beta^{(i)})\} )$ is a dual feasible solution, and hence the corresponding strictly increasing dual objective value serves as a lower bound on the optimal value of problem~\eqref{SDR}. 
If we observe that this dual objective value is greater than a preset upper bound in practice (e.g., the system power limit), then we claim that problem~\eqref{SDR} is infeasible. 
In this sense, our proposed FPI algorithm can automatically detect the infeasibility of problem~\eqref{SDR} when the problem is infeasible. 
 
In summary, two key features of our proposed FPI algorithm are as follows: 
(1) it is guaranteed to find the global solution of problem~\eqref{SDR} when the problem is feasible and is able to detect its infeasibility when the problem is infeasible (with an appropriate initialization); 
(2) it enjoys a linear convergence rate, and it converges faster when the problem instance is far away from the singular boundary (compared with the problem instance close to the singular boundary). 

\section{Numerical Results}

In this section, the performance of the proposed FPI algorithm is evaluated and compared with existing SOTA algorithms. 
Unless otherwise specified, we consider the following default scenario. 
We consider a downlink cooperative cellular network with $M = 7$ single-antenna relays serving $K = 8$ single-antenna users, where the relays are located at the center of the wrapped-around hexagonal cells \cite{shen2018FractionalProgrammingCommunication} and the users are randomly placed.
The distance between two neighbouring relays is $150$~m and the height of the relays is $30$~m. 
The channel between the users and the relays are generated by the Rayleigh fading model. 
Following \cite{liu2017FronthaulAwareDesignCloud}, we set the pathloss as $140.7+36.7 \log_{10}(d)$ dB, where $d$ is the distance in kilometer, the noise power spectral density as $-169$ dBm/Hz, and the total system bandwidth as $20$ MHz. 
In the following numerical experiments, all the users' SINR targets, denoted by $\bar \gamma = 4$ dB, and all the fronthaul capacities, denoted by $\bar C = 3$, are considered identical, respectively.  
These parameter settings ensure that the resulting problem is feasible. 
Furthermore, we initialize the dual fixed point iteration \eqref{equ:dual_iter} and the primal fixed point iteration \eqref{equ:pri_iter} with zero vectors in the proposed FPI algorithm. 

\subsection{Behaviors of Proposed FPI Algorithm}

In this subsection, we first study the practical behaviors of the proposed FPI algorithm. 
We are interested in the convergence rate of the proposed FPI algorithm and how it behaves when the corresponding problem instances are close to the singular boundary. 

\begin{figure}[t]
	\centering
	\subfloat[]{\includegraphics[width=0.23\textwidth]{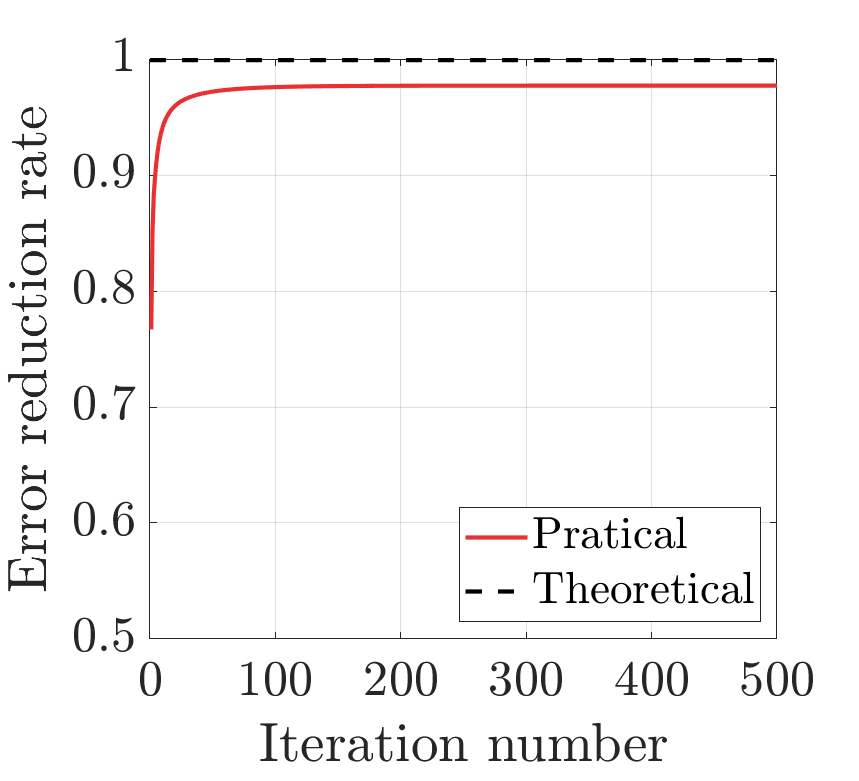}
	\label{fig:b1_dual}}
	\subfloat[]{\includegraphics[width=0.23\textwidth]{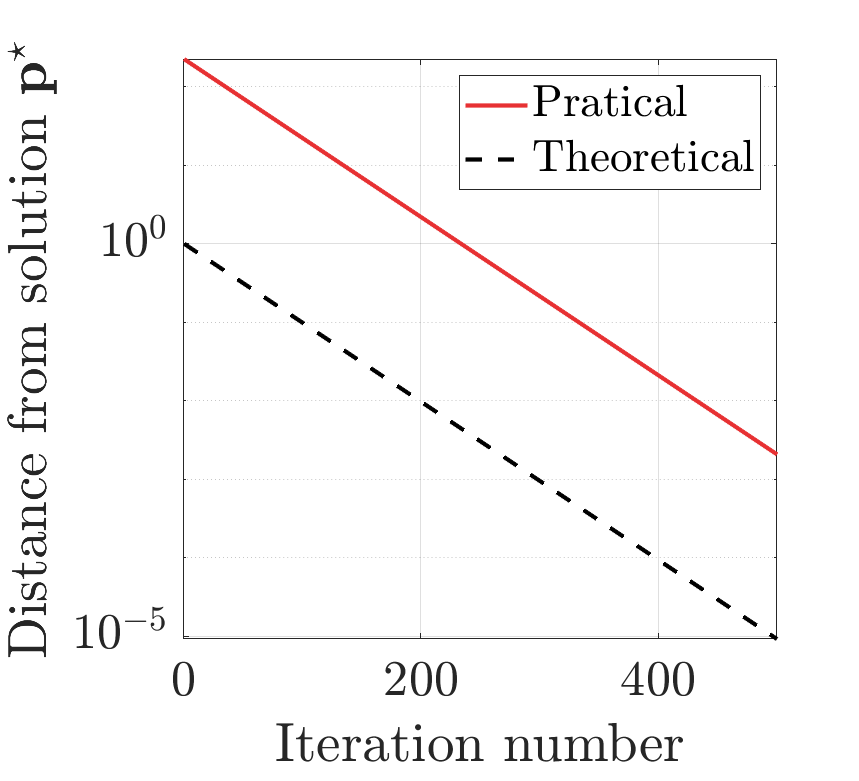}
	\label{fig:b1_pri}}\\
	\caption{(a) The convergence rate of the dual fixed point iteration \eqref{equ:dual_iter} with $\bar \gamma = 4$ dB; 
	(b) The convergence rate of the primal fixed point iteration \eqref{equ:pri_iter} with $\bar \gamma = 4$ dB. }
	\label{fig:b1}
\end{figure}

Fig.~\ref{fig:b1} shows the linear convergence rate of the proposed FPI algorithm. 
In particular, Fig.~\ref{fig:b1}~\subref{fig:b1_dual} shows the linear convergence rate of the dual fixed point iteration \eqref{equ:dual_iter}. 
In Fig.~\ref{fig:b1}~\subref{fig:b1_dual}, the error reduction rate $ \frac{\mu(\v \beta^{(i+1)}, \v \beta^*)}{\mu(\v \beta^{(i)}, \v \beta^*)} $ is plotted against the iteration number $ i $ and compared with the theoretical upper bound $\frac{\lambda(\v \beta^*)}{1 + \lambda(\v \beta^*)}$ given in \eqref{equ:c_r_limit} for a given SINR target $\bar \gamma = 4 $~dB. 
We can observe from the figure that the convergence rate of the dual fixed point iteration \eqref{equ:dual_iter} is indeed linear, albeit the upper bound in \eqref{equ:c_r_limit} is conservative and does not match the practical convergence rate. 
Fig.~\ref{fig:b1}~\subref{fig:b1_pri} verifies the linear convergence rate of the primal fixed point iteration, and the theoretical and practical convergence rates are well-matched.

Table.~\ref{tb:1} shows the practical asymptotic linear convergence rate and the theoretical upper bound $ \frac{\lambda(\v \beta^*)}{1 + \lambda(\v \beta^*)} $ given in \eqref{equ:c_r_limit} for different SINR targets $\bar \gamma$. 
As is shown in the table, when the SINR target $\bar \gamma$ increases (i.e. the feasible problem approaches the singular boundary), the theoretical and practical asymptotic convergence rates become close to each other, and both of them increase to one. 
This verifies the convergence behavior analysis below \eqref{equ:entry_G} and shows that the theoretical upper bound $\frac{\lambda(\v \beta^*)}{1 + \lambda(\v \beta^*)}$ is useful in characterizing the (intrinsic) difficulty of the problem. 

\begin{table}[!t]
    \centering 
	\caption{Theoretical and practical convergence rate comparison for different SINR targets $\bar \gamma$}
    \label{tb:1}
    \begin{tabular}{llllll}
    \hline
    SINR target $\bar \gamma$ (dB) & 3.6  & 3.7  & 3.8   & 3.9   & 4.0  \\ \hline
    Theoretical     & 0.995 & 0.996 & 0.997 & 0.998 & 0.999 \\
    Practical     & 0.898 & 0.916 & 0.935 & 0.956 & 0.977 \\ \hline
    \end{tabular}
\end{table}

Now we look at the behaviors of the proposed FPI algorithm when the problem instances are close to the singular boundary. 
Fig.~\ref{fig:b2} shows the behavior of the FPI algorithm in both feasible and infeasible cases. 
In both Figs.~\ref{fig:b2}~\subref{fig:b2_fea}~and~\ref{fig:b2}~\subref{fig:b2_infea}, three problem instances with different SINR targets $ \bar \gamma$ are plotted. 
Figs.~\ref{fig:b2}~\subref{fig:b2_fea} plots the dual/primal objective values versus the iteration number when the problems are feasible. 
The optimal values of the three problem instances are plotted in the gray dotted line, respectively in Fig.~\ref{fig:b2}~\subref{fig:b2_fea}. 
As expected (and observed from Fig.~\ref{fig:b2}~\subref{fig:b2_fea}), both the dual and primal objective values monotonically increase and finally converge to the optimal value. 
Fig.~\ref{fig:b2}~\subref{fig:b2_infea} plots the dual objective values versus the iteration number when the problems are infeasible. 
In this case, as analyzed below Theorem \ref{thm:global} and observed in Fig.~\ref{fig:b2}~\subref{fig:b2_infea}, the dual objective values monotonically increase to infinity. 
It can be seen clearly from Fig.~\ref{fig:b2} that the proposed FPI algorithm converges/diverges slower when the problem instances are close to the singular boundary in both feasible and infeasible cases. 

\begin{figure}[t]
	\centering
	\subfloat[]{\includegraphics[width=0.23\textwidth]{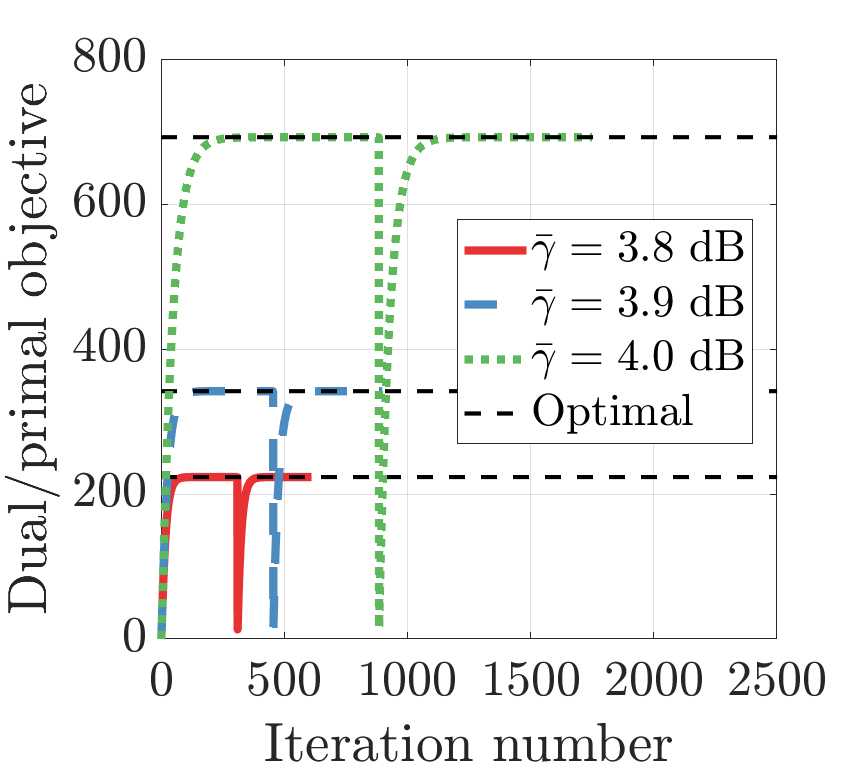}
	\label{fig:b2_fea}}
	\subfloat[]{\includegraphics[width=0.23\textwidth]{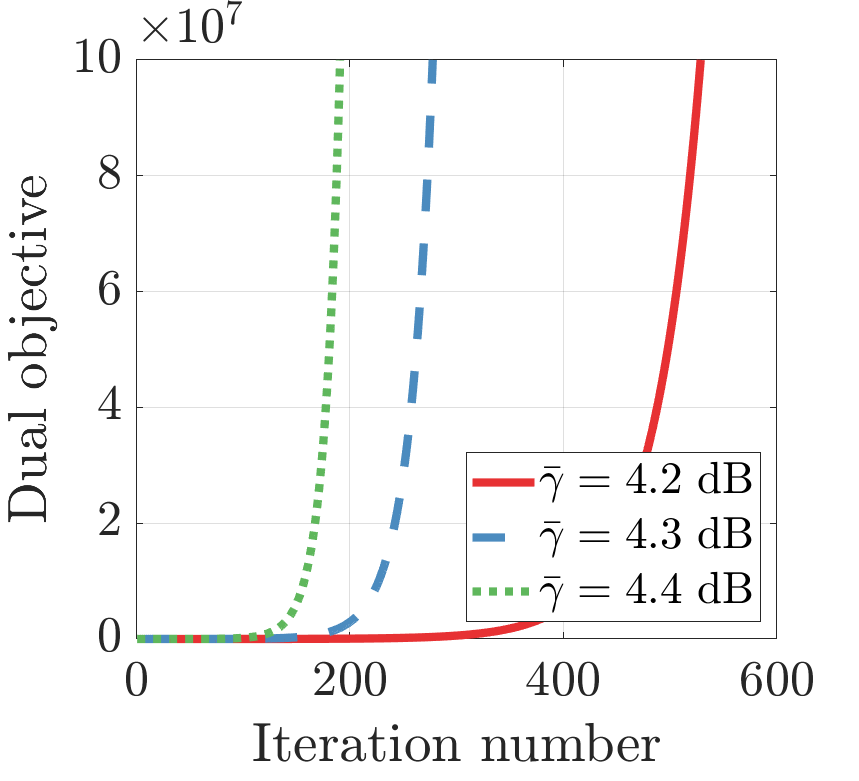}
	\label{fig:b2_infea}}\\
	\caption{(a) The dual/primal objective values versus the iteration number for feasible problem instances with different SINR targets $ \bar \gamma $; 
	(b) The dual objective values versus the iteration number for infeasible problem instances with different SINR targets $ \bar \gamma $. }
	\label{fig:b2}
\end{figure}
 
\subsection{Comparison with SOTA Algorithms}

In this subsection, to illustrate the efficiency of the proposed FPI algorithm, we compare it with the following three SOTA benchmarks: 
\begin{itemize}
	\item \emph{SDR}: We call CVX \cite{CVX} to directly solve the SDR in \eqref{SDR}. 
	This benchmark is helpful in verifying the tightness of the corresponding SDR (i.e., Theorem~\ref{thm:tight}) as well as the global optimality (i.e., Theorem~\ref{thm:global}). 

	\item \emph{UD} \cite{liu2021UplinkdownlinkDualityMultipleaccess}: 
	The UD algorithm first uses a fixed point iteration to solve the dual uplink problem (which is obtained by transforming the Lagrangian dual problem of problem (P) with fixed beamformers); then calls CVX to solve the reduced downlink problem with fixed beamformers (which is convex). The UD algorithm is also guaranteed to find the global solution of problem (P). 

	\item \emph{SCA}\cite{sun2017MajorizationminimizationAlgorithmsSignal}: The SCA algorithm solves problem (P) by iteratively solving a sequence of convex approximation subproblems, and each convex approximation subproblem is obtained by linearizing the SINR constraints in problem (P) at the current point. 
	This benchmark shows the performance when no structure of the problem is exploited. 
	Therefore, it is useful to compare the performance gain by utilizing the special structure of the problem. 
\end{itemize}

\begin{figure}[!t]
	\centering
	\subfloat[]{\includegraphics[width=0.4\textwidth]{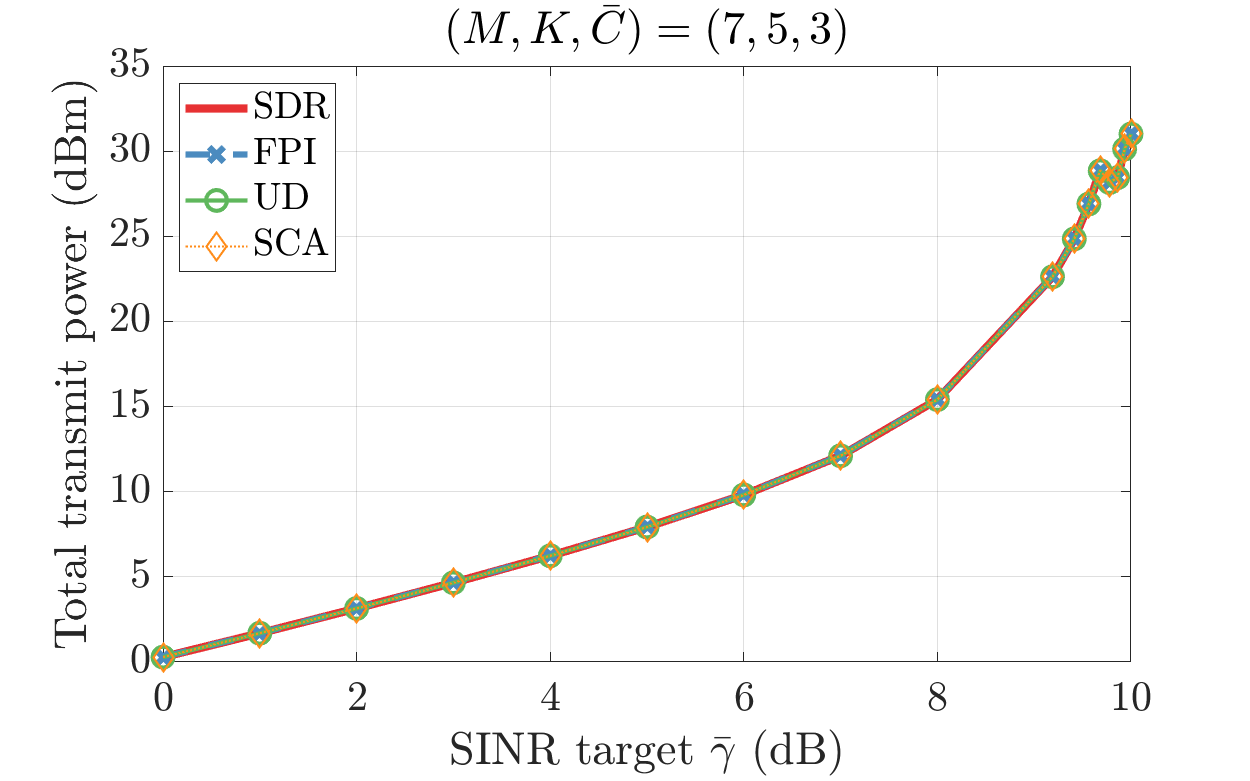}
		\label{fig:753}}\\
	\subfloat[]{\includegraphics[width=0.4\textwidth]{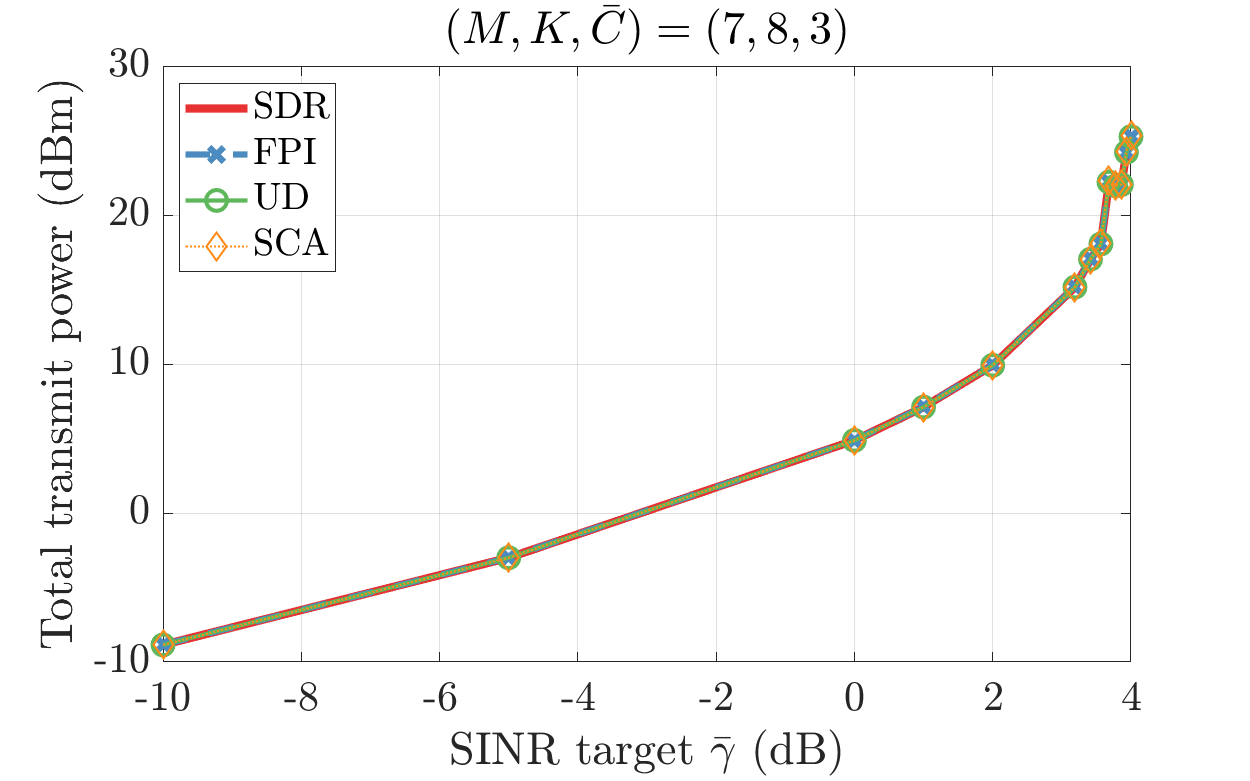}
		\label{fig:783}}\\
	\subfloat[]{\includegraphics[width=0.4\textwidth]{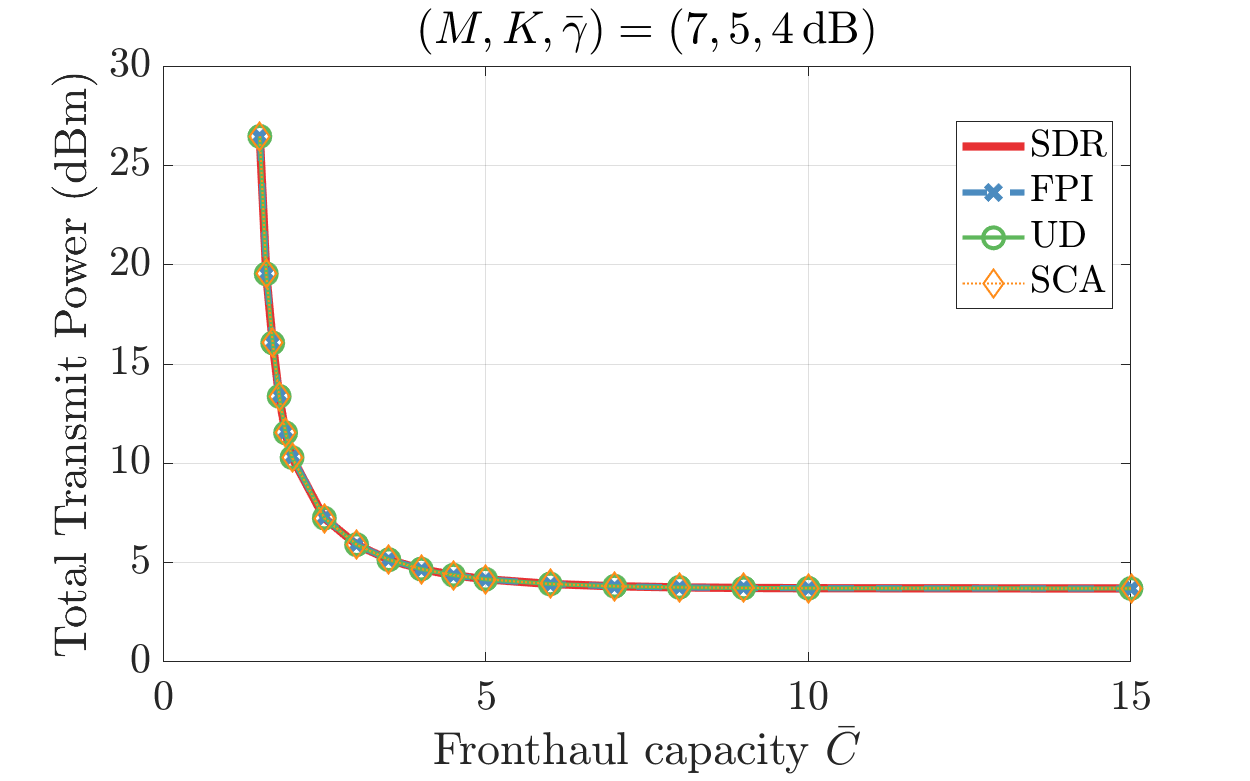}
		\label{fig:754}}\\
	\subfloat[]{\includegraphics[width=0.4\textwidth]{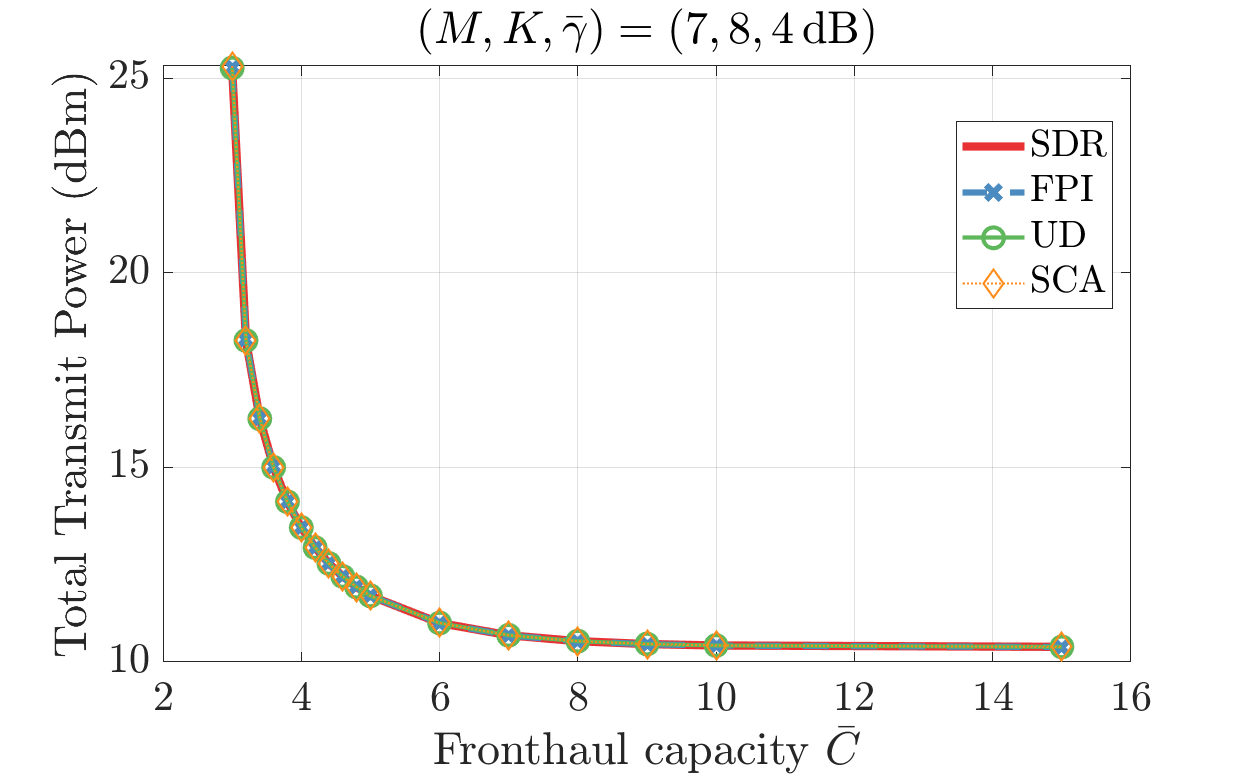}
		\label{fig:784}}\\
	\caption{The average objective value (i.e., the total transmit power) of different algorithms with different system parameters $(M, K, \bar \gamma, \bar C)$. }
	\label{fig:SINR}
\end{figure}

Figs.~\ref{fig:SINR}~and~\ref{fig:SINR_T} show the performance comparison of the proposed FPI algorithm and the three benchmarks. 
In the following numerical experiments, each data point is obtained by averaging over $ 200 $ channel realizations. 
Fig.~\ref{fig:SINR} plots the average objective values at the solutions obtained by four different algorithms. 
We can see from Fig.~\ref{fig:SINR} that all the four algorithms return the same solution. 
This verifies the tightness result of the SDR (i.e., Theorem \ref{thm:tight}) and the global optimality of the solution returned by the proposed algorithm (i.e., Theorem~\ref{thm:global}). 

Fig.~\ref{fig:SINR_T} plots the average CPU time taken by different algorithms. 
From Fig.~\ref{fig:SINR_T}, we can observe that the SCA algorithm performs the worst (even though they are initialized with a point that is close to the optimal solution). 
The SDR algorithm generally performs better than the SCA algorithm in terms of the CPU time. 
It is also observed that the CPU time of the UD algorithm and the FPI algorithm both increases as the problem approaches the singular boundary (by increasing the SINR target and/or decreasing the fronthaul capacities), which is consistent with our analysis in Section IV-E. 
In all, the UD algorithm has a similar performance as that of the SDR algorithm, and the proposed FPI algorithm significantly outperforms the other algorithms in terms of the CPU time. 
Figs.~\ref{fig:SINR}~and~\ref{fig:SINR_T} together show the global optimality and high efficiency of our proposed FPI algorithm.

\begin{figure}[!t]
	\centering
	\subfloat[]{\includegraphics[width=0.4\textwidth]{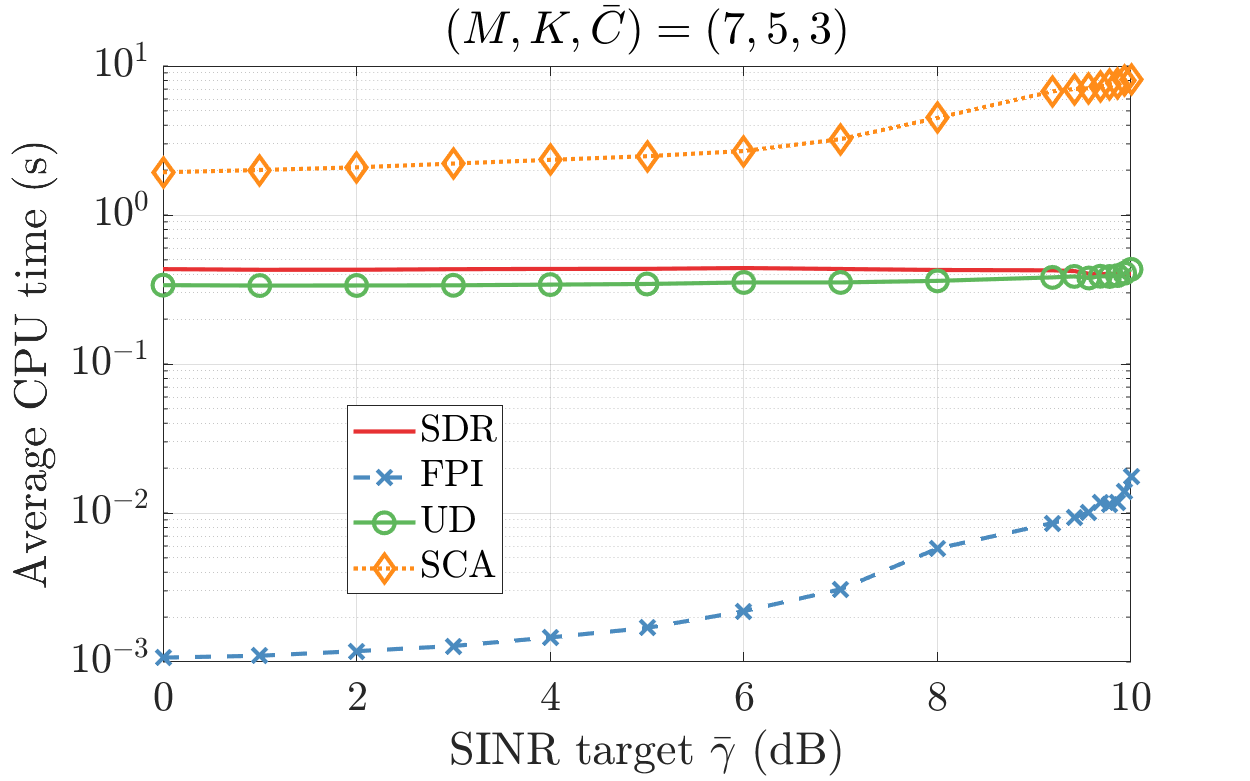}
	\label{fig:753_T}}\\
	\subfloat[]{\includegraphics[width=0.4\textwidth]{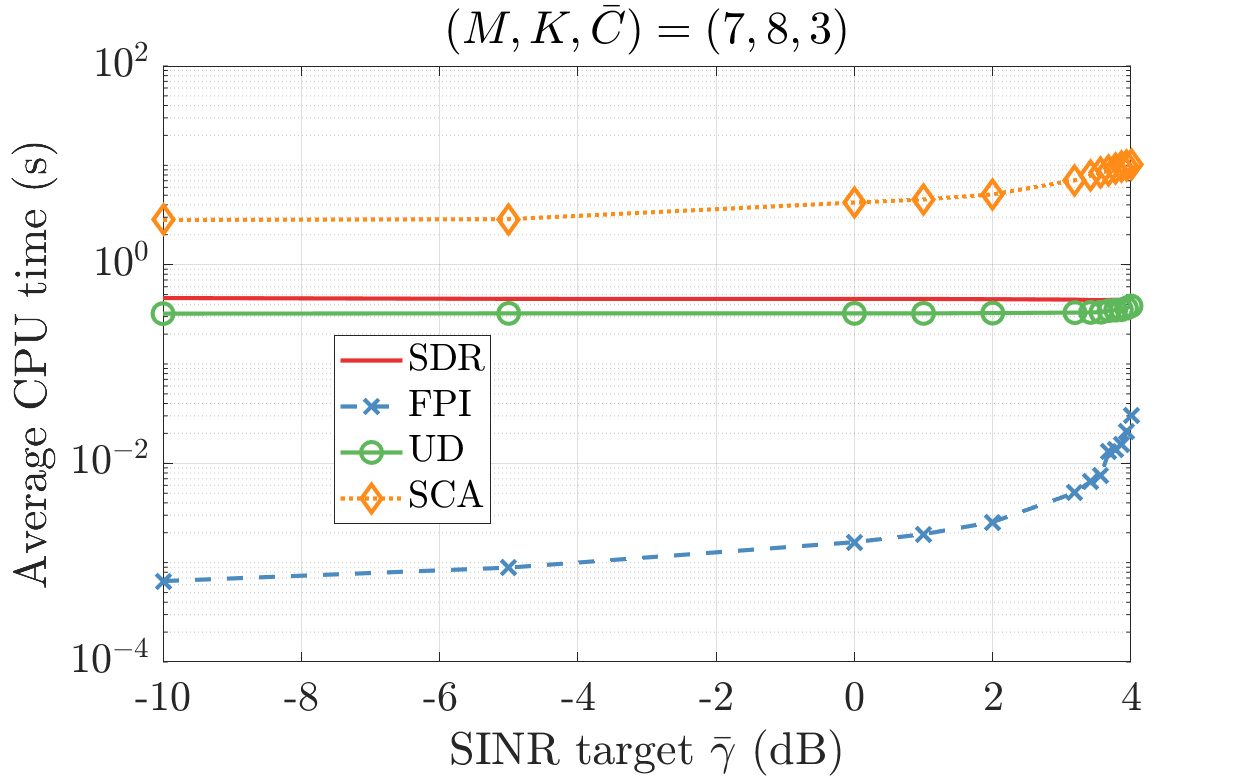}
	\label{fig:783_T}}\\
	\subfloat[]{\includegraphics[width=0.4\textwidth]{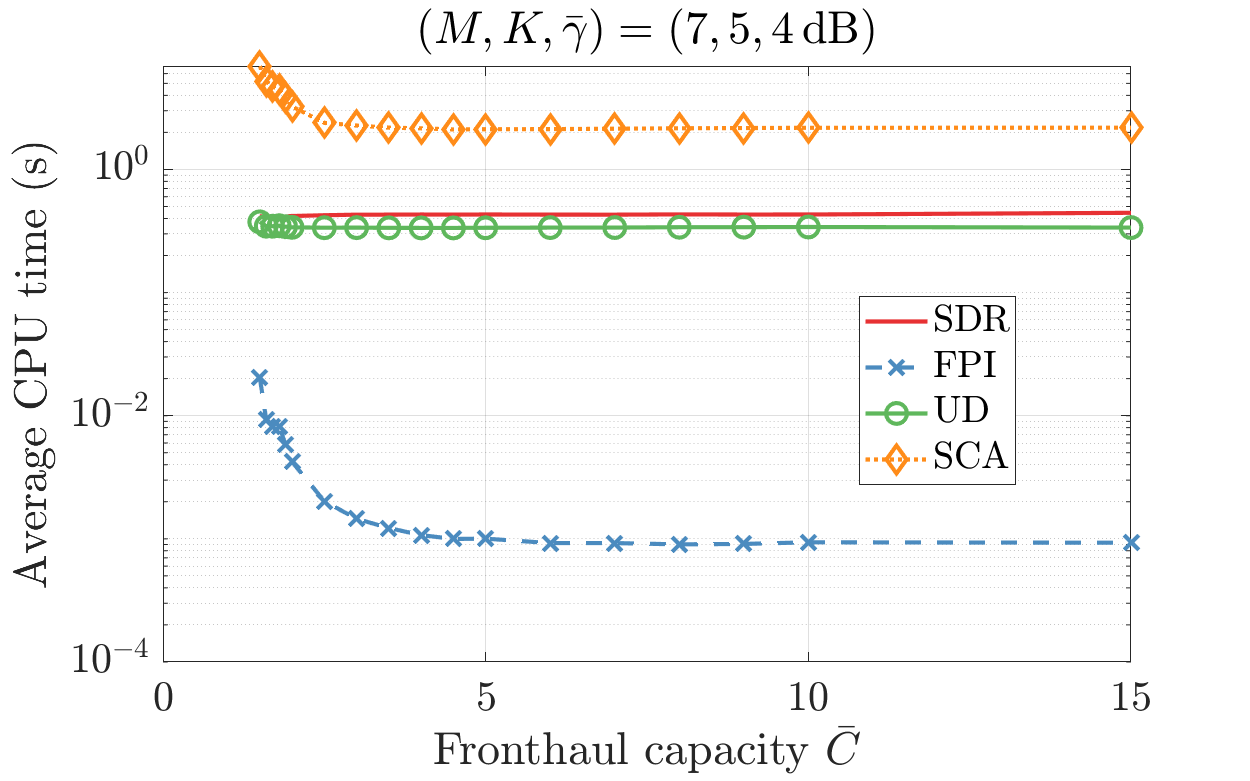}
	\label{fig:754_T}}\\
	\subfloat[]{\includegraphics[width=0.4\textwidth]{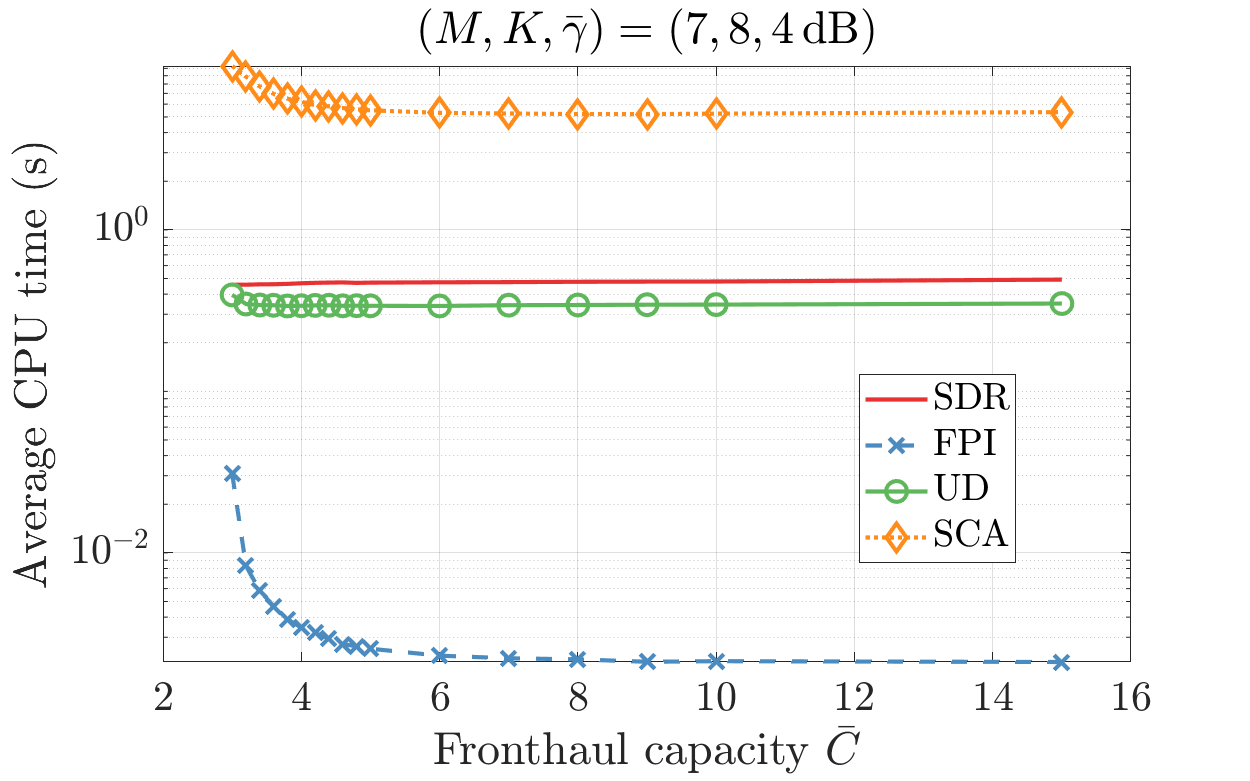}
	\label{fig:784_T}}\\
	\caption{The average CPU time of different algorithms with different system parameters $(M, K, \bar \gamma, \bar C)$. }
	\label{fig:SINR_T}
\end{figure}

The numerical results in Figs.~\ref{fig:SINR}~and~\ref{fig:SINR_T} also show the importance of exploiting the problem structure in improving the solution efficiency. 
We make two remarks on this aspect. 
First, by exploiting the problem structure, we have shown the tightness of the SDR in \eqref{SDR}, which enables us to solve the original seemingly nonconvex problem~\eqref{SDR} by solving a single convex SDR in \eqref{SDR}. 
In sharp contrast, the SCA algorithm needs to solve a series of convex approximation problems to solve the original problem, which makes it less efficient than the SDR algorithm. 
Second, one key difference between the UD and FPI algorithms is that the proposed FPI algorithm uses the fixed point iteration in \eqref{equ:pri_iter} to solve the primal problem while UD needs to call the solver to solve a reduced primal downlink problem (after solving the dual problem). 
The proposed FPI algorithm judiciously utilizes the relations between the primal and dual variables (i.e., Eqs. \eqref{equ:slack_1} and \eqref{equ:pri_var1}) and hence significantly improves the computational efficiency of solving the primal problem when compared with the UD algorithm (which does not leverage the special structure of the primal problem). 

\begin{figure}[t]
	\centering
	\subfloat[]{\includegraphics[width=0.4\textwidth]{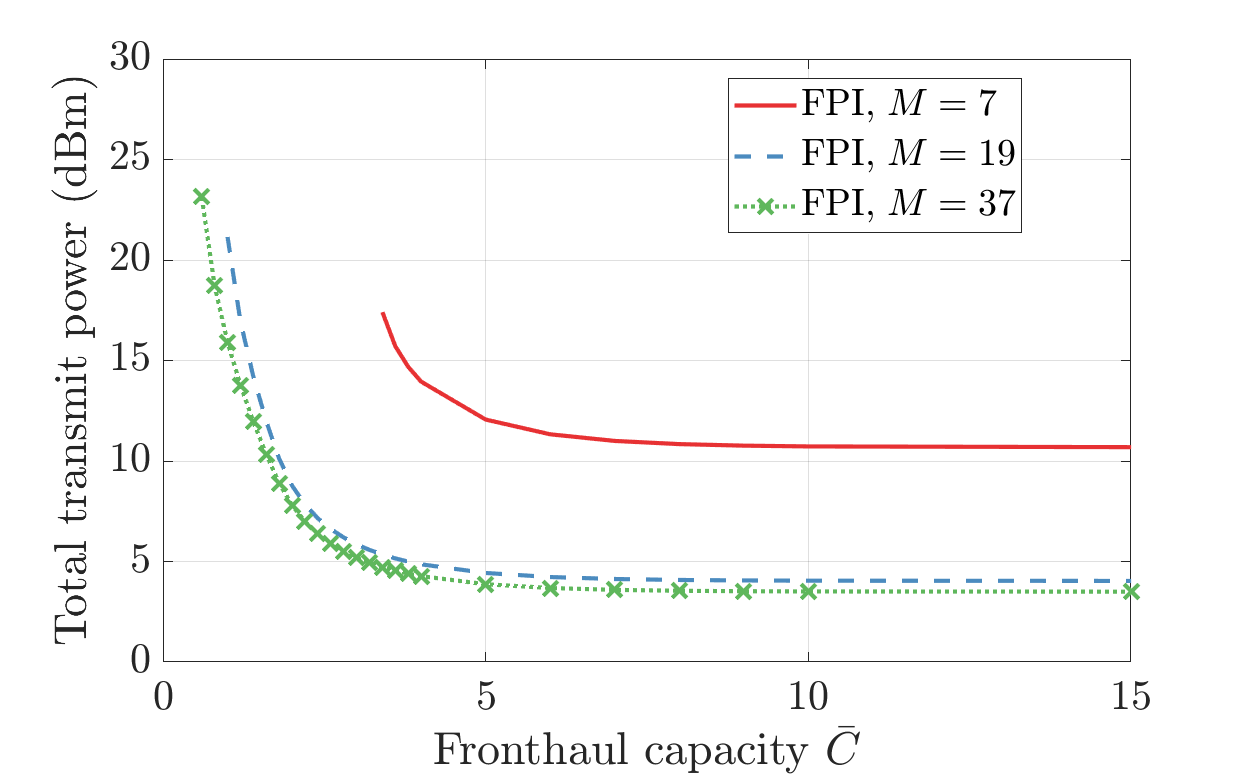}
	\label{fig:C_fronthaul}}\\
	\subfloat[]{\includegraphics[width=0.4\textwidth]{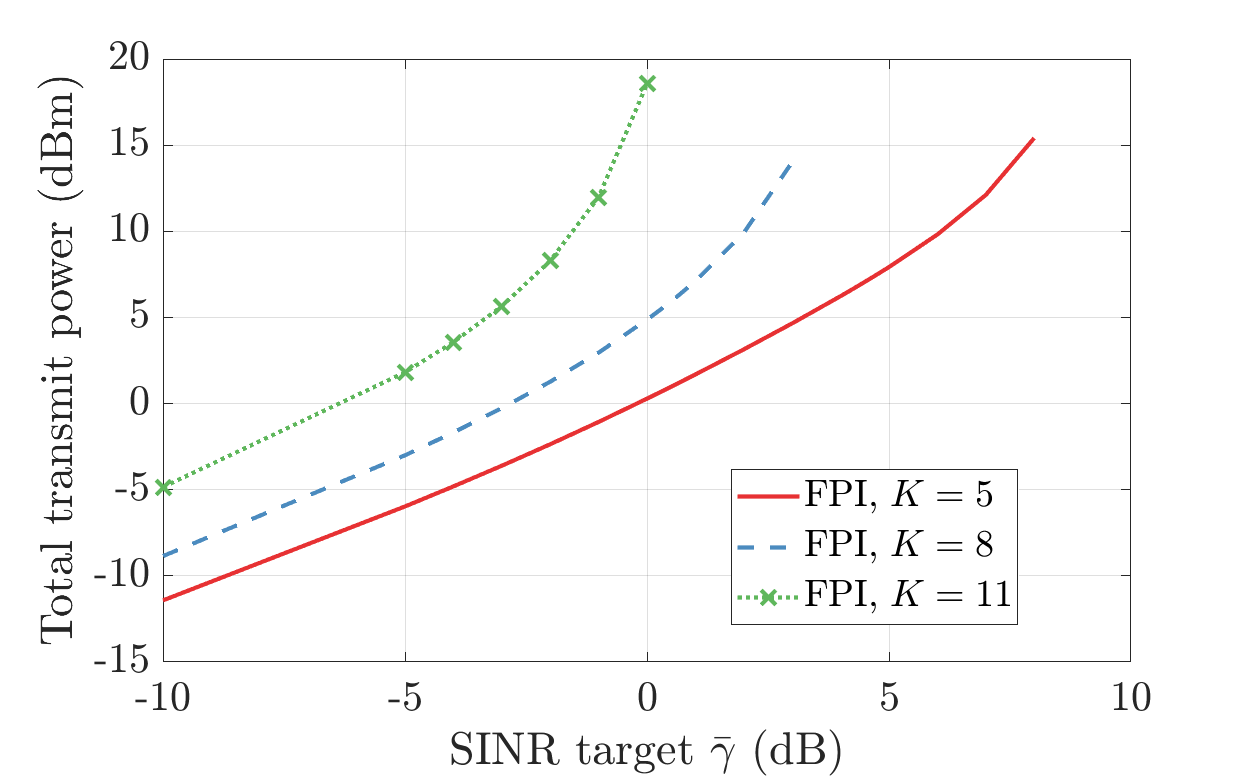}
	\label{fig:C_SINR}}\\
	\caption{The average objective value (i.e., the total transmit power) with different system parameters $(M, K, \bar \gamma, \bar C)$. }
	\label{fig:para}
\end{figure}

\subsection{Impact of System Parameters}

In this subsection, we investigate the impact of the system parameters $(\bar C, M, \bar \gamma, K)$ on the objective value (i.e., the total transmit power) of problem (P). 

Fig. 7 plots the total transmit power with different choices of the system parameters $(\bar C, M, \bar \gamma, K)$. 
In Fig.~\ref{fig:para}~\subref{fig:C_fronthaul}, as $\bar C$ or $M$ increases, the total transmit power decreases. 
This is because, a larger $\bar C$ allows for a smaller compression noise, and the smaller compression noise further causes less interference to users, thereby requiring less total transmission power to satisfy all users' SINR requirements; a larger $M$ gives more space freedom in the beamformers, which reduces the total transmit power. In summary, with a larger $\bar C$ and/or $M,$ the feasible region of the problem becomes larger and thus the total transmit power becomes smaller. 
However, when the fronthaul capacity and the number of relays are greater than a certain value, i.e., $(\bar C, M) = (10, 19)$, further increasing them only brings a marginal improvement to the corresponding system's performance.

In Fig.~\ref{fig:para}~\subref{fig:C_SINR}, as the SINR target $\bar \gamma$ increases, the total transmit power increases. 
This is because with larger a higher SINR target $\bar \gamma$, the feasible region of the corresponding problem becomes smaller, thereby resulting in a larger total transmit power. 
Moreover, for a given SINR target $\bar \gamma$, the total transmit power increases as the increasing of $K$. 
This is because, a larger $K$ generally induces larger multiuser interferences and more power is needed to manage these interferences in order to achieve all users' SINR targets.

The observations from Fig.~\ref{fig:para} provide the following engineering insights for the system design. 
First, increasing the fronthaul capacity $\bar{C}$ and/or the number of relays $M$ can significantly reduce the total transmit power up to a certain point, after which further increases yield negligible improvements.
When designing a network, it is critical to allocate fronthaul capacity and relays up to the point where the system performance saturates. 
This avoids unnecessary costs associated with further increases in $\bar{C}$ and $M$. 
Second, there is a clear trade-off between the SINR target $\bar \gamma$ and the number of users $K$. 
It is impossible to serve too many users simultaneously with stringent QoS requirements. 
In practice, scheduling users based on their channel conditions and available system resources (e.g., fronthaul capacity, the number of relays, and the power budget at the BS) is an important way to guarantee users' QoS requirements. 

\section{Conclusion}

In this paper, we consider the QoS-based joint beamforming and compression design problem in the cooperative cellular network. 
A major design challenge is to find the global beamforming and compression strategy to minimize the total network transmit power. 
We first show in this paper, that the seemingly nonconvex design problem admits a convex SDP reformulation by proving that its SDR is tight. 
Based on the above result, we further propose an efficient algorithm for globally solving the considered problem. 
The basic idea of the proposed algorithm is to solve an enhanced KKT conditions of the SDR of the considered problem via two fixed point iterations. 
Two key features of the proposed algorithm are: 
(1) it is guaranteed to find the global solution of the problem when the problem is feasible and is able to detect its infeasibility when the problem is not feasible; 
(2) it is highly efficient because both of fixed point iterations in the proposed algorithm are linearly convergent and each evaluation of the functions in the fixed point iterations are computationally cheap. 
Numerical results show that the proposed algorithm significantly outperforms the SOTA benchmarks in terms of the computational efficiency.
While this paper focuses on theoretical insights and algorithmic efficiency, addressing practical considerations such as robustness to modeling inaccuracies and imperfect channel state information are crucial for enhancing the real-world applicability of the proposed solutions. 
Hence, we shall consider these important practical factors in future works.

\appendices

\section{Proof of Eqs. \eqref{equ:dual_var2} and \eqref{equ:pri_const1} } \label{apd:kkt}

We first prove that Eq. \eqref{equ:dual_var2} holds at the optimal solution of the dual problem \eqref{dual}. 
Note that $\m{\Lambda}_m^{(1:m-1, 1:M)}$ and $\m{\Lambda}_m^{(m:M, 1:m-1)}$ do not affect the objective value or any constraints in problem \eqref{dual}. 
Hence, one can choose $\Lm$ such that $ \m \Lambda_m^{(1:m-1, 1:M)}$ and $ \m \Lambda_m^{(m:M,1:m-1)}$ are all zero. 
Combining the complementary slackness of $\m B_m$ and $\m{\Lambda}_m$
with the fact that $\m B_m$ is of rank $M-m+1$ or $M-m$ yields $\rk (\m{\Lambda}_m) \leq 1$. 

Now we prove that Eq. \eqref{equ:pri_const1} is true. 
To show this, it suffices to show that the optimal dual solution $\beta_k > 0$ for all $k \in \cK$ (due to the complementary slackness). 
Next, we use the following contradiction argument to show that the optimal dual solution $\beta_k > 0$ for all $k \in \cK$. 
Suppose that $(\v \beta, \Lm)$ is an optimal dual solution, but there exists some $k_0$ such that $\beta_{k_0} = 0$. 
Then let $\tilde{\beta}_{k_0} = \bar \gamma_{k_0}\left( \v{h}_{k_0}^\hermitian \m C_k\left( \v \beta, \left\{ \m \Lambda_m \right\} \right)^{-1} \v{h}_{k_0} \right)^{-1}$ and $\tilde{\beta}_k = \beta_k \text{~for~all~} k \neq k_0$. 
Define $\tilde{\v \beta} = [\tilde \beta_1, \tilde \beta_2, \d, \tilde \beta_K]^\transpose$. 
Then $(\tilde{\v \beta}, \{\m{\Lambda}_m \})$ is a feasible dual solution with a larger objective value, 
which contradicts the optimality of the solution $(\v \beta, \{\m{\Lambda}_m \})$.

\section{Proof of Theorem \ref{thm:alg_prop}}
\label{apd:alg_prop}
In this part, we shall first show the convergence of the dual fixed point iteration \eqref{equ:dual_iter} and the primal fixed point iteration \eqref{equ:pri_iter} by showing that both mappings $I(\cdot)$ in \eqref{equ:dual_fix_point} and $J(\cdot)$ in \eqref{equ:primal_fix_point} are standard interference (SI) mappings \cite{yates1995FrameworkUplinkPower}. 
Then we shall show the linear convergence rate of the dual fixed point iteration \eqref{equ:dual_iter} and the primal fixed point iteration \eqref{equ:pri_iter}. 

\subsection{Convergence of Fixed Point Iterations \eqref{equ:dual_iter} and \eqref{equ:pri_iter}}

The existence of the fixed points of $I(\cdot)$ in \eqref{equ:dual_fix_point} and $J(\cdot)$ in \eqref{equ:primal_fix_point} is guaranteed by the strict feasibility of problem \eqref{SDR}. 
Combining this with \cite[Theorem 2]{yates1995FrameworkUplinkPower}, it suffices to show that both $I(\cdot)$ in \eqref{equ:dual_fix_point} and $J(\cdot)$ in \eqref{equ:primal_fix_point} are SI mappings \cite{yates1995FrameworkUplinkPower} in order to show the convergence of fixed point iterations \eqref{equ:dual_iter} and \eqref{equ:pri_iter}. 
A mapping $f: \mathbb{R}_+^n \rightarrow \mathbb{R}^n$ is said to be an SI mapping if it satisfies the following three properties\footnote{The positivity property introduced in \cite{yates1995FrameworkUplinkPower} has been replaced with the nonnegativity property, as shown in \cite[Lemma 1]{leung2004ConvergenceTheoremGeneral} and \cite[Fact 1]{cavalcante2019ConnectionsSpectralProperties}.}.

\textbf{Nonnegativity}: For any $\v x \in \mathbb{R}^n_{+}$, $f(\v x) \geq \v 0$. 

\textbf{Strict subhomogeneity}: For any $\alpha>1$ and $\v x \in \mathbb{R}^n_{+} \backslash \left\{ \v 0 \right\}$, $f(\alpha \v x) < \alpha f(\v x)$. 

\textbf{Monotonicity}: For any $\v x_1, \v x_2 \in \mathbb{R}^n_{+}$ with $\v x_1 \geq \v x_2$, $f(\v x_1) \geq f(\v x_2)$. 

Notice that a mapping $f = [f_1, f_2, \d, f_n]^\transpose$ is an SI mapping if and only if all of its components $f_i: \mathbb R_+ \rightarrow \mathbb R$ satisfy the above three properties.

\subsubsection{$I(\cdot)$ in \eqref{equ:dual_fix_point} is an SI mapping}
We show that $I(\cdot)$ in \eqref{equ:dual_fix_point} is an SI mapping by showing that $I_k(\cdot)$ in \eqref{equ:raw_dual_iter} for all $k \in \cK$ satisfy the nonnegativity, the strict subhomogeneity, and the monotonicity one by one. 
From \eqref{equ:raw_dual_iter}, for any $ k \in \cK $, we have 
\begin{equation}
	I_k(\v \beta) = \frac{\bar \gamma_k}{\v{h}_k^\hermitian \m C_k(\v \beta)^{-1} \v{h}_k}, 
	 \label{equ:def_Ik}
\end{equation} 
where we use $\m C_k(\v \beta)$ to denote 
\begin{equation}
	\m C_k(\v \beta, \{\m \Lambda_m(\v \beta)\}) = \m I + \sum_{m\in \cM} \m \Lambda_m^{(m,m)}(\v \beta) \m E_m + \sum_{j \neq k} \beta_j \v h_j \v h_j^\hermitian
	\label{equ:def_Ck}
\end{equation}
to simplify the notations. 
In the following, we shall show the desired properties of $I_k(\cdot)$ based on the properties of $\m \Lambda_m^{(m,m)}(\cdot)$ in Lemma \ref{lem:lmmm} given in Appendix~\ref{ss:useful_properties_Lambda} of the Supplementary Material. 

\textbf{Proof of the nonnegativity of $I_k(\cdot)$}: For any $m \in \cM$, $\m \Lambda_m^{(m,m)}(\cdot)$ is positive by Lemma \ref{lem:lmmm}, which implies that $\m C_k(\v \beta)$ is positive definite for any $\v \beta \in \mathbb{R}_+^K$. 
Hence, by \eqref{equ:def_Ik}, $ I_k(\cdot) $ is nonnegative. 

\textbf{Proof of the strict subhomogeneity of $I_k(\cdot)$}: For any $ \alpha > 1 $ and $\v\beta \in \mathbb{R}^K_{+}\backslash \left\{ \v 0 \right\}$, it follows from the strict subhomogeneity of $ \m{\Lambda}_m^{(m,m)}(\cdot) $ in Lemma \ref{lem:lmmm} and \eqref{equ:def_Ck} that 
\begin{equation}
	\label{equ:Lk_homo}
	\begin{aligned}
		\m C_k(\alpha\v \beta) - \m I 
		&= \sum_{m\in\cM} \m{\Lambda}_m^{(m,m)}\left(\alpha\v \beta\right)\m{E}_m+ \sum_{j\neq k} \alpha\beta_j \v h_j \v h_j^\hermitian \\
		&\prec \alpha\left(\sum_{m\in\cM} \m{\Lambda}_m^{(m,m)}\left(\v \beta\right)\m{E}_m+\sum_{j\neq k} \beta_j \v h_j \v h_j^\hermitian\right)\\
		&= \alpha (\m C_k(\v \beta) - \m I). 
	\end{aligned}
\end{equation}
As such, by \eqref{equ:def_Ik}, we get
\begin{equation*}
	\begin{aligned}
		I_k(\v{\alpha \beta}) &= \bar \gamma_k \left( \v{h}_k^\hermitian \left( 
			\m I + \m C_k(\alpha \v \beta) - \m I
			\right)^{-1} \v{h}_k \right)^{-1}  \\
			&< \bar \gamma_k\left( \v{h}_k^\hermitian \left( 
				\alpha \m I + \alpha \left(\m C_k(\v \beta) - \m I\right)
				\right)^{-1} \v{h}_k \right)^{-1} \\
			&= \alpha I_k(\v \beta). 
	\end{aligned}
\end{equation*}
\textbf{Proof of the monotonicity of $I_k(\cdot)$}: For any $\v \beta_1 , \v \beta_2 \in \mathbb{R}^K_{+}$ with $\v \beta_\ell = [\beta_{\ell,1}, \beta_{\ell,2}, \d, \beta_{\ell,K}]^\transpose$ for $\ell=1,2$ and $\v \beta_1 \geq \v \beta_2$, the monotonicity of $ \m{\Lambda}_m^{(m,m)}\left(\cdot\right) $ in Lemma \ref{lem:lmmm} and \eqref{equ:def_Ck} gives 
\begin{equation*}
	\begin{aligned}
		\m C_k(\v \beta_2)
		&= \m I + \sum_{m\in\cM} \m{\Lambda}_m^{(m,m)}\left(\v \beta_2\right)\m{E}_m+ \sum_{j\neq k} \beta_{2, j} \v h_j \v h_j^\hermitian \\
		&\preceq \m I + \sum_{m\in\cM} \m{\Lambda}_m^{(m,m)}\left(\v \beta_1\right)\m{E}_m+ \sum_{j\neq k} \beta_{1, j} \v h_j \v h_j^\hermitian\\
		&= \m C_k(\v \beta_1). 
	\end{aligned}
\end{equation*}
Combining this with \eqref{equ:def_Ik} yields 
\begin{equation*}
	\begin{aligned}
		I_k(\v \beta_2) &= \bar \gamma_k\left( \v{h}_k^\hermitian \m C_k(\v \beta_2)^{-1} \v{h}_k \right)^{-1} \\
			&\leq \bar \gamma_k\left( \v{h}_k^\hermitian \m C_k(\v \beta_1)^{-1} \v{h}_k \right)^{-1} \\
			&= I_k(\v \beta_1). 
	\end{aligned}
\end{equation*}

\subsubsection{$J(\cdot)$ in \eqref{equ:primal_fix_point} is an SI mapping}
We show the nonnegativity, the strict subhomogeneity, and the monotonicity of $J_k(\cdot)$ one by one, where $J_k(\cdot)$ in \eqref{equ:raw_primal_iter} can be explicitly written as
\begin{equation}
	J_k(\v p) = \frac{\bar \gamma_k \left( \sum_{j\neq k} p_j \v  h_k^\hermitian \hat{\m V}_j \v h_k + \v h_k^\hermitian \m Q(\v p) \v h_k + \sigma_k^2 \right)}{\v  h_k^\hermitian \hat{\m V}_k \v h_k}. 
	\label{equ:def_Jk}
\end{equation}
In the following, we shall show the desired properties of $J_k(\cdot)$ based on the properties of $\m Q(\cdot)$ in Lemma \ref{lem:Q} given in Appendix~\ref{ss:useful_properties_Q} of the Supplementary Material. 

\textbf{Proof of the nonnegativity of $J_k(\cdot)$}: For any $\v p \in \mathbb{R}^K_{+}$, the nonnegativity of $ \m Q(\cdot) $ in Lemma \ref{lem:Q} and \eqref{equ:def_Jk} yields 
$$
\begin{aligned}
	J_k(\v p) \geq \frac{\bar \gamma_k \sigma_k^2}{\v  h_k^\hermitian \hat{\m V}_k \v h_k} \geq 0. 
\end{aligned}
$$

\textbf{Proof of the strict subhomogeneity of $J_k(\cdot)$}: For any $\alpha > 1$ and $\v p \in \mathbb{R}^K_{+} \backslash \left\{ \v 0 \right\}$, the linearity of $ \m Q(\cdot) $ in Lemma \ref{lem:Q} together with \eqref{equ:def_Jk} gives
	\begin{equation*}
		\begin{aligned}
			J_k(\alpha \v p) &= \frac{\bar \gamma_k \left( \alpha \sum_{j\neq k} p_j \v  h_k^\hermitian \hat{\m V}_j \v h_k + \alpha \v h_k^\hermitian \m Q(\v p) \v h_k + \sigma_k^2 \right)}{\v  h_k^\hermitian \hat{\m V}_k \v h_k} \\ 
			&< \frac{\bar \gamma_k \left( \alpha \sum_{j\neq k} p_j \v  h_k^\hermitian \hat{\m V}_j \v h_k + \alpha  \v h_k^\hermitian \m Q(\v p) \v h_k + \alpha \sigma_k^2 \right)}{\v  h_k^\hermitian \hat{\m V}_k \v h_k}\\
			&=  \alpha J_k(\v p). 
		\end{aligned}
	\end{equation*}

\textbf{Proof of the monotonicity of $J_k(\cdot)$}: For any $\v p_1, \v p_2 \in \mathbb{R}_+^K$ with $\v p_1 \geq \v p_2$, combining the monotonicity of $ \m Q(\cdot) $ in Lemma \ref{lem:Q} and \eqref{equ:def_Jk} shows that  
\begin{equation*}
	\begin{aligned}
		J_k(\v p_2) &= \frac{\bar \gamma_k \left( \sum_{j\neq k} p_{1,j} \v  h_k^\hermitian \hat{\m V}_j \v h_k + \v h_k^\hermitian \m Q(\v p_2) \v h_k + \sigma_k^2 \right)}{\v  h_k^\hermitian \hat{\m V}_k \v h_k}  \\ 
		&\leq \frac{\bar \gamma_k \left( \sum_{j\neq k} p_{2,j} \v  h_k^\hermitian \hat{\m V}_j \v h_k + \v h_k^\hermitian \m Q(\v p_1) \v h_k + \sigma_k^2 \right)}{\v  h_k^\hermitian \hat{\m V}_k \v h_k} \\
		&= J_k(\v p_1). 
	\end{aligned}
\end{equation*}

\subsection{Linear Convergence Rates of Fixed Point Iterations \eqref{equ:dual_iter} and \eqref{equ:pri_iter}}
In this part, we show the linear convergence rate of both iterations. 
First, combining the linearity of $\m Q(\cdot)$ in Lemma \ref{lem:Q} and \eqref{equ:def_Jk} gives that $J(\cdot)$ is an affine function. 
Hence, the convergence of the primal fixed point iteration \eqref{equ:pri_iter} immediately implies its linear convergence rate (and the rate depends on the spectral radius of matrix $\m G$ whose entries are given in \eqref{equ:entry_G}). 
Next, we shall focus on showing the linear convergence rate of the dual fixed point iteration \eqref{equ:dual_iter}. 
To this end, we first establish the following two lemmas, which are later combined together to bound the ratio $\frac{\mu(\v \beta^{(i+1)}, \v \beta^\star)}{\mu(\v \beta^{(i)}, \v \beta^\star)}$ in the left-hand-side of \eqref{equ:c_r_limit}. 
Then, we take the limit superior of the obtained bound to show the desired result in \eqref{equ:c_r_limit}. 
Recall the metric $\mu (\cdot, \cdot)$ defined in \eqref{equ:metric}. 
\begin{lemma}
	\label{lem:cr1}
	For any $ \v \beta, \tilde{\v \beta} \in \mathbb{R}_{++}^K $, we have 
	\begin{equation}
		\frac{\mu (I(\v \beta), I(\tilde{\v \beta}))}{\mu (\v \beta, \tilde{\v \beta})} \leq \max_{k} \Bigg\{ \log_{\alpha} \Big( \frac{I_k(\alpha \tilde{\v \beta})}{I_k(\tilde{\v \beta})} \Big), \log_\alpha \Big( \frac{I_k(\alpha\v \beta)}{I_k(\v \beta)} \Big) \Bigg\}, 
		\label{equ:desired_result1}
	\end{equation}
	where $ \alpha = \mathrm{e}^{\mu(\v \beta, \tilde{\v \beta})} $. 
\end{lemma}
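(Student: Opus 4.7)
The plan is to exploit the definition of Thompson's metric together with the monotonicity of the interference mapping $I(\cdot)$ (established earlier in Appendix~\ref{apd:alg_prop}) to sandwich $I_k(\v \beta)$ between $I_k(\tilde{\v \beta}/\alpha)$ and $I_k(\alpha \tilde{\v \beta})$, and then convert the resulting componentwise ratio bounds into the required logarithmic inequality.

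First, I would unpack the metric. By the definition $\mu(\v \beta, \tilde{\v \beta}) = \max_k |\log_{\mathrm{e}}(\beta_k/\tilde\beta_k)|$ and the choice $\alpha = \mathrm{e}^{\mu(\v \beta, \tilde{\v \beta})} \geq 1$, one has $\alpha^{-1} \leq \beta_k/\tilde\beta_k \leq \alpha$ for every $k \in \cK$. Componentwise this is equivalent to the two sandwich inequalities $\tilde{\v \beta}/\alpha \leq \v \beta \leq \alpha \tilde{\v \beta}$ and, by symmetry, $\v \beta/\alpha \leq \tilde{\v \beta} \leq \alpha \v \beta$. Next, because $I(\cdot)$ is an SI mapping and in particular monotone (as proved just above in the appendix), applying $I_k$ to each side yields the two chains
\begin{equation*}
I_k(\tilde{\v \beta}/\alpha) \leq I_k(\v \beta) \leq I_k(\alpha \tilde{\v \beta}), \qquad I_k(\v \beta/\alpha) \leq I_k(\tilde{\v \beta}) \leq I_k(\alpha \v \beta).
\end{equation*}

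Second, I would take logarithms to translate these into bounds on $|\log(I_k(\v \beta)/I_k(\tilde{\v \beta}))|$. The upper chain gives $\log(I_k(\v \beta)/I_k(\tilde{\v \beta})) \leq \log(I_k(\alpha \tilde{\v \beta})/I_k(\tilde{\v \beta}))$, while swapping the roles of $\v \beta$ and $\tilde{\v \beta}$ in the second chain and negating yields $-\log(I_k(\v \beta)/I_k(\tilde{\v \beta})) \leq \log(I_k(\alpha \v \beta)/I_k(\v \beta))$. Combining these and maximizing over $k$ produces
\begin{equation*}
\mu(I(\v \beta), I(\tilde{\v \beta})) \leq \max_{k} \left\{ \log\!\Bigl(\tfrac{I_k(\alpha \tilde{\v \beta})}{I_k(\tilde{\v \beta})}\Bigr),\, \log\!\Bigl(\tfrac{I_k(\alpha \v \beta)}{I_k(\v \beta)}\Bigr) \right\}.
\end{equation*}

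Finally, I would divide both sides by $\mu(\v \beta, \tilde{\v \beta}) = \log\alpha$, which converts every natural logarithm on the right to a logarithm with base $\alpha$, giving exactly the stated inequality~\eqref{equ:desired_result1}. Note that the strict subhomogeneity of $I(\cdot)$ is not actually invoked here; it is only needed later to ensure that each expression $\log_\alpha(I_k(\alpha \v x)/I_k(\v x))$ is strictly less than one, so that \eqref{equ:desired_result1} becomes a genuine contraction estimate.

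I do not anticipate a serious obstacle: the argument is essentially bookkeeping built on monotonicity. The one subtle point, and the only place where care is required, is handling the two directions of the absolute value symmetrically so that both terms $\log_\alpha(I_k(\alpha \tilde{\v \beta})/I_k(\tilde{\v \beta}))$ and $\log_\alpha(I_k(\alpha \v \beta)/I_k(\v \beta))$ appear in the maximum on the right-hand side — a straightforward role-swap argument, but easy to mishandle if the inequalities are combined in the wrong order.
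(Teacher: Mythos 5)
Your proposal is correct and follows essentially the same route as the paper: the metric definition gives $\v\beta\le\alpha\tilde{\v\beta}$ and $\tilde{\v\beta}\le\alpha\v\beta$, monotonicity of $I_k$ then bounds both signed logarithms, and dividing by $\mu(\v\beta,\tilde{\v\beta})=\log_{\mathrm e}\alpha$ converts to base-$\alpha$ logarithms. The extra lower bounds $I_k(\tilde{\v\beta}/\alpha)\le I_k(\v\beta)$ and $I_k(\v\beta/\alpha)\le I_k(\tilde{\v\beta})$ you record are never invoked, but they do no harm; everything else matches the paper's argument.
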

\begin{proof}
	By the definition of $ \mu(\cdot, \cdot) $ in \eqref{equ:metric}, we get $\v \beta \leq \alpha \tilde{\v \beta}$ and $ \tilde{\v \beta} \leq \alpha \v \beta $. 
	Combining this with the monotonicity of $I_k(\cdot)$ gives 
	\begin{equation*}
		I_k\left(\v \beta\right)\leq I_k(\alpha \tilde{\v \beta}) \text{~and~} I_k(\tilde{\v \beta})\leq I_k(\alpha \v \beta). 
	\end{equation*}
	As a result, \begin{equation*}
		\begin{aligned}
			\mu(I(\v \beta), I(\tilde{\v \beta})) &= \max_{k} \left|\log_{\mathrm e} \Big( \frac{I_k(\v \beta)}{I_k(\tilde{\v \beta})} \Big)\right|\\
			&\leq \max_{k} \left\{  
			\log_{\mathrm e} \Big( \frac{I_k(\alpha\tilde{\v \beta})}{I_k(\tilde{\v \beta})} \Big), 
			\log_{\mathrm e} \Big( \frac{I_k(\alpha\v \beta)}{I_k(\v \beta)} \Big) \right\}.  \\
		\end{aligned}
	\end{equation*}
	Dividing both sides of the above inequality by $ \mu(\v \beta, \tilde{\v \beta})$ yields the desired result \eqref{equ:desired_result1}. 
\end{proof}
\begin{lemma}
	\label{lem:cr2}
	For any $ \alpha > 1$ and $ \v \beta \in \mathbb{R}_{++}^K $, we have 
	\begin{equation}
		\log_{\alpha} \left( \frac{I_k(\alpha\v \beta)}{I_k(\v \beta)} \right) < \log_{\alpha}\left( \frac{1+\alpha \lambda^k_1(\v \beta)}{1 + \lambda^k_1(\v \beta)} \right), 
		\label{equ:desired_conclusion2}
	\end{equation}
	where we use $ \lambda_1^k(\v \beta) $ to denote $ \|\m C_k(\v \beta) - \m I\|_2 $. 
\end{lemma}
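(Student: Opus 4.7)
\textit{Proof proposal.} The plan is to unroll the definitions in \eqref{equ:def_Ik} so that the ratio on the left-hand side of \eqref{equ:desired_conclusion2} becomes a ratio of quadratic forms in $\v h_k$, and then compare these quadratic forms using a strict subhomogeneity bound on $\m C_k(\v \beta)$ that was already derived in \eqref{equ:Lk_homo}. Concretely, let $\m A_k(\v \beta)=\m C_k(\v \beta)-\m I$; this matrix is positive semidefinite and its spectral norm is exactly $\lambda_1^k(\v \beta)$. From the bound $\m C_k(\alpha\v \beta)-\m I\prec \alpha(\m C_k(\v \beta)-\m I)$ obtained in \eqref{equ:Lk_homo}, I get $\m C_k(\alpha\v \beta)\prec \m I+\alpha\m A_k(\v \beta)$, hence by the order-reversing property of matrix inversion on the positive definite cone, $\m C_k(\alpha\v \beta)^{-1}\succ (\m I+\alpha\m A_k(\v \beta))^{-1}$. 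Using the definition of $I_k(\cdot)$ this yields the strict inequality
\begin{equation*}
\frac{I_k(\alpha\v \beta)}{I_k(\v \beta)}
=\frac{\v h_k^\hermitian(\m I+\m A_k(\v \beta))^{-1}\v h_k}{\v h_k^\hermitian\m C_k(\alpha\v \beta)^{-1}\v h_k}
<\frac{\v h_k^\hermitian(\m I+\m A_k(\v \beta))^{-1}\v h_k}{\v h_k^\hermitian(\m I+\alpha\m A_k(\v \beta))^{-1}\v h_k}.
\end{equation*}

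Next, I would diagonalize $\m A_k(\v \beta)=\sum_i \mu_i \v u_i\v u_i^\hermitian$ with $\mu_i\geq 0$ and $\max_i \mu_i=\lambda_1^k(\v \beta)$, and write $c_i=\v u_i^\hermitian\v h_k$. Then both quadratic forms become weighted sums, and the inequality I want to establish reduces to
\begin{equation*}
\frac{\sum_i |c_i|^2/(1+\mu_i)}{\sum_i |c_i|^2/(1+\alpha\mu_i)}\;\leq\;\frac{1+\alpha\lambda_1^k(\v \beta)}{1+\lambda_1^k(\v \beta)}.
\end{equation*}
A clean way to see this is term by term: it suffices to verify, for every $i$, that $(1+\lambda_1^k(\v \beta))(1+\alpha\mu_i)\leq(1+\alpha\lambda_1^k(\v \beta))(1+\mu_i)$, which after cancellation is just $(\alpha-1)(\lambda_1^k(\v \beta)-\mu_i)\geq 0$; this holds because $\alpha>1$ and $\mu_i\leq\lambda_1^k(\v \beta)$. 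Combining the term-by-term inequality with the strict inequality from the subhomogeneity step gives
\begin{equation*}
\frac{I_k(\alpha\v \beta)}{I_k(\v \beta)}<\frac{1+\alpha\lambda_1^k(\v \beta)}{1+\lambda_1^k(\v \beta)},
\end{equation*}
and applying $\log_\alpha$ (which is monotone since $\alpha>1$) yields \eqref{equ:desired_conclusion2}.

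I do not foresee a serious obstacle: both ingredients, the subhomogeneity bound and the monotonicity of $\mu\mapsto(1+\alpha\mu)/(1+\mu)$ in the form stated above, are elementary once the spectral decomposition is in place. The one place to be careful is the strictness of the inequality: the strict PSD ordering $\m C_k(\alpha\v \beta)\prec \m I+\alpha\m A_k(\v \beta)$ inherited from \eqref{equ:Lk_homo} is what guarantees a strict inequality in the ratio of quadratic forms, even though the subsequent term-by-term comparison is only a weak inequality.
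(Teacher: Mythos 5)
Your proof is correct and follows essentially the same route as the paper's: both invoke the strict subhomogeneity bound \eqref{equ:Lk_homo} to pass from $\m C_k(\alpha\v \beta)^{-1}$ to $(\m I+\alpha(\m C_k(\v \beta)-\m I))^{-1}$ (producing the strict inequality), and both then diagonalize $\m C_k(\v \beta)-\m I$ and reduce to the observation that $\mu\mapsto(1+\alpha\mu)/(1+\mu)$ is increasing for $\alpha>1$, applied with $\mu_i\leq\lambda_1^k(\v \beta)$. Your term-by-term verification $(1+\lambda_1^k)(1+\alpha\mu_i)\leq(1+\alpha\lambda_1^k)(1+\mu_i)$ is just a slightly more explicit way of stating the mediant-style bound the paper applies to the ratio of weighted sums, so there is no genuinely different idea here.
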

\begin{proof}
	For any $ \alpha > 1$ and $ \v \beta \in \mathbb{R}_{++}^K $, it follows from \eqref{equ:def_Ik} and \eqref{equ:Lk_homo} that 
	\begin{equation}
		\begin{aligned}
			\frac{I_k\left(\alpha \v \beta\right)}{I_k(\v \beta)} 
			&= \frac{\v{h}_k^\hermitian \m C_k(\v \beta)^{-1} \v{h}_k}{\v{h}_k^\hermitian \m C_k( \alpha \v \beta)^{-1} \v{h}_k} \\
			&< \frac{\v{h}_k^\hermitian \m C_k(\v \beta)^{-1} \v{h}_k}{\v{h}_k^\hermitian (\m I + \alpha (\m C_k(\v \beta) - \m I))^{-1} \v{h}_k}. 
		\end{aligned}
		\label{equ:Ikratio1}
	\end{equation}
	In the rest part of the proof, we drop the dependence of all variables on $\v \beta$ and $k$ for notational simplicity. 
	Suppose $\m U \m \Lambda \m U^\hermitian$ is the spectral decomposition of $\m C - \m I$, where $\m \Lambda = \diag(\lambda_1, \lambda_2, \d, \lambda_M)$ with decreasing $\lambda_m$. 
	Let $\v v = \m U^\hermitian \v h = [v_1, v_2, \d, v_M]^\transpose$. 
	Then, we get  
	\begin{equation}
		\begin{aligned}
			\frac{\v h^\hermitian \m C^{-1} \v h}{\v h^\hermitian (\m I + \alpha(\m C - \m I))^{-1} \v h} &= \frac{\v v^\hermitian(\m I + \m \Lambda)^{-1} \v v}{\v v^\hermitian(\m I + \alpha\m \Lambda)^{-1} \v v} \\
			&= \frac{\sum_{m\in\cM} \frac{1}{1+\lambda_m} |v_m|^2}{\sum_{m\in\cM} \frac{1}{1+\alpha\lambda_m} |v_m|^2} \\
			&\leq \frac{\sum_{m\in\cM} \frac{1}{1+\lambda_1} |v_m|^2}{\sum_{m\in\cM} \frac{1}{1+\alpha\lambda_1} |v_m|^2} \\
			&= \frac{1+\alpha \lambda_1}{1+\lambda_1}. 
		\end{aligned}
		\label{equ:Ikratio2}
	\end{equation}
	Finally, combining \eqref{equ:Ikratio1} and \eqref{equ:Ikratio2} and taking the $ \alpha $-logarithm from both sides yield the desired result \eqref{equ:desired_conclusion2}. 
\end{proof}

Define $ \kappa\left(\alpha, \lambda\right)= \log_\alpha \left(\frac{1+\alpha \lambda}{1+\lambda}\right) $. 
Then it is simple to verify that the function has the following properties: 
\begin{enumerate}
	\item [(i)] $ \kappa(\alpha, \lambda) $ is an increasing function of both $ \alpha $ and $ \lambda $; 
	\item [(ii)] $ \kappa\left(\alpha, \lambda\right)\in (0,1) $ for any  $ \alpha >1$ and $ \lambda>0 $; and 
	\item [(iii)] $ \lim_{\alpha \rightarrow \infty} \kappa\left(\alpha, \lambda\right) = \frac{\lambda}{1 + \lambda} $ for any $ \lambda > 0 $. 
\end{enumerate}

Now, we combine Lemmas \ref{lem:cr1} and \ref{lem:cr2} to show the linear convergence rate of the dual fixed point iteration \eqref{equ:dual_iter}. 
First, substituting $ \v \beta $ and $\tilde{\v \beta}$ with $ \v \beta^{(i)} $ and $\v \beta^*$ in Lemma \ref{lem:cr1} and using Lemma \ref{lem:cr2} gives 
\begin{equation*}
	\frac{\mu(\v \beta^{(i+1)}, \v \beta^*)}{\mu(\v \beta^{(i)}, \v \beta^*)} 
	< \max_k \left\{ \kappa(\alpha_i, \lambda_1^k(\v \beta^{(i)})), \kappa(\alpha_i, \lambda_1^k(\v \beta^{*})) \right\}, 
\end{equation*}
where $ \alpha_i = \mathrm{e}^{\mu(\v \beta^{(i)}, \v \beta^*)} $. 
Since $ \kappa(\alpha, \cdot) $ is an increasing function, it follows from the fact $\lambda(\v \beta) = \max_k \left\{ \lambda_1^k(\v \beta) \right\}$ that 
\begin{equation}
	\frac{\mu(\v \beta^{(i+1)}, \v \beta^*)}{\mu(\v \beta^{(i)}, \v \beta^*)} 
	< \max \left\{ \kappa(\alpha_i, \lambda(\v \beta^{(i)})), \kappa(\alpha_i, \lambda(\v \beta^{*})) \right\}. 
	\label{equ:cr_key}
\end{equation}
Taking the limit superior on both sides of \eqref{equ:cr_key} and using the properties of $\kappa (\cdot, \cdot)$, we obtain 
\begin{equation*}
	\begin{aligned}
		&\limsup_{i\rightarrow \infty}\frac{\mu(\v \beta^{(i+1)}, \v \beta^*)}{\mu(\v \beta^{(i)}, \v \beta^*)} \\
		&\quad\leq \max\left\{\lim_{i \rightarrow \infty} \kappa\left(\alpha_i, \lambda(\v \beta^{(i)})\right), \lim_{i \rightarrow \infty} \kappa\left(\alpha_i, \lambda(\v \beta^{*})\right)\right\}   \\
		&\quad= \frac{\lambda(\v \beta^*)}{1 + \lambda(\v \beta^*)}.
	\end{aligned}
\end{equation*}
This shows the (asymptotic) linear convergence rate of the dual fixed point iteration \eqref{equ:dual_iter} given in \eqref{equ:c_r_limit}.

\section*{Acknowledgments}
	The authors wish to thank Dr.~R.~L.~G.~Cavalcante of Fraunhofer Heinrich-Hertz-Institut for his insightful comments, which have significantly improved the presentation of the paper.

\bibliographystyle{IEEEtran}
\bibliography{tsp_duality}

\clearpage
\begin{center}
  \Large\textbf{Supplementary Material}
\end{center}
\appendices
\setcounter{section}{2}
\setcounter{page}{1}

\section{Equivalence between Problem \eqref{equ:original obp} and Problem (P)}
\label{apd:equivalence}
By \cite[Proposition 4]{liu2021UplinkdownlinkDualityMultipleaccess}, the SINR constraints in problem \eqref{equ:original obp} are equivalent to those in problem (P), and the fronthaul rate constraints in problem \eqref{equ:original obp} are equivalent to those in problem (P) when $\m Q$ is \emph{positive definite}. 
In the following, we show the equivalence between the $m$-th fronthaul rate constraint in problem \eqref{equ:original obp} and that in problem (P) for all $m\in\cM$ when $\m Q$ is \emph{positive semidefinite}. 

First, if $(\{\v v_k\}, \m Q)$ satisfies the $m$-th fronthaul rate constraint in problem (P), then, by the Schur complement, we have 
\begin{equation}
\label{equ:def_C}
    2^{\bar C_m} q_m - \sum_{k\in\cK} |v_{k,m}|^2 - \m Q^{(m,m)} \geq 0.
\end{equation}
This implies $C_m \leq \bar C_m$ for both cases $q_m = 0$ and $q_m > 0$. 
Next, we show the other direction. 
Suppose that $(\{\v v_k\}, \m Q)$ satisfies the $m$-th fronthaul rate constraint in problem \eqref{equ:original obp}. 
We consider two cases where $q_m=0$ and $q_m>0$ separately.
\begin{itemize}
    \item Case $q_m = 0$. In this case, the definition of $C_m$ in \eqref{equ:def_Cm} implies that $v_{k,m} = 0$ for all $k\in\cK$ and $\v Q^{(1:M,m)} = (\v Q^{(m,1:M)})^\hermitian = \v 0$, and then the $m$-th fronthaul rate constraint in problem (P) becomes
$$
2^{\bar C_m} \begin{bmatrix}
				\m 0 & \m 0  \\
				\m 0 & \m Q^{(m:M, m:M)}
			\end{bmatrix} \succeq \m 0, 
$$
which holds naturally. 
    \item Case $q_m > 0$. In this case, from the definition of $C_m$ in \eqref{equ:def_Cm}, one has \eqref{equ:def_C}. 
Notice that $\m Q^{(m+1:M,m)}$ always lies in the column space of $\m Q^{(m+1:M,m+1:M)}$ since $\m Q\succeq \m 0$ \cite[Theorem 1.20]{zhang2005SchurComplementIts}. 
Combining this with \eqref{equ:def_C} and $\m Q^{(m+1:M,m+1:M)} \succeq \m 0$ yields
\begin{equation*}
    \small
    2^{\bar C_m} \begin{bmatrix}
				\m 0 & \m 0  \\
				\m 0 & \m Q^{(m:M, m:M)}
			\end{bmatrix} - \left(\sum_{k\in\cK} |v_{k,m}|^2+ \m Q^{(m,m)}\right) \m{E}_m \succeq \m 0, 
\end{equation*}
i.e., the $m$-th fronthaul rate constraint in problem (P) holds \cite[Theorem 1.20]{zhang2005SchurComplementIts}. 
\end{itemize}

\section{The Positive Definiteness of the Optimal Solution~\texorpdfstring{$\m Q$}{Q} to Problem \eqref{SDR}}
\label{apd:PD}
In this part, we show that the optimal solution $\m Q$ to problem \eqref{SDR} is positive definite with probability one under a mild assumption on the random channels. 
This is done by examining the case where there exists a singular optimal solution $\m Q$ to problem~\eqref{SDR} and showing that this will lead to a linear equation that has probability zero of occurring. 
The proof outline is as follows. 
We start by introducing a reduced problem which is obtained by removing the unused relay $m$ from problem~\eqref{SDR} and hence the reduced problem has the same optimal value as problem~\eqref{SDR}. 
Then we show that this implies that the channels from relay $m$ to the users must satisfy a linear equation (see \eqref{equ:linear_equ} in the proof of Proposition~\ref{thm:PD} further ahead), whose coefficients solely depend on the channels from the other relays to users. 
However, the linear equation has probability zero of occurring as the coefficients of the linear equation is statistically independent of the randomly generated channel coefficients from relay $m$ to all users. 
Therefore, the optimal solution $\m Q$ to problem \eqref{SDR} is positive definite with probability one. 

First, we construct the reduced problem in the case of a singular optimal $\m Q$. 
If an optimal solution $ \m Q $ to problem~\eqref{SDR} is singular, i.e., $q_m$ in \eqref{equ:def_Cm} is zero for some $m \in \cM$, then relay $ m $ does not play any role in the whole transmission process; please refer to the discussion under Eq. \eqref{equ:def_Cm}. 
Hence, if we remove relay $m$ from problem~\eqref{SDR}, the optimal value of the reduced problem~\eqref{SDR} will be the same as that of the original problem~\eqref{SDR}. 
With proper reordering and relabeling, let relay $1$ be the removed relay. 
Without loss of generality, we assume that relay $1$ is the last to be compressed for the ease of presentation. (The proof also applies to the general case where the removed relay is compressed in any other order.)
Then the corresponding reduced problem is given by
\begin{equation}
	\begin{aligned}
		\min_{\{\tilde{\m V}_k\}, \tilde{\m Q} \succeq \m 0 } &\quad \sum_{k\in\cK} \tr(\tilde{\m V}_k) + \tr(\tilde{\m Q})\\
		\st~~~~ &\quad \tilde{a}_k(\{\tilde{\m V}_k\},\tilde{\m Q}) \geq 0,\fk, \\
		&\quad \tilde{\m B}_m(\{\tilde{\m V}_k\},\tilde{\m Q}) \succeq \m 0 ,\fmm,\\
		&\quad \tilde{\m V}_k \succeq \m 0  ,\fk, 
	\end{aligned}
	\tag{R}
\end{equation}
where $\tilde{\v h}_k = [h_{k,2}, h_{k,3}, \d, h_{k,M}]^\hermitian$ for all $k\in\cK$, $\tilde{\cM} = \{2,3,\d,M\}$, and 
\begin{equation*}
	\begin{aligned}
		\tilde{a}_k(\{\tilde{\m V}_k\},\tilde{\m Q}) & = \frac{1}{\bar \gamma_k} \tilde{\v h}_k^\hermitian \tilde{\m V}_k \tilde{\v h}_k - \sum_{j\neq k} \tilde{\v h}_k^\hermitian \tilde{\m V}_j \tilde{\v h}_k - \tilde{\v h}_k^\hermitian \tilde{\m Q} \tilde{\v h}_k - \sigma_k^2  \\ 
		\tilde{\m B}_m(\{\tilde{\m V}_k\},\tilde{\m Q}) & = 2^{\bar C_m} \begin{bmatrix}
			\m 0 & \m 0  \\
			\m 0 & \tilde{\m Q}^{(m:M, m:M)}
		\end{bmatrix} \\
		& \pushright{-~\left(\sum_{k\in\cK} \tilde{\m V}_k^{(m,m)}+ \tilde{\m Q}^{(m,m)}\right) \m{E}_m. } \\ 
	\end{aligned}
\end{equation*}

Without loss of generality, we assume that any optimal solution $\tilde{\m Q}$ to problem~(R) is positive definite. 
Otherwise, instead of problems (R) and \eqref{SDR}, we can consider problems~(R1) and (R2) obtained from the following relay removal procedure: 
\begin{itemize}
    \item \textbf{(R1)}: Keep removing relays from problem \eqref{SDR} until any optimal solution $\m Q$ to its reduced problem is positive definite. 
    Notice that when there is only one relay, any optimal $\m Q$ (which is a scalar in this case) must be greater than zero. 
    Otherwise, all the relays would be turned off, and this will violate the SINR constraints. 
    Hence, the above relay removal procedure will terminate with a positive number of relays left. 
    Denote the corresponding reduced problem by (R1). 
    \item \textbf{(R2)}: Select one previously removed relay and add it back into problem (R1). 
    Denote the corresponding problem by (R2). 
\end{itemize}

In the following, we formally show that the singular case of the optimal $\m Q$ of problem~\eqref{SDR}, i.e., the optimal values of problems (R) and \eqref{SDR} are the same, has probability zero of occurring. 

\begin{proposition}
\label{thm:PD}
Suppose that any optimal solution $\tilde{\m Q}$ to problem~(R) is positive definite. 
Let $\mathcal A$ be the event that the optimal values of problems (R) and \eqref{SDR} are the same. 
If the channels $\{h_{k,m}\}$ are randomly generated such that $\{h_{k,1} \mid k \in \cK\}$ are statistically independent of $\{h_{k,m} \mid k\in\cK, m\in \tilde{\cM}\}$, then the event $\mathcal A$ has probability zero of occurring. 
\end{proposition}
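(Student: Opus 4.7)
The plan is to suppose $\mathcal A$ holds and to deduce, from the enhanced KKT conditions of~\eqref{SDR}, a nontrivial linear equation in $\{h_{k,1}\}_{k\in\cK}$ whose coefficients depend only on $\{h_{k,m}: k\in\cK,\, m\in\tilde\cM\}$. Together with the independence hypothesis and the (standard implicit) absolute continuity of the law of $\{h_{k,1}\}$, this will force $\{h_{k,1}\}$ to lie on a proper complex hyperplane with probability zero, hence $\Pr(\mathcal A)=0$.

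Suppose $\mathcal A$ holds. Padding an optimal solution $(\{\tilde{\m V}_k\},\tilde{\m Q})$ of~(R) with a zero first row and column yields a primal $(\{\m V_k\},\m Q)$ that is feasible for~\eqref{SDR} and attains its optimum, so is optimal. Theorem~\ref{thm:tight} and strong duality then produce dual variables $\v\beta,\{\m\Lambda_m\}$ satisfying the enhanced KKT conditions~\eqsta. Because the first row and column of $\m V_k$ and $\m Q$ vanish in this extended solution, a short computation gives $\m B_1(\{\m V_k\},\m Q)=2^{\bar C_1}\m Q$, so the complementary slackness~\eqref{equ:slack_2} at $m=1$ reduces to $\tr(\m\Lambda_1^{(2:M,2:M)}\tilde{\m Q})=0$. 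Since $\tilde{\m Q}\succ\m 0$ and $\m\Lambda_1\succeq\m 0$, this forces $\m\Lambda_1^{(2:M,2:M)}=\m 0$; combining with the rank-one and sparsity structure in~\eqref{equ:dual_var2} gives $\m\Lambda_1=\lambda_{1,1}\v e_1\v e_1^\hermitian$ and, in particular, $\m\Lambda_1^{(1,j)}=0$ for every $j\ge 2$. Reading off the $(1,j)$ entries of $\m D(\v\beta,\{\m\Lambda_m\})=\m 0$ from~\eqref{equ:dual_const2}, and noting that the first row of $\m\Lambda_m$ vanishes whenever $m\ge 2$ by~\eqref{equ:dual_var2}, yields the target linear equation
\[
\sum_{k\in\cK}\beta_k\,h_{k,1}\,\bar h_{k,j}=0,\qquad j=2,\ldots,M.
\]

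The remaining key step is to argue that $\v\beta$ depends only on $\{h_{k,m}:m\in\tilde\cM\}$. Removing the first row and column from every matrix shows, by direct inspection, that $(\v\beta,\{\m\Lambda_m\}_{m\ge 2})$ is dual feasible for~(R) with the same dual objective $\sum_k\beta_k\sigma_k^2$, and strong duality for~(R) makes it dual optimal. Theorem~\ref{thm:alg_prop} applied to~(R) guarantees that the dual fixed point iteration of~(R) has a unique limit, so $\v\beta$ is this deterministic limit and is determined solely by $\{h_{k,m}:m\in\tilde\cM\}$; also $\beta_k>0$ for every $k$ by Appendix~\ref{apd:kkt}. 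Conditioning on $\{h_{k,m}:m\in\tilde\cM\}$, the coefficient vector $(\beta_k\bar h_{k,2})_{k\in\cK}$ of the $j=2$ equation above is deterministic and nonzero with probability one (since $h_{k,2}$ is continuously distributed), so the equation carves out a proper complex affine hyperplane in $\mathbb{C}^K$. By the independence and Lebesgue continuity of $\{h_{k,1}\}$, the conditional probability that $\{h_{k,1}\}$ lies on this hyperplane is zero; integration gives $\Pr(\mathcal A)=0$.

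The principal obstacle is this last step: securing $\v\beta$ as a function only of the channels in $\tilde\cM$ requires carefully matching the zero-padded enhanced KKT system of~\eqref{SDR} with the enhanced KKT system of~(R) and then invoking the uniqueness of the dual fixed point from Theorem~\ref{thm:alg_prop} to rule out any hidden dependence of $\v\beta$ on $\{h_{k,1}\}$.
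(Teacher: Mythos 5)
Your proof is essentially correct and takes a genuinely different route from the paper's, but it contains one overstatement that must be repaired.

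\paragraph{Where you diverge from the paper.}
Both you and the paper zero-pad the optimal solution of (R) and then argue that $\v\beta$ is determined solely by $\{h_{k,m}:m\in\tilde\cM\}$ via uniqueness of the dual optimizer of (R). You then extract the telltale linear equation in $\{h_{k,1}\}$ from the $(1,j)$ entries ($j\ge 2$) of the dual-stationarity matrix $\m D$, yielding $\sum_{k\in\cK}\beta_k\,\bar h_{k,1}\,h_{k,j}=0$. The paper instead reads the equation off the complementary slackness~\eqref{equ:slack_1}: from $\bigl(\m C_k-\beta_k/\bar\gamma_k\,\v h_k\v h_k^\hermitian\bigr)\bigl[\,0;\tilde{\v v}_k^\star\,\bigr]=\v 0$, the \emph{first} row produces, for every $k$, a linear equation whose coefficients $\tilde{\v h}_j^\hermitian\tilde{\v v}_k^\star$ depend only on $\{h_{k,m}:m\in\tilde\cM\}$. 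Your route gives $M-1$ equations indexed by $j$; the paper's gives $K$ equations indexed by $k$. Either one nontrivial equation suffices for the measure-zero conclusion, so both routes work. Your restriction argument for $\v\beta$ (restrict the \eqref{SDR}-dual multipliers to $\{2,\dots,M\}$, verify (R)-dual feasibility, invoke weak duality plus $\mathcal A$ to get dual optimality, then uniqueness) is also a clean alternative to the paper's comparison of the primal KKT systems. Minor nit: uniqueness of the (R)-dual optimal follows from the standard fact that a standard interference mapping has at most one fixed point, not literally from Theorem~\ref{thm:alg_prop}, which only asserts linear convergence.

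\paragraph{The gap you must fix.}
You assert that "Theorem~\ref{thm:tight} and strong duality produce dual variables satisfying the enhanced KKT conditions~\eqsta," and later read the $(1,j)$ entries of "$\m D(\v\beta,\{\m\Lambda_m\})=\m 0$ from~\eqref{equ:dual_const2}." This is not justified in the present scenario: the enhanced condition $\m D=\m 0$ is derived in the paper from $\tr(\m Q\m D)=0$ \emph{together with} $\m Q\succ\m 0$, but here $\m Q=\diag(0,\tilde{\m Q})$ is singular by construction, so you may not invoke $\m D=\m 0$. The fix is easy and recovers exactly what you use. From the ordinary complementary slackness~\eqref{equ:QD}, $\tr(\m Q\m D)=0$ with $\m Q,\m D\succeq\m 0$ forces $\m Q\m D=\m 0$; since $\tilde{\m Q}\succ\m 0$, the null space of $\m Q$ is $\mathrm{span}(\v e_1)$, so every column of $\m D$ lies in $\mathrm{span}(\v e_1)$ and, by Hermitian symmetry, $\m D=\m D^{(1,1)}\v e_1\v e_1^\hermitian$. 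In particular $\m D^{(1,j)}=0$ for all $j\ge 2$, which is the only piece of $\m D=\m 0$ your argument actually needs. (Your derivation of $\m\Lambda_1^{(1,j)}=0$ for $j\ge 2$ already uses the correct mechanism via $\tr(\m\Lambda_1\m B_1)=0$ and $\tilde{\m Q}\succ\m 0$, so that part is fine; in fact, once $\m\Lambda_1^{(2:M,2:M)}=\m 0$, positive semidefiniteness of $\m\Lambda_1$ alone forces $\m\Lambda_1^{(1,j)}=0$ for $j\ge 2$, so you do not even need the rank-one structure from~\eqref{equ:dual_var2} there.) With this patch, the rest of your argument goes through.
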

\begin{proof}
In the following, we first show that the optimal solution of problem~(R) is unique. Then we construct an optimal solution of problem \eqref{SDR} based on the unique solution of problem~(R), derive a condition by plugging the constructed solution into the KKT conditions of problem \eqref{SDR}, and show the derived condition has probability zero of occurring. 

First, given that any optimal solution $\tilde{\m Q}$ to problem~(R) is positive definite, we show that its optimal solution $(\{\tilde{\m V}_k^\star\}, \tilde{\m Q}^\star)$ and its optimal Lagrange multipliers associated with the SINR constraints, denoted by $\tilde{\v \beta}^\star = [\tilde{\beta}_1^\star,\tilde{\beta}_2^\star,\d,\tilde{\beta}_K^\star]^\transpose$, are unique. 
Since any optimal solution $\tilde{\m Q}$ to problem~(R) is positive definite, the KKT conditions of problem~(R) are equivalent to the enhanced KKT conditions of problem~(R) (which are similar to Eqs. \eqsta). 
Applying the same derivation in Sections \ref{ss:dual} and \ref{ss:primal} to problem~(R), we can simplify the enhanced KKT conditions into finding counterparts of Eqs.~\eqref{equ:dual_fix_point}, \eqref{equ:pri_beamforming_direction}, and \eqref{equ:primal_fix_point} over $(\tilde{\v \beta}^\star, \tilde{\v v}_k^\star, \tilde{\v p}^\star)$. 
Furthermore, one can recover the optimal $\{\tilde{\m V}_k^\star\}$ and the optimal $\tilde{\m Q}^\star$ by the procedure described in Section~\ref{ss:primal}. 
The desired uniqueness comes from the uniqueness of fixed points of the SI mappings $I(\cdot)$ and $J(\cdot)$.

Let $(\{\tilde{\m V}_k^\star\}, \tilde{\m Q}^\star)$ be the optimal solution to problem~(R) with $\tilde{\v \beta}^\star$ being its optimal Lagrange multiplier. 
Now we construct an optimal solution of problem~\eqref{SDR} based on the solution of problem~(R).
Since the optimal value of problem \eqref{SDR} is equal to that of problem~(R), $(\{\m V_k^\star := \diag(0, \tilde{\m V}_k^{\star})\}, \m Q^\star := \diag(0, \tilde{\m Q}^\star))$ 
is an optimal solution to problem \eqref{SDR}. 
Hence, there exist $(\v \beta^\star, \{\m \Lambda_m^\star\})$ such that the KKT conditions of problem \eqref{SDR} hold. 
Plugging the constructed $(\{\v V_k^\star\}, \m Q^\star)$ into the KKT conditions of problem~\eqref{SDR} and comparing the KKT conditions of problems~(R) and \eqref{SDR}, we find that $(\{\tilde{\m V}_k^\star\}, \tilde{\m Q}^\star)$ satisfies the KKT conditions of problem~(R) with $\v \beta^\star$ being its optimal Lagrange multiplier. 
By the uniqueness of the optimal Lagrange multiplier of problem~(R), we get $\v \beta^\star = \tilde{\v \beta}^\star$. 

Next, let us focus on the complementary slackness condition, i.e., Eq. \eqref{equ:slack_1}, in the KKT conditions of problem~\eqref{SDR}. 
Plugging $\v \beta^\star = \tilde{\v \beta}^\star$ into Eq. \eqref{equ:slack_1} gives that
\begin{equation}
    \label{equ:start}
    \begin{aligned}
        \tr \Big( \m V_k \big( \m C_k(\tilde{\v \beta}^\star, \{\m \Lambda_m^\star\})  - \tilde{\beta}_k^\star/\bar \gamma_k \v h_k \v h_k^\hermitian\big) \Big) =0,\fk. 
    \end{aligned}
\end{equation}
From the derivation in Section \ref{ss:primal}, we have $\tilde{\m V}_k^\star = \tilde{p}_k^\star \tilde{\v v}_k^\star \left(\tilde{\v v}_k^\star \right)^\hermitian$ for all $k \in \cK$, where $\tilde{p}_k^\star$ is the $k$-th element of $\tilde{\v p}^\star$. 
Using this, we can further rewrite Eq.~\eqref{equ:start} as 
\begin{equation}\label{equ:1}
    \left(\m C_k(\tilde{\v \beta}^\star, \{\m \Lambda_m^\star\}) - \tilde{\beta}_k^\star /\bar \gamma_k \v h_k \v h_k^\hermitian\right) 
    \begin{bmatrix}
        0 \\
        \tilde{\v v}_k^\star
    \end{bmatrix} = \v 0, \fk.
\end{equation}
Focusing on the first equation in Eq.~\eqref{equ:1}, we have
$$
\left(\m C_k(\tilde{\v \beta}^\star, \{\m \Lambda_m^\star\})\right)^{(0, 1:M)}\tilde{\v v}_k^\star = \tilde{\beta}_k^\star/{\bar \gamma_k} \tilde{\v h}_k^\hermitian \tilde{\v v}_k^\star h_{k,1}^\hermitian,
$$
which can be further rewritten as 
\begin{equation}\label{equ:linear_equ}
    -  \tilde{\beta}_k^\star/{\bar \gamma_k} \tilde{\v h}_k^\hermitian \tilde{\v v}_k^\star h_{k,1}^\hermitian + \sum_{j \neq k} \beta_j^\star \tilde{\v h}_j^\hermitian \tilde{\v v}_k^\star h_{j,1}^\hermitian = 0, \fk. 
\end{equation}
Notice that \eqref{equ:linear_equ} is a linear equation of $\{h_{k,1} \mid k \in \cK\}$ and the coefficient $\tilde{\beta}_k^\star/{\bar \gamma_k} \tilde{\v h}_k^\hermitian \tilde{\v v}_k^\star$ is nonzero for all $k\in\cK$. 
Furthermore, the coefficients are solely functions of $\{h_{k,m} \mid k\in\cK, m \in\tilde{\cM}\}$ not including $\{h_{k,1} \mid k\in\cK\}$. 
Let $\mathcal B$ be the event that Eq. \eqref{equ:linear_equ} is satisfied. 
Since $\{h_{k,1} \mid k \in \cK\}$ are randomly generated and independent of $\{h_{k,m} \mid k\in\cK, m \in\tilde{\cM}\}$, the event $\mathcal B$ has probability zero of occurring. 

From the above discussion, we know that $\mathcal A \subseteq \mathcal B$, and the event $\mathcal B$ has probability zero of occurring.  
Hence, the event $\mathcal A$ has probability zero of occurring.
The desired conclusion holds. 

\end{proof}

\section{Details on Solving Eqs.~\eqref{equ:dual_const2} and \eqref{equ:dual_var2} for \texorpdfstring{$ \left\{ \m{\Lambda}_m \right\} $}{\{Lambda\_m\}}}
\label{apd:solve_Lambda}
In this part, we shall provide more details on solving Eqs.~\eqref{equ:dual_const2} and \eqref{equ:dual_var2} for $\left\{ \m \Lambda_m \right\}$. 
For any positive integer $n$ and any $ \eta > 1 $, define the mapping $ S_\eta: \mathcal{S}_{++}^n \rightarrow \mathcal{S}_{++}^{n-1} $ by 
\begin{equation}
	S_\eta(\m{\Gamma}) = \m{\Gamma}^{(2:n, 2:n)} - \frac{\m{\Gamma}^{(2:n,1)} \m{\Gamma}^{(1,2:n)}}{\frac{\eta}{\eta -1}\m{\Gamma}^{(1,1)}}. 
	\label{equ:def_S}
\end{equation}
Let $ \m{\Gamma}_1 = \m{\Gamma}(\v \beta) $, where $\m \Gamma(\cdot)$ is defined in \eqref{equ:def_Gamma}. 
Then by the described procedure in Section \ref{sss:solvingforLm},  $ \m{\Gamma}_{m+1} $ is given by 
\begin{equation}
	\begin{aligned} 
		\mathbf{\Gamma}_{m+1} 
		&= \left[ \begin{matrix} \mathbf{0} &\mathbf{0} \\ 
			\mathbf{0} & S_{2^{\bar C_m}}\left(\m{\Gamma}_m^{(m:M, m:M)}\right) 
		\end{matrix}\right]. 
	\end{aligned}
	\label{equ:solve_L1}
\end{equation}
Once $\left\{ \m \Gamma_m \right\}$ are computed, we can obtain $\Lm$ based on \eqref{equ:Gamma2Lambdam}, and in particular,
\begin{equation}
	\m{\Lambda}_m^{(m,m)} = \frac{1}{2^{\bar C_m}-1} \m{\Gamma}_m^{(m,m)}. 
	\label{equ:solve_L2}
\end{equation}
Based on the recursive formula \eqref{equ:solve_L1} and \eqref{equ:solve_L2}, we have that, for any $m \in \cM$, 
	\begin{equation}
		\m{\Lambda}_m^{(m,m)}\left(\v \beta\right)
		= \frac{1}{2^{\bar C_m} -1} \left( \left( T_m  \circ \m{\Gamma} \right) (\v \beta) \right)^{(1,1)},
		\label{equ:solve_L3}
\end{equation}
where we define
\begin{equation}
	T_m = 
	\begin{cases}
		S_{2^{\bar C_{m-1}}} \circ \cdots \circ S_{2^{\bar C_1}}, & \text{if~} m = 2, 3, \d, M, \\
		\text{identity mapping}, & \text{if~} m=1. 
	\end{cases}
	\label{equ:def_Tm}
\end{equation} 
As will be seen later, \eqref{equ:solve_L3} is useful for the analysis of the properties of $\left\{ \m \Lambda_{m}^{(m,m)}(\cdot) \right\}$ in Appendix~\ref{ss:useful_properties_Lambda}.

\section{Useful Properties of \texorpdfstring{$\left\{ \m \Lambda_m^{(m,m)}(\cdot) \right\}$}{Lambda\_m(m,m)()} and \texorpdfstring{$\m Q(\cdot)$}{Q()}}
\label{apd:useful_properties}

In this part,  we shall derive some useful properties of $\left\{ \m \Lambda_m^{(m,m)}(\cdot) \right\}$ and $\m Q(\cdot)$, which play central roles in showing that $I(\cdot)$ in \eqref{equ:dual_fix_point} and $J(\cdot)$ in \eqref{equ:primal_fix_point} are SI mappings \cite{yates1995FrameworkUplinkPower}, respectively.

Consider the cone $\mathcal C$, which can either be the nonnegative orthant $\mathbb{R}_{+}^n$ or the positive semidefinite matrix cone $\mathcal{S}_{+}^n$. 
The order relationship $\succeq_{\mathcal C}$ ($\succ_{\mathcal C}$) denotes $\geq$ ($>$) if $\mathcal C = \mathbb{R}_{+}^n$ and denotes $\succeq$ ($\succ$) if $\mathcal C = \mathcal{S}_{+}^n$. 
Given two cones $\mathcal{C}_1$ and $\mathcal{C}_2$, a mapping $f: \mathcal C_1 \rightarrow \mathcal C_2$ is called 
\begin{itemize}
	\item [(i)] concave if $f(\lambda \m A_1 + (1-\lambda) \m A_2) \succeq_{\mathcal C_2} \lambda f(\m A_1) + (1-\lambda) f(\m A_2)$ for all $\m A_1, \m A_2 \succ_{\mathcal C_1} \m 0$ and $\lambda \in [0, 1]$; 
	\item [(ii)] homogeneous if $f(\lambda \m A) = \lambda f(\m A)$ for all $\m A \succ_{\mathcal C_1} \m 0$ and $\lambda > 0$; 
	\item [(iii)] nonnegative if $f(\m A) \succeq_{\mathcal C_2} \m 0$ for all $\m A \succeq_{\mathcal C_1} \m 0$; 
	\item [(iv)] strictly nonnegative if $f(\m A) \succ_{\mathcal C_2} \m 0$ for all $\m A \succ_{\mathcal C_1} \m 0$; 
	\item [(v)] monotonic if $f(\m A_1) \succeq_{\mathcal C_2} f(\m A_2)$ for all $\m A_1, \m A_2 \succeq_{\mathcal C_1} \m 0$ with $\m A_1 \succeq_{\mathcal C_1} \m A_2$; and
	\item [(vi)] strictly monotonic if $f(\m A_1) \succ_{\mathcal C_2} f(\m A_2)$ for all $\m A_1, \m A_2 \succ_{\mathcal C_1} \m 0$ with $\m A_1 \succ_{\mathcal C_1} \m A_2$. 
\end{itemize}
Note that the above definition of the nonnegativity and the monotonicity coincides with the definition given in Appendix~\ref{apd:alg_prop} when $f$ maps $\mathbb{R}_+^K$ to $\mathbb{R}_+^K$.

\subsection{Useful Properties of \texorpdfstring{$\left\{ \m \Lambda_m^{(m,m)}(\cdot) \right\}$}{Lambda\_m(m,m)()}}
\label{ss:useful_properties_Lambda}
We study the properties of the obtained components $ \left\{ \m{\Lambda}_m^{(m,m)}(\cdot)\right\} $ in \eqref{equ:solve_L3}. 
As shown in \eqref{equ:solve_L3}, $\m \Lambda_m^{(m,m)}(\cdot)$ is a composition of $\left\{ S_{2^{\bar C_m}}(\cdot) \right\}$ and $\m \Gamma(\cdot)$ in \eqref{equ:def_Gamma}. 
Hence, to study the properties of $\m \Lambda_m^{(m,m)}(\cdot)$, we need to first study the properties of $ S_\eta(\cdot) $ in \eqref{equ:def_S}. 

For any $ \eta > 1$, $S_\eta(\cdot)$ in \eqref{equ:def_S} can be equivalently rewritten as 
\begin{equation*}\small
	\begin{aligned}
		S_\eta(\m{\Gamma}) &= \frac{\eta-1}{\eta} \left(\m{\Gamma}^{(2:n, 2:n)} - \frac{\m{\Gamma}^{(2:n,1)} \m{\Gamma}^{(1,2:n)}}{\m{\Gamma}^{(1,1)}} \right) + \frac{1}{\eta} \m{\Gamma}^{(2:n, 2:n)}, 
	\end{aligned}
\end{equation*}
which is a convex combination of a Schur complement and a linear part of $\m \Gamma$. 
The concavity and homogeneity of $ S_\eta(\cdot) $ come from the concavity and homogeneity of the Schur complement \cite[Exercise 3.58]{boyd2004ConvexOptimization}. 
Moreover, $ S_\eta(\cdot) $ is also strictly nonnegative. 
Finally, we have the following lemma on the monotonicity of $S_\eta(\cdot)$. 
\begin{lemma}
	\label{lem:mono}
	For a concave and homogeneous mapping $f: \mathcal C_1 \rightarrow \mathcal C_2$, the following holds. 
	\begin{itemize}
		\item [(i)] If $f$ is nonnegative, then it is monotonic. In particular, this conclusion holds when $f$ is linear and nonnegative. 
		\item [(ii)] If $f$ is strictly nonnegative, then it is strictly monotonic. 
	\end{itemize}
\end{lemma}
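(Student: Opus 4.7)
The plan is to reduce both claims to a single intermediate fact, namely that any concave and homogeneous $f$ is superadditive on the cone in the sense
\[
f(\m A + \m B) \succeq_{\mathcal{C}_2} f(\m A) + f(\m B)
\]
for all $\m A, \m B \succeq_{\mathcal{C}_1} \m 0$. The proof of this fact is the usual midpoint trick: by homogeneity $f(\m A + \m B) = 2 f\!\left(\tfrac{1}{2}\m A + \tfrac{1}{2}\m B\right)$, and by concavity $f\!\left(\tfrac{1}{2}\m A + \tfrac{1}{2}\m B\right) \succeq_{\mathcal{C}_2} \tfrac{1}{2} f(\m A) + \tfrac{1}{2} f(\m B)$; combining the two yields superadditivity.

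Once superadditivity is in hand, part (i) follows immediately. Given $\m A_1 \succeq_{\mathcal{C}_1} \m A_2 \succeq_{\mathcal{C}_1} \m 0$, set $\m D = \m A_1 - \m A_2 \succeq_{\mathcal{C}_1} \m 0$, apply superadditivity to $\m A_2 + \m D$ to get $f(\m A_1) \succeq_{\mathcal{C}_2} f(\m A_2) + f(\m D)$, and then invoke the nonnegativity hypothesis $f(\m D) \succeq_{\mathcal{C}_2} \m 0$. For the linear case, linearity automatically provides both concavity (with equality) and homogeneity, so the same argument applies without further assumption. For part (ii), the identical chain works with $\m A_1 \succ_{\mathcal{C}_1} \m A_2 \succ_{\mathcal{C}_1} \m 0$: then $\m D \succ_{\mathcal{C}_1} \m 0$, strict nonnegativity gives $f(\m D) \succ_{\mathcal{C}_2} \m 0$, and the same chain upgrades to $f(\m A_1) \succ_{\mathcal{C}_2} f(\m A_2)$.

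The main obstacle is not the algebra but the domain issue: concavity and homogeneity are stated in the paper only for arguments that are \emph{strictly} inside the cone, while in (i) the monotonicity conclusion is claimed for arguments merely in $\mathcal{C}_1$ (which may include boundary points). I would handle this either by noting that the mappings $S_\eta$ and $\m \Gamma$ to which this lemma is ultimately applied are continuous and extend unambiguously to the whole cone, so superadditivity passes to the boundary by continuity; or, more self-containedly, by a perturbation argument, replacing $\m A_1, \m A_2$ with $\m A_1 + \varepsilon \m I, \m A_2 + \varepsilon \m I$ (or $\m A_1 + \varepsilon \v 1, \m A_2 + \varepsilon \v 1$ in the orthant case), applying superadditivity in the strict interior, and then letting $\varepsilon \downarrow 0$ using the fact that $\succeq_{\mathcal{C}_2}$ is closed. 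For the strict version (ii), no such boundary issue arises since both arguments are strictly inside the cone by hypothesis.
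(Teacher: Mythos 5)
Your proof is essentially the same as the paper's, which establishes the chain $f(\m A_1) \succeq_{\mathcal C_2} f(\m A_2) + f(\m A_1 - \m A_2) \succeq_{\mathcal C_2} f(\m A_2)$ and attributes the first inequality to concavity and homogeneity --- you merely make that first step explicit as superadditivity via the midpoint trick, which is a transparent unpacking rather than a different route. Your observation about the domain mismatch (the paper defines concavity and homogeneity only on the strict interior $\succ_{\mathcal C_1}$, yet states and proves monotonicity for all of $\succeq_{\mathcal C_1}$) is a real subtlety that the paper's proof silently passes over, and your proposed perturbation or continuity argument is an appropriate fix.
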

\begin{proof}
	For any $\m A_1, \m A_2 \succeq_{\mathcal C_1} \m 0$ such that $\m A_1 \succeq_{\mathcal C_1} \m A_2$, we have
	\begin{equation*}
	f(\m A_1) \succeq_{\mathcal C_2} f(\m A_2) + f(\m A_1 - \m A_2) \succeq_{\mathcal C_2} f(\m A_2),  
	\end{equation*}
	where the first inequality follows from the concavity and homogeneity of $f$, and the second inequality follows from the nonnegativity of $f$. 
	When $f$ is strictly nonnegative, the proof can be done analogously by replacing $\succeq$ in the conditions and the second inequality with $\succ$. 
\end{proof}

Next, we shall study the properties of the composite mapping $ T_m $ in \eqref{equ:def_Tm}. 
\begin{lemma}
	\label{lem:comp}
	If $f$ and $g$ are two concave, homogeneous, and strictly nonnegative mappings, so does their composition $f \circ g$. 
\end{lemma}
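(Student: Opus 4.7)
The plan is to verify the three properties of $f \circ g$ (concavity, homogeneity, strict nonnegativity) one by one, using Lemma~\ref{lem:mono} as the crucial bridge for the concavity part.

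First I would dispatch homogeneity and strict nonnegativity, as they follow immediately by chaining the corresponding properties of $f$ and $g$. Specifically, for any $\lambda > 0$ and $\m A \succ_{\mathcal C_1} \m 0$, the homogeneity of $g$ and then of $f$ gives $(f \circ g)(\lambda \m A) = f(\lambda g(\m A)) = \lambda f(g(\m A))$. Similarly, if $\m A \succ_{\mathcal C_1} \m 0$, then strict nonnegativity of $g$ yields $g(\m A) \succ_{\mathcal C_2} \m 0$, and applying strict nonnegativity of $f$ gives $f(g(\m A)) \succ_{\mathcal C_3} \m 0$.

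The main step, and the only one that requires genuine reasoning, is concavity. For $\m A_1, \m A_2 \succ_{\mathcal C_1} \m 0$ and $\lambda \in [0,1]$, concavity of $g$ gives
\begin{equation*}
g(\lambda \m A_1 + (1-\lambda) \m A_2) \succeq_{\mathcal C_2} \lambda g(\m A_1) + (1-\lambda) g(\m A_2).
\end{equation*}
To pass this inequality through $f$, we need $f$ to preserve the cone ordering $\succeq_{\mathcal C_2}$. This is exactly where Lemma~\ref{lem:mono}(ii) comes in: since $f$ is concave, homogeneous, and strictly nonnegative, it is strictly monotonic and hence monotonic. Applying the monotonicity of $f$ to the above inequality and then invoking the concavity of $f$, we obtain
\begin{equation*}
(f \circ g)(\lambda \m A_1 + (1-\lambda) \m A_2) \succeq_{\mathcal C_3} f(\lambda g(\m A_1) + (1-\lambda) g(\m A_2)) \succeq_{\mathcal C_3} \lambda (f \circ g)(\m A_1) + (1-\lambda)(f \circ g)(\m A_2),
\end{equation*}
which is the desired concavity.

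The main (and only) obstacle is the concavity step, because it requires converting an inner-cone inequality into an outer-cone inequality through $f$; this conversion is not free and is precisely what Lemma~\ref{lem:mono} was designed to supply. Once monotonicity of $f$ is in hand, the argument is a clean two-line chain. All other properties are immediate from composition.
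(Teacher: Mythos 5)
Your proof is correct and follows essentially the same route as the paper's: both dispatch homogeneity and strict nonnegativity by direct chaining, and both handle concavity by first extracting monotonicity of $f$ from Lemma~\ref{lem:mono} and then composing, the only difference being that you write out the two-inequality chain explicitly whereas the paper cites the standard composition rule from Boyd and Vandenberghe.
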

\begin{proof}
	From Lemma \ref{lem:mono}, we know that $f$ is monotonically increasing. 
	Since $g$ is concave, and $f$ is concave and nondecreasing, their composition $f \circ g$ is concave \cite[Chapter 3]{boyd2004ConvexOptimization}. 
	Besides, the homogeneity and strict nonnegativity of the composition $f \circ g$ is obvious. 
\end{proof}

The concavity, the homogeneity, and the strict nonnegativity of $ T_m $ in \eqref{equ:def_Tm} come from the following induction argument. 
First, $ T_2 = S_{2^{\bar C_1}} $ is concave, homogeneous, and strictly nonnegative. 
Lemma \ref{lem:comp} shows that if $ T_{m-1} $ and $ S_{2^{\bar C_{m - 1}}} $ have these properties, then $ T_{m} = S_{2^{\bar C_{m - 1}}} \circ T_{m-1} $ also has these properties. 
Hence, for any $ m \in \cM $, $ T_m $ is concave, homogeneous, and strictly nonnegative. 
Furthermore, the strict monotonicity of $T_m(\cdot)$ follows from Lemma \ref{lem:mono}. 

Now, we present the nice properties of $ \left\{ \m{\Lambda}_m^{(m,m)}(\cdot) \right\} $ in the following Lemma \ref{lem:lmmm}. 

\begin{lemma}
	\label{lem:lmmm}
	For any $ m \in \cM $, $\m{\Lambda}_m^{(m,m)}(\cdot)$ in \eqref{equ:solve_L3} is positive, strictly subhomogeneous, and monotonic. 
\end{lemma}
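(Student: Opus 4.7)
\textbf{Proof plan for Lemma~\ref{lem:lmmm}.} The strategy is to decompose $\m{\Lambda}_m^{(m,m)}(\cdot)$ through the representation \eqref{equ:solve_L3} as the composition of three pieces: the affine map $\m{\Gamma}(\cdot)$ in \eqref{equ:def_Gamma}, the composite mapping $T_m(\cdot)$ in \eqref{equ:def_Tm}, and the linear functional $\m A \mapsto (1/(2^{\bar C_m}-1))\,\m A^{(1,1)}$. All three claimed properties can then be obtained by piecing together (i)~elementary facts about $\m{\Gamma}(\cdot)$, and (ii)~the properties of $T_m(\cdot)$ already established right above the statement, namely that $T_m$ is concave, homogeneous, strictly nonnegative, strictly monotonic, and (by Lemma~\ref{lem:mono}) monotonic.

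For \emph{positivity}, I would observe that for any $\v\beta\geq\v 0$, $\m{\Gamma}(\v\beta)=\m I+\sum_{k\in\cK}\beta_k\v h_k\v h_k^\hermitian\succeq\m I\succ\m 0$, so by strict nonnegativity of $T_m$ we get $T_m(\m{\Gamma}(\v\beta))\succ\m 0$; in particular its $(1,1)$-entry is positive. For \emph{monotonicity}, if $\v\beta_1\geq\v\beta_2\geq\v 0$, then $\m{\Gamma}(\v\beta_1)-\m{\Gamma}(\v\beta_2)=\sum_{k\in\cK}(\beta_{1,k}-\beta_{2,k})\v h_k\v h_k^\hermitian\succeq\m 0$, so $\m{\Gamma}(\v\beta_1)\succeq\m{\Gamma}(\v\beta_2)$, and monotonicity of $T_m$ on $\mathcal{S}_+$ preserves this order; taking $(1,1)$-entries yields $\m{\Lambda}_m^{(m,m)}(\v\beta_1)\geq\m{\Lambda}_m^{(m,m)}(\v\beta_2)$.

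The most delicate part is \emph{strict subhomogeneity}, and it is the piece I would think of as the main obstacle because $\m{\Gamma}(\cdot)$ is affine rather than linear, and it is precisely the identity-matrix offset that produces the strict inequality. For $\alpha>1$ and $\v\beta\in\mathbb{R}_+^K\setminus\{\v 0\}$, note that
\begin{equation*}
	\alpha\m{\Gamma}(\v\beta)-\m{\Gamma}(\alpha\v\beta)=(\alpha-1)\m I\succ\m 0,
\end{equation*}
so $\m{\Gamma}(\alpha\v\beta)\prec\alpha\m{\Gamma}(\v\beta)$ in the positive definite order. Then the strict monotonicity of $T_m$ gives $T_m(\m{\Gamma}(\alpha\v\beta))\prec T_m(\alpha\m{\Gamma}(\v\beta))$, and its homogeneity lets me rewrite the right-hand side as $\alpha T_m(\m{\Gamma}(\v\beta))$. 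Taking the $(1,1)$-entry of this strict matrix inequality (which is valid because $\v e_1^\hermitian\m X\v e_1>0$ whenever $\m X\succ\m 0$) yields $\m{\Lambda}_m^{(m,m)}(\alpha\v\beta)<\alpha\,\m{\Lambda}_m^{(m,m)}(\v\beta)$, as required. The subtlety is only in tracking which of the three properties of $T_m$ is needed at each step and ensuring that strictness is not lost when passing from matrices to scalar entries; modulo that bookkeeping, the argument is a short chain of one-line implications.
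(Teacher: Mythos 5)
Your proposal is correct and follows essentially the same route as the paper's own proof: all three properties are deduced by factoring $\m{\Lambda}_m^{(m,m)}(\cdot)$ through $\m{\Gamma}(\cdot)$, $T_m(\cdot)$, and the $(1,1)$-entry scaling, using exactly the strict nonnegativity, monotonicity, strict monotonicity, and homogeneity of $T_m$ established just before the lemma. In particular, the paper also hinges the strict subhomogeneity on the observation $\m{\Gamma}(\alpha\v\beta)\prec\alpha\m{\Gamma}(\v\beta)$ coming from the identity offset, and your explicit remark that the $(1,1)$-entry functional preserves the strict Loewner order is a sound (if slightly more spelled-out) version of the step the paper carries out implicitly.
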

\begin{proof}
We prove these properties of $\m \Lambda_m^{(m,m)}(\cdot)$ one by one. 

\textbf{Positivity}: For any $ \v \beta \in \mathbb{R}_+^K $, we have $ \m{\Gamma}(\v \beta)\succeq \m{I} \succ \m 0  $ by \eqref{equ:def_Gamma}. 
	Therefore, the strict nonnegativity of $ T_m(\cdot) $ and \eqref{equ:solve_L3} yields 
	\begin{equation*}
		\begin{aligned}
			\m{\Lambda}_m^{(m,m)}\left(\v \beta\right)
			&= \frac{1}{2^{\bar C_m}-1} \Big(T_m \big(\m{\Gamma}(\v \beta)\big)\Big)^{(1,1)} > 0. 
		\end{aligned}
	\end{equation*}

\textbf{Strict subhomogeneity}: For any $ \alpha > 1 $ and $\v \beta \in \mathbb{R}_{+} \backslash \left\{ \v 0 \right\}$, it is obvious from \eqref{equ:def_Gamma} that $ \m{\Gamma}\left(\alpha \v \beta\right)\prec \alpha\m{\Gamma}(\v \beta) $. The strict monotonicity and homogeneity of $ T_m(\cdot) $ together with \eqref{equ:solve_L3} give 
	\begin{equation*}
		\begin{aligned}
			\m{\Lambda}_m^{(m,m)}\left(\alpha\v \beta\right)
			&= \frac{1}{2^{\bar C_m}-1} \Big(T_m \big(\m{\Gamma}(\alpha\v \beta)\big)\Big)^{(1,1)}\\
			&< \frac{1}{2^{\bar C_m}-1} \Big(T_m \big(\alpha\m{\Gamma}(\v \beta)\big)\Big)^{(1,1)}\\
			&= \frac{\alpha}{2^{\bar C_m}-1} \Big(T_m \big(\m{\Gamma}(\v \beta)\big)\Big)^{(1,1)}\\
			&= \alpha \m{\Lambda}_m^{(m,m)}\left(\v \beta\right). 
		\end{aligned}			
	\end{equation*}

\textbf{Monotonicity}: For any $ \v \beta_1, \v \beta_2 \in \mathbb{R}_+^K $ with $\v \beta_1 \geq \v \beta_2$, we have $ \m{\Gamma}\left(\v \beta_1\right)\succeq \m{\Gamma}\left(\v \beta_2\right) $ by \eqref{equ:def_Gamma}. 
	By the monotonicity of $ T_m(\cdot) $ and \eqref{equ:solve_L3}, we have 
	\begin{equation*}
		\begin{aligned}
			\m{\Lambda}_m^{(m,m)}\left(\v \beta_2\right)
			&= \frac{1}{2^{\bar C_m}-1} \Big(T_m \big(\m{\Gamma}(\v \beta_2)\big)\Big)^{(1,1)}\\
			&\leq \frac{1}{2^{\bar C_m}-1} \Big(T_m \big(\m{\Gamma}(\v \beta_1)\big)\Big)^{(1,1)}\\
			&= \m{\Lambda}_m^{(m,m)}\left(\v \beta_1\right). 
		\end{aligned}
	\end{equation*}
\end{proof}

\subsection{Useful Properties of \texorpdfstring{$\m Q(\cdot)$}{Q()}}
\label{ss:useful_properties_Q}
We have the following lemma about the useful properties of $ \m Q(\cdot) $ given by \eqref{equ:QMM}--\eqref{equ:solve_Q3}. 
\begin{lemma}
	\label{lem:Q}
	The mapping $\m Q(\cdot)$ given by \eqref{equ:QMM}--\eqref{equ:solve_Q3} is nonnegative, linear, and monotonic. 
\end{lemma}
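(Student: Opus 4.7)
The plan is to prove the three claimed properties of $\m Q(\cdot)$ in the order linearity, nonnegativity (positive semidefiniteness), and monotonicity, exploiting the fact that the last follows immediately from the first two.

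First I would establish linearity by backward induction on the block index $m$, descending from $m=M$ to $m=1$. The base case \eqref{equ:QMM} writes $\m Q^{(M,M)}(\v p)$ as a fixed nonnegative linear combination of $p_1,\dots,p_K$. For the inductive step, observe that the coefficients $\v{\lambda}_m$ and $\{\hat{\m V}_k\}$ appearing in \eqref{equ:solve_Q1}--\eqref{equ:solve_Q3} do not depend on $\v p$. Assuming inductively that $\m Q^{(m+1:M,m+1:M)}(\v p)$ is linear in $\v p$, formula \eqref{equ:solve_Q1} presents $\m Q^{(m+1:M,m)}(\v p)$ as a fixed linear operator applied to $\m Q^{(m+1:M,m+1:M)}(\v p)$, and formula \eqref{equ:solve_Q3} presents $\m Q^{(m,m)}(\v p)$ as a scalar linear functional of $\m Q^{(m+1:M,m+1:M)}(\v p)$ plus a linear combination of $\{p_k\}$. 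Hence every entry of $\m Q(\v p)$ is linear in $\v p$.

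Next I would prove $\m Q(\v p)\succeq \m 0$ for every $\v p\geq \v 0$, again by backward induction. The base case $\m Q^{(M,M)}\geq 0$ is immediate from \eqref{equ:QMM}. For the inductive step, I would compute the generalized Schur complement $q_m$ (in the sense of \eqref{equ:def_Cm}) of $\m Q^{(m+1:M,m+1:M)}$ inside $\m Q^{(m:M,m:M)}$: substituting \eqref{equ:solve_Q1} to evaluate the cross term and then inserting \eqref{equ:solve_Q3} for $\m Q^{(m,m)}$ produces, after the coefficient $2^{\bar C_m}/(2^{\bar C_m}-1)$ is reduced to $1/(2^{\bar C_m}-1)$ by the cross-term cancellation, the clean identity
\[
q_m = \frac{1}{2^{\bar C_m}-1}\!\left(\frac{(\v{\lambda}_m^{(m+1:M)})^{\hermitian}\m Q^{(m+1:M,m+1:M)}\v{\lambda}_m^{(m+1:M)}}{|\v{\lambda}_m^{(m)}|^2} + \sum_{k\in\cK} p_k\,\hat{\m V}_k^{(m,m)}\right).
\]
Under the inductive hypothesis the quadratic form is nonnegative, and since each $\hat{\m V}_k\succeq \m 0$ with $p_k\geq 0$, the second term is also nonnegative, so $q_m\geq 0$. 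Since $\m Q^{(m+1:M,m)}$ lies in the range of $\m Q^{(m+1:M,m+1:M)}$ by construction \eqref{equ:solve_Q1}, the generalized Schur complement criterion (e.g., \cite[Theorem~1.20]{zhang2005SchurComplementIts}) combines this with the inductive hypothesis to give $\m Q^{(m:M,m:M)}\succeq \m 0$. Iterating down to $m=1$ gives $\m Q(\v p)\succeq \m 0$.

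Finally, monotonicity is a direct consequence: for $\v p_1\geq \v p_2\geq \v 0$, linearity yields $\m Q(\v p_1)-\m Q(\v p_2)=\m Q(\v p_1-\v p_2)$, and nonnegativity applied to $\v p_1-\v p_2\geq \v 0$ then gives $\m Q(\v p_1)\succeq \m Q(\v p_2)$. The main obstacle is the algebraic cancellation in the Schur complement computation; however, since \eqref{equ:solve_Q1}--\eqref{equ:solve_Q3} were derived from the complementary slackness condition \eqref{equ:equ_Bm} by equating precisely the block structure that forces $q_m$ into the displayed form, the required cancellation is essentially built into the definition of $\m Q(\cdot)$ and should go through as a straightforward block-matrix calculation.
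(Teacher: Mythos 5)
Your proposal is correct and follows essentially the same route as the paper: a backward induction over the block index $m$ establishing linearity of each block and positive semidefiniteness via the generalized Schur complement criterion (both the range condition and the displayed identity for $q_m$ match the paper's \eqref{equ:cond_1}--\eqref{equ:cond_2}), with monotonicity derived afterward from linearity and nonnegativity. The only cosmetic difference is that you obtain monotonicity directly from $\m Q(\v p_1)-\m Q(\v p_2)=\m Q(\v p_1-\v p_2)\succeq\m 0$, while the paper invokes its more general Lemma~\ref{lem:mono} (which it needs elsewhere anyway); the two are equivalent here.
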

\begin{proof}
	It suffices to show the linearity and nonnegativity of $\m Q(\cdot)$ since the monotonicity of $\m Q(\cdot)$ automatically follows from these two properties using Lemma \ref{lem:mono}. In the following, we show these two properties of $\m Q(\cdot)$ by induction.

	It is obvious that $ \m Q^{(M,M)} $ given in \eqref{equ:QMM} is nonnegative and linear. 
	Suppose $ \m Q^{(m+1:M,m+1:M)} $ is nonnegative and linear. 
	First, by \eqref{equ:solve_Q1} and \eqref{equ:solve_Q3}, $\m Q^{(m+1:M, m)}$ and $\m Q^{(m,m)}$ is linear. 
	Combining this with the assumption shows the linearity of $\m Q^{(m:M, m:M)}$. 
	Next, by the elementary properties of the Schur complements \cite[Theorem 1.20]{zhang2005SchurComplementIts}, to show the nonnegativity of $\m Q^{(m:M, m:M)}$, it suffices to check the following two conditions. 
	\begin{itemize}
		\item [(i)] The Schur complement of $\m Q^{(m+1:M,m+1:M)}$ in $ \m Q^{(m:M,m:M)}$ is nonnegative; and 
		\item [(ii)] for any $\v x \in \mathbb{C}^{M-m-2}$ with $ \m Q^{(m+1:M, m+1:M)} \v x = \v 0$, we have $ \m Q^{(m,m+1:M)} \v x = \v 0$. 
	\end{itemize}
	Plugging \eqref{equ:solve_Q1} and \eqref{equ:solve_Q3} into expressions in the above (i) and (ii), we have 
	\begin{equation}
		\label{equ:cond_1}
		\begin{aligned}
			&\m Q^{(m,m)} - \m Q^{(m,m+1:M)} \left(\m Q^{(m+1:M,m+1:M)}\right)^{-1} \m Q^{(m+1:M,m)}\\ 
			&= \frac{1}{2^{\bar C_m} - 1} \Biggl( \frac{\v \lambda_m^{(m+1:M)\hermitian} \m Q^{(m+1:M, m+1:M)} \v \lambda_m^{(m+1:M)}}{|\v \lambda_m^{(m)}|^2} \\
			&\pushright{+ \sum_{k \in \cK} p_k \hat{\m V}_k^{(m,m)} \Biggr) \geq 0}; 
		\end{aligned}
	\end{equation}
	and for any $\v x \in \mathbb{C}^{M-m-2}$ with $ \m Q^{(m+1:M, m+1:M)} \v x = \v 0$, we have 
	\begin{equation}
		\label{equ:cond_2}
		\begin{aligned}
			\m Q^{(m,m+1:M)} \v x = \frac{\v \lambda_{m}^{(m+1:M)\hermitian} \m Q^{(m+1:M, m+1:M)} \v x}{\v \lambda_m^{(m)\hermitian}} = \v 0. 
		\end{aligned}
	\end{equation}
	Hence, we get the nonnegativity of $\m Q^{(m:M,m:M)}$. 
	It follows by induction that $\m Q^{(m:M,m:M)}$ is nonnegative and linear for any $m \in \cM$. 
	Taking $m=1$, we obtain the desired result. 
\end{proof}



\ifCLASSOPTIONcaptionsoff
  \newpage
\fi

\section*{Reference}\footnotesize
\begin{itemize}
    \item[{[2]}] S. P. Boyd and L. Vandenberghe, \emph{Convex Optimization}.
Cambridge University Press, 2004.
    \item[{[48]}] L. Liu, Y.-F. Liu, P. Patil, and W. Yu, ``Uplink-downlink duality between
multiple-access and broadcast channels with compressing relays,'' \emph{IEEE
Trans. Inf. Theory}, pp. 7304--7337, Nov. 2021.
    \item[{[60]}] R. Yates, ``A framework for uplink power control in cellular radio
systems,'' \emph{IEEE J. Sel. Areas Commun.}, vol. 13, no. 7, pp. 1341--1347,
Sep. 1995.
    \item[{[61]}] F. Zhang, Ed., \emph{The Schur Complement and Its Applications}. New York:
Springer, 2005.
\end{itemize}
\end{document}